\newcommand{\N}{\mathbb{N}} 
\newcommand{\R}{\mathbb{R}} 
\newcommand{\BT}{\mathcal{BT}^{\ast}_n}
\newcommand{\T}{\mathcal{T}^{\ast}_n}
\newcommand{\id}{\textup{id}}
\theoremstyle{definition}
\newtheorem{Def}{Definition}[section]
\newtheorem{Theo}[Def]{Theorem}
\newtheorem{Prop}[Def]{Proposition}
\newtheorem{Lem}[Def]{Lemma}
\newtheorem{Cor}[Def]{Corollary}
\newtheorem{Rem}[Def]{Remark}
\newtheorem{Obs}[Def]{Observation}
\newtheorem{Ex}[Def]{Example}
\newcommand\blfootnote[1]{%
  \begingroup
  \renewcommand\thefootnote{}\footnote{#1}%
  \addtocounter{footnote}{-1}%
  \endgroup
}
\begin{document}

\title{A height-based metaconcept for rooted tree balance and its implications for the $B_1$ index} 

\author[1,$\ast$]{Mareike Fischer}
\author[1]{Tom Niklas Hamann}
\author[2,3]{Kristina Wicke}

\affil[1]{Institute of Mathematics and Computer Science, University of Greifswald, Greifswald, Germany}
\affil[2]{Department of Mathematical Sciences, New Jersey Institute of Technology, Newark, NJ, USA}
\affil[3]{National Institute for Theory and Mathematics in Biology, Northwestern University and The University of Chicago, Chicago, IL, USA}

\date{}
\maketitle

\begin{abstract}
Tree balance has received considerable attention in recent years, both in phylogenetics and in other areas. Numerous (im)balance indices have been proposed to quantify the (im)balance of rooted trees. A recent comprehensive survey summarized this literature and showed that many existing indices are based on similar underlying principles. To unify these approaches, three general metaconcepts were introduced, providing a framework to classify, analyze, and extend imbalance indices. In this context, a metaconcept is a function $\Phi_f$ that depends on another function $f$ capturing some aspect of tree shape. In this manuscript, we extend this line of research by introducing a new metaconcept based on the heights of the pending subtrees of all inner vertices. We provide a thorough analysis of this metaconcept and use it to answer open questions concerning the well-known $B_1$ balance index. In particular, we characterize the tree shapes that maximize the $B_1$ index in two cases: (i) arbitrary rooted trees and (ii) binary rooted trees. For both cases, we also determine the corresponding maximum values of the index.

Finally, while the $B_1$ index is induced by a so-called third-order metaconcept, we explicitly introduce three new (im)balance indices derived from the first- and second-order height metaconcepts, respectively, thereby demonstrating that pending subtree heights give rise to a variety of novel (im)balance indices.
\end{abstract}

\textit{Keywords:} tree balance,  metaconcept, rooted phylogenetic tree, $B_1$ index\\

\blfootnote{$^\ast$Corresponding author\\ \textit{Email address:} \url{mareike.fischer@uni-greifswald.de, email@mareikefischer.de}}

\section{Introduction}
Tree balance has gained considerable interest in recent years, as it plays a central role in several research areas, including phylogenetics and computer science. In phylogenetics, for example, tree balance is used to investigate differences in diversification rates among groups of organisms (see, e.g., \cite{Kubo1995,Stich2009,Mooers1997}), whereas in computer science it is, for instance, relevant in the context of search trees (see, e.g., \cite{Nievergelt1973,Andersson1993,Roura2013}). Numerous so-called (im)balance indices have been proposed to quantify the (im)balance of rooted trees. Recently, \citet{Fischer2023} provided a comprehensive overview of the current state of research on rooted (im)balance indices. Their survey showed that many indices rely on similar concepts, such as the clade sizes of inner vertices or the depths of leaves. Building on this observation, \citet{Fischer2025} introduced three \emph{metaconcepts} based on values derived from subsets of vertices, namely clade sizes, leaf depths, and balance values, to classify and analyze existing (im)balance indices within a unified framework.

In the balance context, a metaconcept is a function $\Phi_f$ depending on another function $f$, where $f$ measures some aspect of tree shape, and  $\Phi_f$ in turn measures tree (im)balance. In this manuscript, we introduce a new metaconcept based on the heights of the pending subtrees of all inner vertices of a rooted tree. We show that this metaconcept defines an imbalance index for rooted binary trees if the underlying function $f$ is strictly increasing, and that it remains an imbalance index for arbitrary trees if $f$ is additionally $1$-positive, i.e., $f(x) > 0$ for all $x \geq 1$.

We further analyze the extremal trees and corresponding extremal values of this new metaconcept and use these results to answer six open questions posed in \citet{Fischer2023} concerning the so-called $B_1$ index. Introduced by \citet{Shao1990}, the $B_1$ index measures tree balance via the heights of all pending subtrees induced by the inner vertices (except for the root). We determine the maximizing trees of the $B_1$ index in two settings: (i) among arbitrary rooted trees and (ii) among rooted binary trees for all leaf numbers. As we shall see, these sets of trees coincide, meaning that the maximizing trees of the $B_1$ index across arbitrary rooted trees are all binary. We also compute the maximum values of the $B_1$ index in both cases, thereby resolving two additional open questions of \citet{Fischer2023}. Moreover, we answer two open problems regarding the number of trees maximizing the $B_1$ index among all rooted (binary) trees.
These results rely on the relationship between the newly introduced height metaconcept and the $B_1$ index, which turns out to be induced by a so-called third-order height metaconcept. To complement known indices, we additionally introduce one imbalance index induced by a second-order and one balance and one imbalance index, respectively,  induced by a first-order height metaconcept.

The manuscript is organized as follows: In Section \ref{Sec:Prelim}, we present the definitions, notation, and relevant background results needed for our proofs. Section \ref{Sec:Results}, containing our main contributions, is divided into four subsections: In Section \ref{Subsec:Sequences}, we compare four sequences associated with the vertices of a tree; in Section \ref{Subsec:HM}, we analyze the height metaconcept in detail; in Section \ref{Subsec:B1}, we study the $B_1$ index and answer the open problems mentioned above; and in Section \ref{Subsec:new_imb_ind}, we introduce three new (im)balance indices. Finally, in Section \ref{Sec:Discussion}, we conclude with a discussion and an outlook on future research directions.

\section{Preliminaries}
\label{Sec:Prelim}

In this section, we introduce the concepts relevant to the present manuscript, taken mostly from \cite{Fischer2025}, and following the notation of \cite{Fischer2023, Fischer2025}. We begin with some general definitions.

\subsection{Definitions and notation}
\paragraph*{Rooted trees}
A \emph{rooted tree} (or simply \emph{tree}) is a directed graph $T = (V(T),E(T))$, with vertex set $V(T)$ and edge set $E(T)$, containing precisely one vertex of in-degree zero, called the \emph{root} (denoted by $\rho$), such that for every $v \in V(T)$ there exists a unique path from $\rho$ to $v$ and such that there are no vertices with out-degree one. We use $V_L(T) \subseteq V(T)$ to refer to the leaf set of $T$ (i.e., $V_L(T) = \{v \in V(T): \text{out-degree}(v) = 0\}$), and we use $\mathring{V}(T)$ to denote the set of inner vertices of $T$ (i.e., $\mathring{V}(T) = V(T) \setminus V_L(T)$). Moreover, we use $n$ to denote the number of leaves of $T$, i.e., $n = \vert V_L(T) \vert$, which we will also refer to as the \emph{size} of $T$. Note that $\rho \in \mathring{V}(T)$ if $n \geq 2$. If $n=1$, $T$ consists of only one vertex, which is at the same time the root and the tree's only leaf.

A rooted tree is called \emph{binary} if all inner vertices have out-degree two, and for every $n \in \mathbb{N}_{\geq 1}$, we denote by $\BT$ the set of (isomorphism classes of) rooted binary trees with $n$ leaves, and by $\T$ the set of (isomorphism classes of) rooted trees with $n$ leaves. We often call a tree $T \in \T$ an \textit{arbitrary tree}, but remark that arbitrary trees are also sometimes referred to as non-binary trees in the literature (even though binary trees are also contained in the set of arbitrary trees).

\paragraph*{Ancestors, descendants, and cherries}
Let $u,v \in V(T)$ be vertices of $T$. Whenever there exists a path from $u$ to $v$ in $T$, we say that $u$ is an \emph{ancestor} of $v$ and $v$ is a \emph{descendant} of $u$. Note that this implies that each vertex is an ancestor and a descendant of itself. If $u$ and $v$ are connected by an edge, i.e., if $(u,v) \in E(T)$, we also say that $u$ is the \emph{parent} of $v$ and $v$ is a \emph{child} of $u$. The \textit{lowest common ancestor} $LCA_T(u,v)$ of two vertices $u,v \in V(T)$ is the unique common ancestor of $u$ and $v$ that is a descendant of every other common ancestor of them. Moreover, two leaves $x,y \in V_L(T)$ are said to form a \emph{cherry}, denoted by $[x,y]$, if they have the same parent, which is then also called a \textit{cherry parent}.

\paragraph*{Attaching, deleting, and relocating a cherry}
First, by \emph{attaching a cherry} to a tree $T$ to obtain a tree $T'$, we mean replacing a leaf $x \in V_L(T)$ with a cherry. Notice that $T'$ has one more leaf than $T$. Conversely, \emph{deleting a cherry} means replacing a cherry with a single leaf. Finally, \emph{relocating a cherry} in a tree $T$ is defined as performing both operations in sequence: first deleting the cherry and then (re)attaching it to another leaf of $T$. This may change the shape of $T$, but the number of leaves remains the same.

\paragraph*{(Maximal) pending subtrees, clade size (sequence), and standard decomposition}
Given a tree $T$ and a vertex $v \in V(T)$, we denote by $T_v$ the \emph{pending subtree} of $T$ rooted in $v$. We use $n_T(v)$ (or $n_v$ for brevity) to denote the number of leaves in $T_v$, also called the \emph{clade size} of $v$. Based on this, the \textit{clade size sequence} of a tree $T \in \T$ is the list of clade sizes of all its inner vertices, arranged in ascending order (if $n=1$, this list is empty). We denote this sequence by $\mathcal{N}(T) \coloneqq (n_1, \ldots, n_{|\mathring{V}(T)|})$, where $\mathcal{N}(T)_i$ represents the $i$-th entry of $\mathcal{N}(T)$. The length of the clade size sequence for a tree with $n \geq 2$ leaves can range from $1$ to $n-1$. Specifically, the sequence has length 1 if and only if $T$ is a so-called star tree, and it has length $n-1$ if and only if $T$ is binary. Here, the \emph{star tree}, denoted by $T^{star}_n$, is the rooted tree with $n$ leaves that either satisfies $n=1$, or $n \geq 2$ and has a single inner vertex (the root), which is adjacent to all leaves (see Figure \ref{Fig:special_trees} for an example with $n = 6$ leaves). We will often decompose a rooted tree $T$ on $n \geq 2$ leaves into its maximal pending subtrees rooted in the children of $\rho$. We denote this decomposition as $T = (T_{v_1}, \ldots, T_{v_k})$, where $v_1, \ldots, v_k$ are the children of the root in $T$, and refer to it as the \emph{standard decomposition} of $T$. If $T$ is binary, we have $k=2$, and thus $T = (T_{v_1}, T_{v_2})$.

\paragraph*{Depth, leaf depth sequence, and height (value)}
The \emph{depth} $\delta_T(v)$ (or $\delta_v$ for brevity) of a vertex $v \in V(T)$ is the number of edges on the path from the root $\rho$ to $v$. Based on this, the \textit{leaf depth sequence} of a tree $T \in \T$ is the list of leaf depths of all its leaves, arranged in ascending order. We denote this sequence by $\Delta(T) \coloneqq (\delta_1, \ldots, \delta_{n})$, where $\Delta(T)_i$ represents the $i$-th entry of $\Delta(T)$. Unlike the clade size sequence, the leaf depth sequence always has length $n = |V_L(T)|$, regardless of whether the tree is binary. Moreover, the height $h(T)$ of $T$ is the maximum depth of any leaf, i.e., $h(T) = \max_{x \in V_L(T)} \delta_T(x)$. Further, the \textit{height value} of a vertex $v$ is the height of its pending subtree $T_v$, i.e., $h_T(v) = h(T_v)$ (or $h_v$ for brevity).

\paragraph*{Balance value (sequence) and balanced vertices}
Now let $T$ be a rooted binary tree with at least two leaves and let $v \in \mathring{V}(T)$ be an inner vertex of $T$ with children $v_1$ and $v_2$. The \emph{balance value} $b_T(v)$ (or $b_v$ for brevity) of $v$ is defined as $b_{T}(v) \coloneqq |n_{v_1} - n_{v_2}|$. Based on this, the \emph{balance value sequence} of a binary tree $T \in \BT$ is the list of balance values of all its inner vertices, arranged in ascending order (if $n=1$, this list is empty). We denote this sequence by $\mathcal{B}\left(T\right) \coloneqq (b_1, \ldots, b_{n-1})$. The $i$-th entry of $\mathcal{B}(T)$ is denoted by $\mathcal{B}\left(T\right)_i$. Note that for any $T \in \BT$, the length of $\mathcal{B}(T)$ is $n-1$. Moreover, an inner vertex $v$ is called \emph{balanced} if it fulfills $b_T(v) \leq 1$.

\begin{figure}[ht]
\centering
	\includegraphics[scale=2]{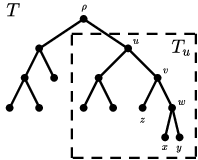}
	\caption{Rooted binary tree $T$ with eight leaves and root $\rho$ (figure adapted from \cite{Fischer2025}). Vertices $\rho$, $u$, and $v$ are ancestors of $v$. The parent of $v$ is $u$, and $v$ is one of two children of $u$. The descendants of $v$ are $v$, $z$, $w$, $x$, and $y$. The lowest common ancestor of $x$ and $z$ is $LCA_T(x,z) = v$. The leaves $x$ and $y$ form the cherry $[x,y]$, whose parent is $w$. The pending subtree of $u$ is $T_u$, which is also one of the two maximal pending subtrees of $T$. It has five leaves and thus $n_T(u) = 5$, i.e., the clade size of $u$ is five. The balance value of $v$ is one, i.e., $b_T(v) = 2-1 = 1$, hence $v$ is balanced. Vertices $u$ and $w$ are balanced, too, because $b_T(u) = 3-2 = 1$ and $b_T(w) = 1-1 = 0$. The root $\rho$ is not balanced as $b_T(\rho) = 5-3 = 2$.}
	\label{Fig:examplesDefinitions}
\end{figure}

\paragraph*{Important (families of) trees}
Next, in addition to the previously introduced star tree, we introduce several specific families of trees that will play an important role throughout this manuscript (see Figure~\ref{Fig:special_trees} for examples).

First, the \emph{fully balanced tree of height $h$} (or fb-tree for brevity), denoted by $T^{fb}_h$, is the rooted binary tree with $n = 2^h$ leaves with $h \in \N_{\geq 0}$ in which all leaves have depth precisely $h$. Note that $T^{fb}_h$ contains $2^{h} = n$ pending subtrees of height $0$ (the leaves), $2^{h-1}$ pending subtrees of height $1$ (the cherries) and, in general, $2^{h-i}$ pending subtrees of height $i$. Further, for $h \geq 1$, we have $T^{fb}_h = \left(T^{fb}_{h-1}, T^{fb}_{h-1}\right)$.

Second, the \emph{maximally balanced tree} (or mb-tree for brevity), denoted by $T^{mb}_n$, is the rooted binary tree with $n$ leaves in which all inner vertices are balanced. Recursively, a rooted binary tree with $n \geq 2$ leaves is maximally balanced if its root is balanced and its two maximal pending subtrees are maximally balanced. Notice that for $h \in \N_{\geq 0}$, we have $T^{mb}_{2^h} = T^{fb}_h$.

Third, the \emph{greedy from the bottom tree} (or gfb-tree for brevity), denoted by $T^{gfb}_n$, is the rooted binary tree with $n$ leaves that results from greedily clustering trees of minimal leaf numbers, starting with $n$ single vertices and proceeding until only one tree is left as described by~\cite[Algorithm 2]{Coronado2020a}. Another way to construct the gfb-tree is a direct consequence of \cite[Lemma 4.17]{Cleary2025}: The gfb-tree $T^{gfb}_n$ with $n$ leaves can be constructed from the fb-tree $T^{fb}_h$ with $h = \lfloor\log_2(n)\rfloor$ by attaching $n - 2^h$ cherries from left to right to the leaves of $T^{fb}_h$. Consequently, for $h \in \N_{\geq 0}$, we have $T^{gfb}_{2^h} = T^{fb}_h$.

\begin{Rem}
\label{Rem:gfb_properties}
    Note that the second definition of the gfb-tree implies that the gfb-tree $T^{gfb}_n$ with $n$ leaves can also be constructed from the fb-tree $T^{fb}_h$ with $h = \lceil\log_2(n)\rceil$ by deleting $2^h - n$ cherries from right to left. In particular, the gfb-tree has minimal height $h\left(T^{gfb}_n\right) = \lceil\log_2(n)\rceil$. Moreover, if $n$ is even, all leaves of the gfb-tree are part of a cherry, whereas if $n$ is odd, all leaves except for one are part of a cherry.

    Additionally, the gfb-tree with an even number of leaves $n$ can be obtained from the gfb-tree with $n-1$ leaves by attaching a cherry to its unique leaf that is not part of a cherry. Further, for even $n$, deleting all $\frac{n}{2}$ cherries of $T^{gfb}_n$ yields $T^{gfb}_{\frac{n}{2}}$. Conversely, attaching a cherry to each leaf of $T^{gfb}_{\frac{n}{2}}$ yields $T^{gfb}_n$.
\end{Rem}

Fourth, the \emph{caterpillar tree} (or simply \emph{caterpillar}), denoted by $T^{cat}_n$, is the rooted binary tree with $n$ leaves defined as follows: if $n=1$, it is a single leaf; if $n \geq 2$, it is a rooted binary tree with $n$ leaves that contains exactly one cherry.

Finally, the \textit{binary echelon tree} (or simply \textit{echelon tree}) from \cite{Currie2024}, denoted by $T^{be}_n$, can be defined as follows \cite{collessMinussackin2025}: Let $k_n = \lceil\log_2(n)\rceil$ and consider the binary expansion of $n$, i.e., $n = \sum\limits_{i=0}^{k_n} \alpha_i\cdot 2^i$, where $\alpha_i$ equals $1$ precisely if $2^i$ is contained in the binary expansion of $n$ and $0$ else. Note that $w(n) = \sum\limits_{i=0}^{k_n} \alpha_i$ is the so-called \textit{binary weight} of $n$. Let $f_n(i)$ for $i = 1, \ldots, w(n)$ denote the $i$-th power of two in the binary expansion of $n$ when they are sorted in ascending order. For instance, if $n = 11 = 2^0+2^1+2^3$, we have $f_{11}(1) = 2^0$, $f_{11}(2) = 2^1$ and $f_{11}(3) = 2^3$. Note that this implies $n = \sum\limits_{i=1}^{w(n)} f_n(i)$. We construct tree $T_n^{be}$ as follows \cite{collessMinussackin2025}: We start with a \enquote{top-caterpillar} $T^{cat}_{w(n)}$ on $w(n)$ leaves and replace its leaves by pending fb-subtrees of height $f_n(i)$. If $w(n) = 1$, i.e., $n = 2^{k_n}$ is a power of two, the caterpillar has only one leaf, which is replaced by $T^{fb}_{k_n}$. If $w(n) \geq 2$, we start by replacing one of the top caterpillar's cherry leaves with $T^{fb}_j$, where $j = \log_2(f_n(1))$ is the smallest index such that $\alpha_j = 1$. Then, we take one unvisited leaf of the original caterpillar at a time, starting with the one furthest from the root and ending with the one adjacent to the root, and replace the respective leaf by $T^{fb}_i$, where $i$ is the smallest index with $\alpha_i = 1$ which has not been considered yet. Note that this construction implies that $T^{fb}_{\log_2(f_n(w(n)))} = T^{fb}_{k_n-1}$ is a maximal pending subtree of the root of $T^{be}_n$. In particular, the standard decomposition of $T^{be}_n$ is given by $\left(T^{fb}_{k_n-1},T^{be}_{n-2^{k_n-1}}\right)$.

Notice that all trees introduced above (including the star tree) are unique (up to isomorphism) and have the property that all their pending subtrees are (smaller) trees of the same type. Moreover, we remark that the caterpillar is generally regarded as the most imbalanced (binary) tree, whereas the fully balanced tree is considered the most balanced binary tree when it exists, i.e., for leaf numbers that are powers of two. For other leaf numbers, both the maximally balanced tree and the greedy from the bottom tree are often regarded as the most balanced binary trees, whereas the star tree is usually considered to be the most balanced arbitrary tree. For more details see \cite{Fischer2023}.

\begin{figure}[ht]
\centering
	\includegraphics[width=0.7\textwidth]{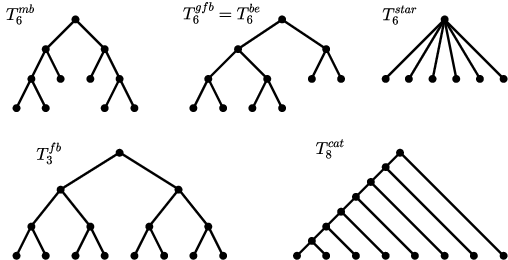}
	\caption{Examples of the special trees considered throughout this manuscript (figure adapted from \cite{Fischer2025}). Note that the gfb-tree and the echelon tree generally do not coincide. For example, for $n = 5$, we have $T^{gfb}_5 = \left(T^{cat}_3,T^{cat}_2\right) \neq \left(T^{fb}_2,T^{cat}_1\right) = T^{be}_5$ (see Figure \ref{Fig:5_2_3} in the appendix).}
	\label{Fig:special_trees}
\end{figure}

\paragraph*{(Im)balance index, locality and recursiveness}
We next introduce the concept of a tree (im)balance index. First, following~\citet{Fischer2023}, a \emph{(binary) tree shape statistic} is a function $t: \T$ (respectively, $\BT$) $\rightarrow \mathbb{R}$ that depends only on the shape of $T$ but not on the labeling of vertices or the lengths  of edges. Based on this, a tree (im)balance index is defined as follows:
\begin{Def}[(Binary) (im)balance index~(\citet{Fischer2023})]
    A (binary) tree shape statistic $t$ is called an \emph{imbalance index} if and only if
	\begin{enumerate}[(i)]
		\item the caterpillar $T^{cat}_n$ is the unique tree maximizing $t$ on its domain $\T$, or $\BT$, respectively, for all $n \geq 1$,
		\item the fully balanced tree $T^{fb}_h$ is the unique tree minimizing $t$ on $\BT$ for all $n = 2^h$ with $h \in \N_{\geq 0}$.
	\end{enumerate}
    If the domain of $t$ is $\BT$, we often call $t$ a \textit{binary imbalance index} to highlight this fact.

    We note that, in addition to imbalance indices, there are also balance indices, such as the $B_1$ index (defined below). A balance index is minimized by the caterpillar and maximized by the fb-tree. Also note that a balance index can be obtained from an imbalance index (and vice versa) simply by multiplying by $-1$.
\end{Def}

Given two trees $T,T' \in \T$ (or $\BT$, respectively) and an imbalance index (or balance index) $t$, we say that \emph{$T$ is more balanced than $T'$} with respect to $t$ if $t(T) < t(T')$ (or $t(T) > t(T')$, respectively). More generally, when we say that a tree $T$ \emph{minimizes an imbalance index}, we mean that it attains the minimum value of the index among all trees in the same domain $\T$ (or $\BT$, respectively). Similarly, when comparing a tree $T$ to a family of trees (such as the ones defined above), we always compare it to the representative of the family with the same number of leaves.

Further, two (im)balance indices $\varphi_1$ and $\varphi_2$ are said to be \emph{equivalent} if, for all trees $T_1, T_2 \in \T$ (or $\BT$, respectively), the following holds: $\varphi_1(T_1) < \varphi_1(T_2) \Longleftrightarrow \varphi_2(T_1) < \varphi_2(T_2)$. In other words, equivalence means that $\varphi_1$ and $\varphi_2$ induce the same ranking of trees from most balanced to least balanced.
\smallskip

We next turn to two desirable properties of (im)balance indices, namely locality and recursiveness.

\begin{Def}[Locality~(\citet{Mir2013,Fischer2023})]
    An (im)balance index $t$ is called \emph{local} if it satisfies $t(T) - t(T') = t(T_v) - t\left(T'_v\right)$ for all $v \in V(T)$ and for all pairs $T,T' \in \T$ (or $\BT$, respectively) such that $T'$ can be obtained from $T$ by replacing the pending subtree $T_v$ rooted in $v$ with a (binary) tree $T'_v$ that has the same number of leaves and is also rooted in $v$.
\end{Def}

In other words, if $t$ is local and two trees $T$ and $T'$ differ only in a pending subtree, then the difference in their $t$-values is equal to the difference in the $t$-values of the corresponding pending subtrees.
\smallskip

Next, we introduce the recursiveness of a tree shape statistic.

\begin{Def}[Recursiveness (based on~\citet{Matsen2007,Fischer2023})]
\label{Def:recursivenessBook}
    A \emph{recursive tree shape statistic} of length $x \in \N_{\geq 1}$ is an ordered pair $(\lambda,r)$, where $\lambda \in \R^{x}$ and $r$ is an $x$-vector of symmetric functions each mapping a multiset of $x$-vectors to $\R$. In this definition, $x$ is the number of recursions that are used to calculate the index, the vector $\lambda$ contains the start value for each of the $x$ recursions, i.e., the values of $T \in \mathcal{T}^{\ast}_1$ if $n = 1$, and the vector $r$ contains the recursions themselves. In particular, $r_i(T) = \lambda_i$ for $n = 1$, and for $T = (T_1, \ldots, T_k)$, recursion $r_i$ operates on $k$ vectors of length $x$, namely $(r_1(T_1), \ldots, r_x(T_1)), \ldots, (r_1(T_k), \ldots, r_x(T_k))$, each representing one of the maximal pending subtrees $T_1, \ldots, T_k$ and containing their respective values. The recursions are symmetrical functions, i.e., the order of those $k$ vectors is permutable, because we are solely considering unordered trees. If only binary trees are considered, i.e., $k = 2$ for every pending subtree, we use the term \emph{binary recursive tree shape statistic}.
\end{Def}

In the following, we recall the general (im)balance index metaconcept, as introduced by \citet{Fischer2025}. 
Based on this framework, we then introduce our new height metaconcept, which is defined in terms of the so-called height sequence. Finally, we introduce the related $B_1$ index, a well-known balance index.

\paragraph*{Height sequence, height metaconcept, and $B_1$ index}Let $T \in \mathcal{T} \subseteq \T$ be a tree, and let $Seq(T)$ be a vertex value sequence on a subset $V' \subseteq V(T)$, i.e., a sequence that assigns each vertex $v \in V'$ a value $s_v$ derived from $v$. Assume that $Seq(T)$ is sorted in ascending order, and let $Seq(T)_i$ denote its $i$-th entry.

Furthermore, let $\omega \in \N_{\geq 1}$, $c = \min\limits_{T' \in \mathcal{T}}\left\{Seq(T')_1\right\}$, i.e., $c$ is the smallest possible sequence value that a tree in $\mathcal{T}$ can attain, and $f: \R_{\geq c} \times \R^{\omega-1} \rightarrow \R$ be a function that depends on an entry of $Seq(T)$ and $\omega-1$ additional values $o_1(T), \ldots, o_{\omega-1}(T)$, such as the number of inner vertices, i.e., $o_i(T) = |\mathring{V}(T)|$, or the number of leaves of $T$, i.e., $o_i(T) = n$. Then,
\begin{align*}
    \Phi^{Seq}_{f}(T) &\coloneqq \sum\limits_{s \in Seq(T)} f(s,o_1(T), \ldots, o_{\omega-1}(T))
\end{align*}
is called the \textit{(im)balance index metaconcept of order $\omega$}. Clearly,
\begin{align*}
    \Phi^{Seq}_{f}(T)
    = \sum\limits_{v \in V'} f(s_v,o_1(T), \ldots, o_{\omega-1}(T))
    = \sum\limits_{i = 1}^{|V'|} f(Seq(T)_i,o_1(T), \ldots, o_{\omega-1}(T)).
\end{align*}

We now specialize the general metaconcept by using a particular sequence. The \textit{height sequence} of a tree $T \in \T$ is defined as the list of the height values of all its inner vertices, arranged in ascending order (if $n=1$, this list is empty). We denote this sequence by $\mathcal{H}(T) \coloneqq (h_1, \ldots, h_{|\mathring{V}(T)|})$, and refer to its $i$-th entry as $\mathcal{H}(T)_i$. The length of the height sequence of a tree with $n \geq 2$ leaves ranges from $1$ to $n-1$. In particular, the sequence has length $1$ if and only if $T$ is a star tree, and length $n-1$ if and only if $T$ is binary.

Based on this sequence, for a rooted tree $T \in \T$, we define the \textit{height metaconcept} (HM) of order $\omega$ as
\[\Phi^{\mathcal{H}}_{f}(T) \coloneqq \sum\limits_{h \in \mathcal{H}(T)} f(h, o_1(T), \ldots, o_{\omega -1}(T)).\]
Since $1$ is the smallest height value attained by an inner vertex, we have $c = 1$, and consequently $f: \R_{\geq 1} \times \R^{\omega-1} \rightarrow \R$. Throughout, if the order of the metaconcept is $\omega = 1$, we frequently consider functions satisfying the property $f(x) > 0$ for all $x \geq 1$ and call them \textit{$1$-positive}.

Note that \citet{Fischer2025} introduced three metaconcepts based on the balance value sequence, the clade size sequence, and the leaf depth sequence, respectively, and analyzed their ability to measure imbalance. For this reason, and given that the majority of known indices are formulated as measures of imbalance, we later analyze in which cases the height metaconcept can also serve as a measure of imbalance. Note again that an imbalance index can be easily transformed into a balance index by replacing $f$ with $-f$.

\medskip

In this work, we answer several open questions from the literature concerning the following well-known balance index. \citet{Shao1990} defined the \textit{$B_1$ index} as
\[B_1(T) \coloneqq \sum\limits_{v \in \mathring{V}(T) \setminus \left\{\rho\right\}} \frac{1}{h(T_v)} = \sum\limits_{v \in \mathring{V}(T) \setminus \left\{\rho\right\}} \frac{1}{h_T(v)}.\]
This balance index is defined for arbitrary trees, but it satisfies only the definition of a binary balance index, since it considers the star tree, whose $B_1$ index is an empty sum and thus equals 0, to be more imbalanced than the caterpillar (\citet{Fischer2023}).

We will see in Remark \ref{Rem:B1_HM} that the $B_1$ index is induced by the third-order HM.

\subsection{Known results}
Before presenting our new findings, we recall two results from the literature. The first concerns a functional that has the same form as the metaconcepts; by choosing $Seq = \mathcal{H}$ in this lemma, we obtain the height metaconcept. The second result pertains to the gfb-tree.

\begin{Lem}[\citet{Fischer2025}, Lemma 3.4]
\label{Lem:Seq_entries_Min_Max_meta_binary}
    Let $Seq$ be a sequence of length $l$, sorted in ascending order, which can be determined for every tree $T \in \mathcal{T}$, where $\mathcal{T} \subseteq \T$. Denote the $i$-th entry of $Seq(T)$ by $Seq(T)_i$. Let $f: \R \rightarrow \R$ be a function, and define the functional $\Phi^{Seq}_f: \mathcal{T} \rightarrow \R$ by
    \[\Phi^{Seq}_f(T) \coloneqq \sum\limits_{i=1}^{l} f\left(Seq(T)_i\right).\]
    Then, we have:
    \begin{enumerate}
        \item
            \begin{enumerate}
                \item If a tree $T \in \mathcal{T}$ minimizes the functional $\Phi^{Seq}_f$ on $\mathcal{T}$ for all strictly increasing functions $f$, then for all $\widetilde{T} \in \mathcal{T}$, we have
                \[Seq(T)_i \leq Seq(\widetilde{T})_i \text{ for all } i \in \left\{1,\ldots,l\right\}.\]
                \item Conversely, if a tree $T \in \mathcal{T}$ satisfies for all $\widetilde{T} \in \mathcal{T}$ and all $i \in \left\{1,\ldots,l\right\}$  
                \[Seq(T)_i \leq Seq(\widetilde{T})_i,\]
                then $T$ minimizes the functional $\Phi^{Seq}_f$ on $\mathcal{T}$ for all (not necessarily strictly) increasing functions.
            \end{enumerate}

        \item
            \begin{enumerate}
                \item If a tree $T \in \mathcal{T}$ \emph{uniquely} minimizes the functional $\Phi^{Seq}_f$ on $\mathcal{T}$ for some increasing function $f$, then for all $\widetilde{T} \in \mathcal{T} \setminus \left\{T\right\}$, we have
                \[Seq(T)_i < Seq(\widetilde{T})_i \text{ for at least one } i \in \left\{1,\ldots,l\right\}.\]
                \item Conversely, if a tree $T \in \mathcal{T}$ satisfies for all $\widetilde{T} \in \mathcal{T} \setminus \left\{T\right\}$
                \[Seq(T)_i \leq Seq(\widetilde{T})_i \text{ for all } i \in \left\{1,\ldots,l\right\}\]
                and
                \[Seq(T)_i < Seq(\widetilde{T})_i \text{ for at least one } i \in \left\{1,\ldots,l\right\},\]
                then $T$ (uniquely) minimizes the functional $\Phi^{Seq}_f$ on $\mathcal{T}$ for all (strictly) increasing functions $f$.
        \end{enumerate}
    \end{enumerate}

    Both statements also hold in the maximization case, where \enquote{minimizing} is replaced by \enquote{maximizing}, and all inequalities are reversed.
\end{Lem}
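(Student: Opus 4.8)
The plan is to separate the two \textbf{sufficiency} directions (the converse implications in 1(a), in 1(b), and in part~2) from the two \textbf{necessity} directions (the forward implications of 1(a) and of part~2), and then to obtain the maximization statements from the minimization ones by a sign-flip reduction rather than by repeating the arguments.

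\textbf{Sufficiency directions.} These are immediate from the entrywise monotonicity of $f$. If $Seq(T)_i \le Seq(\widetilde T)_i$ for all $i\in\{1,\dots,l\}$ and $f$ is increasing, then $f(Seq(T)_i)\le f(Seq(\widetilde T)_i)$ for every $i$, and summing over $i$ yields $\Phi^{Seq}_f(T)\le\Phi^{Seq}_f(\widetilde T)$; hence $T$ minimizes $\Phi^{Seq}_f$. If in addition $Seq(T)_{i_0}<Seq(\widetilde T)_{i_0}$ for some $i_0$ and $f$ is \emph{strictly} increasing, that summand is strict, so $\Phi^{Seq}_f(T)<\Phi^{Seq}_f(\widetilde T)$; since this holds for all $\widetilde T\ne T$, $T$ is the unique minimizer. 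This covers 1(b), the converse in part~2, and (with the non-strict reading) the non-parenthetical claim in part~2.

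\textbf{Necessity directions.} I argue by contraposition. Assume it is \emph{not} the case that $Seq(T)_i\le Seq(\widetilde T)_i$ for all $i$ and all $\widetilde T$; fix $\widetilde T$ and an index $j$ with $Seq(T)_j>Seq(\widetilde T)_j=:a$. The key point is a counting inequality forced by the sequences being sorted in ascending order: writing $k_S:=\bigl|\{i:\, Seq(S)_i>a\}\bigr|$ for $S\in\{T,\widetilde T\}$, we have $Seq(T)_i\ge Seq(T)_j>a$ for all $i\ge j$, so $k_T\ge l-j+1$, whereas $Seq(\widetilde T)_i\le Seq(\widetilde T)_j=a$ for all $i\le j$, so $k_{\widetilde T}\le l-j$; hence $k_T-k_{\widetilde T}\ge 1$. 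Now take the witness function $f_\varepsilon(x):=\varepsilon x+\1_{\{x>a\}}$, which is strictly increasing on $\R$ for every $\varepsilon>0$. Then $\Phi^{Seq}_{f_\varepsilon}(T)-\Phi^{Seq}_{f_\varepsilon}(\widetilde T)=\varepsilon\left(\sum_i Seq(T)_i-\sum_i Seq(\widetilde T)_i\right)+(k_T-k_{\widetilde T})$, and choosing $\varepsilon$ small enough (depending on the fixed pair $T,\widetilde T$) that the first term has absolute value below $1$ makes this difference strictly positive, i.e.\ $\Phi^{Seq}_{f_\varepsilon}(\widetilde T)<\Phi^{Seq}_{f_\varepsilon}(T)$. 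So $T$ fails to minimize $\Phi^{Seq}_{f_\varepsilon}$ for this strictly increasing $f_\varepsilon$, which is the contrapositive of 1(a). For the forward direction of part~2 one starts instead from the negation ``$Seq(T)_i\ge Seq(\widetilde T)_i$ for all $i$ and some $\widetilde T\ne T$'': then $\Phi^{Seq}_f(T)\ge\Phi^{Seq}_f(\widetilde T)$ for every increasing $f$, so either $T$ is not a minimizer, or $T$ and $\widetilde T\ne T$ are both minimizers; in neither case is $T$ the unique minimizer.

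\textbf{Maximization and the main obstacle.} Instead of rerunning everything with reversed inequalities, reduce to the minimization case: to $Seq$ associate the reversed--negated sequence $Seq^\dagger$ with $Seq^\dagger(T)_i:=-Seq(T)_{\,l+1-i}$ (still determinable for all $T\in\mathcal T$, still sorted ascending, still of length $l$), and to $f$ associate $\widehat f(x):=-f(-x)$, which is (strictly) increasing iff $f$ is and satisfies $\widehat{\widehat f}=f$. Since $\Phi^{Seq^\dagger}_{\widehat f}=-\Phi^{Seq}_f$, the tree $T$ maximizes $\Phi^{Seq}_f$ iff it minimizes $\Phi^{Seq^\dagger}_{\widehat f}$, and $Seq(T)_i\ge Seq(\widetilde T)_i$ for all $i$ iff $Seq^\dagger(T)_i\le Seq^\dagger(\widetilde T)_i$ for all $i$; applying the already-proved minimization statements to $Seq^\dagger$ therefore yields exactly the maximization statements for $Seq$. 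The only nonroutine step in the whole proof is the necessity direction: producing a \emph{single} strictly increasing $f$ that ``detects'' the violated coordinate. The step function $\1_{\{x>a\}}$ does this precisely because sortedness turns one pointwise violation into a genuine integer gap $k_T-k_{\widetilde T}\ge 1$, and the $\varepsilon x$ term is a harmless perturbation that restores strict monotonicity without erasing that gap; the main care required is the bookkeeping of this perturbation, in particular that $\varepsilon$ is allowed to depend on the fixed pair $(T,\widetilde T)$.
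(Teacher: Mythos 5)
This lemma is imported verbatim from the literature (it is stated as a known result, Lemma 3.4 of \citet{Fischer2025}) and the present paper contains no proof of it, so there is no in-paper argument to compare yours against; I can only assess your proof on its own terms, and it is correct and complete. The sufficiency directions (1(b) and the converse in part 2) are indeed just entrywise monotonicity plus summation, and your reduction of the maximization statements via $Seq^\dagger(T)_i = -Seq(T)_{l+1-i}$ and $\widehat f(x) = -f(-x)$ is a clean way to avoid redoing the case analysis; one readily checks $\Phi^{Seq^\dagger}_{\widehat f} = -\Phi^{Seq}_f$, that $\widehat f$ is (strictly) increasing exactly when $f$ is, and that the reversed entrywise inequalities translate as you claim. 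The genuinely nontrivial step is the necessity direction of 1(a), and your witness $f_\varepsilon(x)=\varepsilon x + \mathds{1}_{\{x>a\}}$ works: it is strictly increasing for every $\varepsilon>0$, the sortedness argument correctly yields $k_T \geq l-j+1$ and $k_{\widetilde T}\leq l-j$, hence an integer gap $k_T-k_{\widetilde T}\geq 1$ that survives the $\varepsilon$-perturbation once $\varepsilon$ is chosen small relative to the fixed pair $(T,\widetilde T)$ (which is legitimate, since refuting ``$T$ minimizes for \emph{all} strictly increasing $f$'' only requires one witness per violating pair). The contrapositive treatment of the forward implication in part 2 is also sound: a $\widetilde T\neq T$ dominated entrywise by $T$ forces $\Phi^{Seq}_f(T)\geq \Phi^{Seq}_f(\widetilde T)$ for every increasing $f$, precluding unique minimization for any such $f$. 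Two minor remarks: your witness is discontinuous, which is fine since the lemma only requires strict monotonicity; and although in the paper's applications $f$ is defined only on $\R_{\geq c}$, the restriction of $f_\varepsilon$ to any such half-line remains strictly increasing, so the argument is robust to that variant as well.
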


Next, we turn to the gfb-tree and its pending subtree sizes.

\begin{Prop}[\citet{Coronado2020a}, Proposition 5]
\label{Prop:gfb_Ta_or_Tb_fb}
    Let $T^{gfb}_n = (T_1,T_2)$ be a gfb-tree with $n$ leaves, $T_1 \in \mathcal{BT}^{\ast}_{n_1}$, $T_2 \in \mathcal{BT}^{\ast}_{n_2}$ and $n_1 \geq n_2$. Let $n = 2^{h_n} +p_n$ with $h_n = \lfloor\log_2(n)\rfloor$ and $0 \leq p_n < 2^{h_n}$. Then, we have:
    \begin{enumerate}
        \item If $0 \leq p_n \leq 2^{h_n-1}$, then $n_1 = 2^{h_n-1} +p_n$, $n_2 = 2^{h_n-1}$, and $T_2$ is an fb-tree.
        \item If $2^{h_n-1} \leq p_n < 2^{h_n}$, then $n_1 = 2^{h_n}$, $n_2 = p_n$, and $T_1$ is an fb-tree.
    \end{enumerate}
\end{Prop}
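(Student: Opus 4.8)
The plan is to argue directly from the constructive description of the gfb-tree recalled above: $T^{gfb}_n$ arises from $T^{fb}_{h_n}$ with $h_n=\lfloor\log_2 n\rfloor$ by attaching its $p_n = n-2^{h_n}$ additional cherries to the first $p_n$ leaves, scanned from left to right. Since a standard decomposition $(T_1,T_2)$ is assumed, $n\ge 2$ and hence $h_n\ge 1$, so $T^{fb}_{h_n}=(T^{fb}_{h_n-1},T^{fb}_{h_n-1})$ and each maximal pending subtree of $T^{fb}_{h_n}$ has exactly $2^{h_n-1}$ leaves. Under the recursive left-to-right leaf ordering, the first $2^{h_n-1}$ of these leaves are precisely the leaves of the left copy of $T^{fb}_{h_n-1}$, so attaching cherries from left to right first fills a prefix of the left subtree and only spills into the right subtree once more than $2^{h_n-1}$ cherries are used. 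This dichotomy is exactly the two cases of the statement, and in each case it will be the larger of the two resulting subtrees that plays the role of $T_1$.

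In case 1 ($0\le p_n\le 2^{h_n-1}$) all $p_n$ cherries land on the left copy of $T^{fb}_{h_n-1}$ while the right copy is untouched, so $T_2 = T^{fb}_{h_n-1}$ is an fb-tree with $n_2 = 2^{h_n-1}$. The left subtree is the tree obtained from $T^{fb}_{h_n-1}$ by attaching $p_n$ cherries from left to right; I identify this with $T^{gfb}_{2^{h_n-1}+p_n}$ by checking that $\lfloor\log_2(2^{h_n-1}+p_n)\rfloor = h_n-1$ when $p_n<2^{h_n-1}$, and by noting separately that the boundary value $p_n=2^{h_n-1}$ turns the left subtree into $T^{fb}_{h_n}=T^{gfb}_{2^{h_n}}$. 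Hence $n_1 = 2^{h_n-1}+p_n\ge 2^{h_n-1}=n_2$, consistent with the convention $n_1\ge n_2$.

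In case 2 ($2^{h_n-1}\le p_n<2^{h_n}$) the first $2^{h_n-1}$ cherries fill the left copy of $T^{fb}_{h_n-1}$ completely and turn it into $T^{fb}_{h_n}$, so $T_1 = T^{fb}_{h_n}$ is an fb-tree with $n_1 = 2^{h_n}$; the remaining $p_n-2^{h_n-1}<2^{h_n-1}$ cherries then go onto the first leaves of the right copy of $T^{fb}_{h_n-1}$, so the right subtree is the tree obtained from $T^{fb}_{h_n-1}$ by attaching $p_n-2^{h_n-1}$ cherries from left to right, which equals $T^{gfb}_{p_n}$ because $\lfloor\log_2 p_n\rfloor = h_n-1$ throughout this range; thus $n_2 = p_n<2^{h_n}=n_1$. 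The two ranges overlap only at $p_n=2^{h_n-1}$, where both conclusions hold simultaneously (both maximal pending subtrees are then fb-trees), so there is no conflict.

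The one genuinely delicate point is the bookkeeping around the leaf ordering, that is, formalising that attaching cherries from left to right exhausts the left maximal pending subtree before touching the right one, together with the boundary values $p_n\in\{0,2^{h_n-1}\}$ at which a subtree is either left untouched or filled completely; everything else is elementary arithmetic of the expansion $n=2^{h_n}+p_n$. An alternative would be to run the greedy-clustering procedure of \cite[Algorithm 2]{Coronado2020a} directly and track which pair of subtrees is merged in the final step, but I expect the construction-based argument above to be shorter.
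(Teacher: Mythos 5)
Your argument is correct, but note that the paper does not prove this proposition at all: it is imported verbatim as a known result (Proposition~5 of Coronado et al.), so there is no internal proof to compare against. What you give is an independent derivation from the constructive description of $T^{gfb}_n$ that the paper recalls in its preliminaries (attach $p_n=n-2^{h_n}$ cherries from left to right to $T^{fb}_{h_n}$), whereas the original source works with the greedy bottom-up clustering itself. Your route is legitimate and arguably shorter given that the paper already takes the Cleary-based construction as known: writing $T^{fb}_{h_n}=(T^{fb}_{h_n-1},T^{fb}_{h_n-1})$, the case split $p_n\le 2^{h_n-1}$ versus $p_n\ge 2^{h_n-1}$ exactly records whether the cherries stay within one maximal pending subtree or fill it completely and spill over, which immediately gives the stated leaf counts and the fb-claim for the untouched (respectively, completely filled) side; the boundary $p_n=2^{h_n-1}$ is consistent in both cases, as you check. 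The only point to be explicit about is the one you flag yourself: \enquote{from left to right} presupposes a leaf ordering in which all leaves of one maximal pending subtree of $T^{fb}_{h_n}$ precede those of the other (e.g.\ the order induced by a depth-first traversal); since $T^{fb}_{h_n}$ is symmetric, any such convention yields the same isomorphism class, so this is a matter of stating the convention rather than a gap. Your additional identification of the larger subtree with $T^{gfb}_{2^{h_n-1}+p_n}$ (resp.\ of the smaller one with $T^{gfb}_{p_n}$) is more than the proposition asserts, but it is correct and consistent with the recursive structure the paper later exploits (e.g.\ Remark~\ref{Rem:gfb_properties} and Part~3 of Theorem~\ref{Theo:Min_H_gfb_characterization}).
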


We are now in a position to state our new results.

\section{Results}
\label{Sec:Results}

\paragraph*{Summary of our main results}
We prove that the height metaconcept (HM) induces an imbalance index for all strictly increasing and $1$-positive functions $f$, and a binary imbalance index for all strictly increasing functions $f$. Moreover, we show that the minimizing trees are the same for all strictly increasing functions $f$ on $\BT$; in particular, we fully characterize these minimizing trees by showing that the gfb-tree is always among them and that all binary minimizing trees share the same height sequence as the gfb-tree.

By exploiting the relationship between the HM and the $B_1$ index, we prove that the set of trees minimizing the HM on $\BT$ for all increasing functions $f$ coincides with the set of trees maximizing the $B_1$ index on $\BT$ and $\T$. This yields the same complete characterization as above, i.e., the gfb-tree always maximizes the $B_1$ index and all other maximizing trees have the same height sequence as the gfb-tree. This resolves two open problems posed by \citet{Fischer2023}.

Two further open problems of \citet{Fischer2023} concern the maximum value of the $B_1$ index on $\BT$ and $\T$, for which we present explicit formulas. We also determine the leaf numbers for which the  maximizing tree of the $B_1$ index is unique in $\BT$ and $\T$, respectively. Finally, we show that the HM is not local but recursive for all functions $f$, without imposing any further restrictions. This generalizes an earlier result on the $B_1$ index (\citet[Propositions 10.2 and 10.3]{Fischer2023}) to the HM.
\medskip

Before we analyze the HM and the $B_1$ index, we begin by comparing the height sequence with the balance value sequence, the clade size sequence, and the leaf depth sequence in order to show that all these sequences contain different information on the underlying tree and have different properties. Recall that \citet{Fischer2025} introduced three metaconcepts based on these latter three sequences, whereas our new metaconcept is based on the height sequence.

\subsection{Comparing sequences}
\label{Subsec:Sequences}
We start with an observation concerning the height sequence. As shown in \cite{Fischer2025}, the balance value sequence, the clade size sequence, and the leaf depth sequence of a tree can each be derived from the corresponding sequences of its maximal pending subtrees. This recursive property also holds for the height sequence.

\begin{Obs}
\label{Obs:H_recursiveness}
Let $T \in \T$ be a tree with standard decomposition $T = (T_1, \ldots, T_k)$. The height sequence of $T$ consists of all entries of the height sequences of its maximal pending subtrees, together with the maximum height among these pending subtrees plus one. This holds because every inner vertex of a maximal pending subtree has the same height value in $T$ as in its corresponding pending subtree $T_i$. Moreover, the maximum height of all maximal pending subtrees plus one equals the height of $T$ and thus the height value of its root.
\end{Obs}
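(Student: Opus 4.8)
The statement to prove is Observation~\ref{Obs:H_recursiveness}, which asserts that the height sequence of a tree $T$ with standard decomposition $T = (T_1, \ldots, T_k)$ is obtained by merging the height sequences of the $T_i$ and inserting one additional value, namely $h(T) = \max_i h(T_i) + 1$, the height value of the root $\rho$.

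\textbf{Proof plan.} The plan is to argue directly from the definitions of the height value of a vertex and of the height sequence. First I would observe that the inner vertices of $T$ split into two disjoint groups: the root $\rho$, and the inner vertices belonging to one of the maximal pending subtrees $T_i$. (Here I use that $T$ has $n \geq 2$ leaves, so $\rho$ is an inner vertex and the standard decomposition is well-defined; the degenerate case $n = 1$ has empty height sequence and nothing to prove, and the single-vertex pending subtrees contribute nothing either.) Formally, $\mathring{V}(T) = \{\rho\} \,\dot\cup\, \bigcup_{i=1}^k \mathring{V}(T_i)$, viewing each $V(T_i)$ as a subset of $V(T)$ via the natural embedding.

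\textbf{Key step: height values are preserved under the embedding.} The crucial point is that for any inner vertex $v \in \mathring{V}(T_i)$, the pending subtree of $T$ rooted at $v$ coincides with the pending subtree of $T_i$ rooted at $v$, i.e.\ $(T)_v = (T_i)_v$ as rooted trees. This is immediate because the set of descendants of $v$ in $T$ is the same as the set of descendants of $v$ in $T_i$ (no descendant of $v$ lies outside $T_i$, since $v$ lies strictly below the child $\rho$'s child whose subtree is $T_i$). Hence $h_T(v) = h(\,(T)_v\,) = h(\,(T_i)_v\,) = h_{T_i}(v)$, so the height value of every inner vertex of $T_i$ is unchanged when passing to $T$. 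Consequently the multiset of height values contributed by $\bigcup_i \mathring{V}(T_i)$ is exactly the disjoint union of the multisets underlying $\mathcal{H}(T_1), \ldots, \mathcal{H}(T_k)$.

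\textbf{The root's contribution.} It remains to identify $h_T(\rho)$. By definition $h_T(\rho) = h(T_\rho) = h(T)$, the maximum depth of a leaf of $T$. Any leaf $x$ of $T$ lies in some $T_i$, and its depth in $T$ is its depth in $T_i$ plus $1$ (the edge from $\rho$ to the root of $T_i$); taking the maximum over all leaves gives $h(T) = \max_{i} \big( h(T_i) + 1 \big) = \max_i h(T_i) + 1$. Combining this with the previous step, the multiset of height values of all inner vertices of $T$ is the disjoint union of those of the $T_i$ together with the single extra value $\max_i h(T_i)+1$; sorting in ascending order yields exactly the claimed description of $\mathcal{H}(T)$.

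\textbf{Main obstacle.} There is no real obstacle here — the statement is essentially a bookkeeping exercise — so the only thing requiring care is making the identification $(T)_v = (T_i)_v$ precise (i.e.\ being explicit that descendants of an inner vertex of $T_i$ never leave $T_i$), and handling the trivial edge cases ($n=1$, and single-leaf maximal pending subtrees, which have empty height sequences and height value $0$, so they are consistent with the formula as long as at least one $T_i$ contributes to the maximum). Everything else follows from unwinding the definitions of depth, height, and height value.
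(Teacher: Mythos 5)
Your proposal is correct and follows exactly the paper's own (inline) justification: the inner vertices of $T$ split into the root plus the inner vertices of the $T_i$, height values are preserved because the pending subtree of an inner vertex of $T_i$ is the same in $T$ as in $T_i$, and the root contributes $\max_i h(T_i) + 1 = h(T)$. You merely spell out the bookkeeping (the disjoint-union decomposition of $\mathring{V}(T)$ and the edge cases) in more detail than the paper does.
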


\citet{Fischer2025} compared the three sequences they introduced by considering pairs of sequences and listing minimal examples of two binary trees that share both sequences, only one, or none. We complement the comparison by additionally taking the height sequence into account. Specifically, we compare the height sequence with each of the three other sequences. Once again, we find minimal examples, most of them unique, of pairs of binary trees that share both, only one, or none of the two respective sequences under investigation. For examples and a detailed overview, see Figures \ref{Fig:5_2_3} to \ref{Fig:11_194_199} in the appendix and Table \ref{Tab:Uebersicht_Bsp_Seq}. This shows that all these sequences that can be derived from a tree have inherently different properties, which is the basis of our investigation of the height sequence, which -- unlike the other sequences mentioned here -- has not yet been analyzed in the context of tree balance metaconcepts.

\begin{table}
	\centering
	\caption{This table provides an overview of where to find minimal examples of pairs of binary trees that share or do not share the height sequence $\mathcal{H}$ and one of the sequences $\mathcal{B}$, $\mathcal{N}$, or $\Delta$. Note that the figures referred to in the table can be found in the appendix.}
	\label{Tab:Uebersicht_Bsp_Seq}
	\begin{tabular}{ c | c | c | c}
	\textbf{First Sequence} & \textbf{Second Sequence} & \textbf{Figure} & $\boldsymbol{n}$\\
	\hline \hline
    $\mathcal{H}(T_1) = \mathcal{H}(T_2)$ & $\mathcal{B}(T_1) = \mathcal{B}(T_2)$ & \ref{Fig:11_194_199} & $11$\\
	$\mathcal{H}(T_1) = \mathcal{H}(T_2)$ & $\mathcal{B}(T_1) \neq \mathcal{B}(T_2)$ & \ref{Fig:5_2_3} & $5$\\
    $\mathcal{H}(T_1) \neq \mathcal{H}(T_2)$ & $\mathcal{B}(T_1) = \mathcal{B}(T_2)$ & \ref{Fig:9_42_44} & $9$\\
    \hline
    $\mathcal{H}(T_1) = \mathcal{H}(T_2)$ & $\mathcal{N}(T_1) = \mathcal{N}(T_2)$ & \ref{Fig:11_194_199} & $11$\\
    $\mathcal{H}(T_1) = \mathcal{H}(T_2)$ & $\mathcal{N}(T_1) \neq \mathcal{N}(T_2)$ & \ref{Fig:5_2_3} & $5$\\
    $\mathcal{H}(T_1) \neq \mathcal{H}(T_2)$ & $\mathcal{N}(T_1) = \mathcal{N}(T_2)$ & \ref{Fig:9_42_44} & $9$\\
    \hline
	$\mathcal{H}(T_1) = \mathcal{H}(T_2)$ & $\Delta(T_1) = \Delta(T_2)$ & \ref{Fig:8_20_22} & $8$\\
	$\mathcal{H}(T_1) = \mathcal{H}(T_2)$ & $\Delta(T_1) \neq \Delta(T_2)$ & \ref{Fig:5_2_3} & $5$\\
	$\mathcal{H}(T_1) \neq \mathcal{H}(T_2)$ & $\Delta(T_1) = \Delta(T_2)$ & \ref{Fig:6_5_6} & $6$
	\end{tabular}
\end{table}

\subsection{Height metaconcept \texorpdfstring{$\Phi^{\mathcal{H}}_f$}{PhiHf}}
\label{Subsec:HM}
In this section, we first outline the relationship between the $B_1$ index and the HM, followed by an analysis of the trees that maximize and minimize the HM. We then identify the families of functions for which the HM induces a (binary) imbalance index and establish several properties of the trees that minimize the HM on $\BT$. The most important property is that all trees minimizing the HM for all increasing functions $f$ have the same height sequence as the gfb-tree, which completely characterizes this set of trees. After proving that the HM is not local but recursive, we answer several open questions concerning the $B_1$ index by exploiting the previously established results for the HM.

While we focus on the first-order HM and the $B_1$ index, all results concerning minimizing and maximizing trees also apply to functions that are equivalent to the first-order metaconcept. For some illustrative examples, see the next remark.

\begin{Rem}[adapted from \citet{Fischer2025}, Remark 3.2]
\label{Rem:H_foMeta_equiv_higherorderMeta}
    The binary HM of order $\omega \geq 2$ with function $f(x,o_1, \ldots, o_{\omega-1}) = f_1(x) \cdot f_2(o_1, \ldots, o_{\omega-1}) + f_3(o_{1}, \ldots, o_{\omega-1})$ is equivalent to the first-order HM with $f(x) = f_1(x)$, provided that $f_2(o_1, \ldots, o_{\omega-1}) > 0$ and the additional values $o_1, \ldots, o_{\omega-1}$ are the same for every tree with the same number of leaves. Note that as $f_3$ is independent of $x$ and thus can be regarded as a summation constant, it only causes a constant shift in the definition of $f$ which is identical for all trees in $\BT$ and therefore does not have an impact on the rankings induced by $f$. This explains the equivalence of $f$ and $f_1$ under the aforementioned circumstances concerning $f_2$. However, recall that all trees in $\BT$ have the same number of leaves $n$ and the same number of inner vertices $(n-1)$. Therefore, each $o_i$ may depend on the leaf number $n$, but not on tree-specific properties such as the height. By contrast, for arbitrary trees the number of inner vertices, i.e., the number of summands in the HM, may vary, so this equivalence does not generally hold. Consequently, the HM of order $\omega \geq 2$ can only be guaranteed to be equivalent to the first-order metaconcept if the function has the form $f(x,o_1, \ldots, o_{\omega-1}) = f_1(x) \cdot f_2(o_1, \ldots, o_{\omega-1})$, where, again, $f_2(o_1, \ldots, o_{\omega-1}) > 0$ and the additional values $o_1, \ldots, o_{\omega-1}$ are the same for every tree with the same number of leaves. In this case, the metaconcept remains equivalent to the first-order metaconcept with function $f(x) = f_1(x)$.
\end{Rem}

After analyzing the HM, we can derive several properties of the $B_1$ index from it. For a first insight into their relationship, see the following remark.

\begin{Rem}
\label{Rem:B1_HM}
    The $B_1$ index is induced by the third-order HM with $f_{B_1}(h,|\mathring{V}(T)|,h(T)) = \frac{1}{h} - \frac{h(T)}{|\mathring{V}(T)|}$, because for any $T \in \T$, we have
    \[\Phi^{\mathcal{H}}_{f_{B_1}}(T) = \sum\limits_{h \in \mathcal{H}(T)} \left(\frac{1}{h} - \frac{1}{|\mathring{V}(T)| \cdot h(T)}\right) = \left(\sum\limits_{v \in \mathring{V}(T)} \frac{1}{h_v}\right) - \frac{1}{h(T)} = \sum\limits_{v \in \mathring{V}(T) \setminus \left\{\rho\right\}} \frac{1}{h_v} = B_1(T).\]
    Note that even though all binary trees have the same number of inner vertices ($n-1$), for binary trees the $B_1$ index is still induced by the third-order HM (and not the second-order HM), because disregarding the root in the summation leads to a required subtraction of the total tree height multiplied by the number $(n-1)$ of inner vertices as seen in the above definition of $\Phi_{f_{B_1}}^\mathcal{H}(T)$, so both $n$ and $h(T)$ need to be known to calculate the metaconcept's value.
\end{Rem}

Apart from the $B_1$ index, we next briefly discuss another (im)balance index based on height values.

\begin{Rem}
    The so-called \textit{maximum depth} of a tree is an imbalance index that measures the maximum depth of any of the tree's leaves, which is the height value of the root and thus the tree's height. This index was discussed in the context of phylogenetics in \cite{Colijn_phylogenetic_2014}. The maximum depth $mD$ can be expressed with the second-order height metaconcept with function
    $f_{mD}(h_v,h(T)) = \begin{cases}
        h(T), &\text{if } h_v = h(T)\\
        0, &\text{else}
    \end{cases}$.
    This is due to the root being the only inner vertex of $T$ that has the tree's height as its height value. Hence, for height values, i.e., $h_v \in [1,h(T)]$, the function $f_{mD}$ is increasing. Since the maximum depth is to the best of our knowledge one of only two (im)balance indices from the literature that are based on height values, and as it is already well-understood and has been analyzed extensively (see \cite{Fischer2023}), we focus on the $B_1$ index in the rest of this manuscript.
\end{Rem}

We now state our two main results. The first identifies the cases in which the HM induces a (binary) imbalance index. The second completely characterizes the trees that minimize the HM for all increasing functions $f$: all such trees share the height sequence of the gfb-tree.

\begin{Theo}
\label{Theo:H_imbalance_index}
    The height metaconcept $\Phi^{\mathcal{H}}_f$ induces an imbalance index for all $f$ that are strictly increasing and $1$-positive. Moreover, the height metaconcept $\Phi^{\mathcal{H}}_f$ induces a binary imbalance index for all $f$ that are strictly increasing.
\end{Theo}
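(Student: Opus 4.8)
The plan is to verify the two defining conditions of an (im)balance index directly, using Lemma~\ref{Lem:Seq_entries_Min_Max_meta_binary} applied with $Seq = \mathcal{H}$ as the main tool. Condition~(i) requires that the caterpillar $T^{cat}_n$ uniquely maximizes $\Phi^{\mathcal{H}}_f$, and condition~(ii) requires that the fb-tree $T^{fb}_h$ uniquely minimizes $\Phi^{\mathcal{H}}_f$ on $\BT$ for $n = 2^h$. Since the caterpillar has only one leaf on its domain when $n=1$, and for $n \ge 2$ has a unique cherry, its height sequence is easily computed to be $\mathcal{H}(T^{cat}_n) = (1, 2, 3, \ldots, n-1)$. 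I would first argue that this height sequence is \emph{entrywise maximal} among all trees in $\T$ (and hence in $\BT$): since any tree with $n$ leaves has height at most $n-1$, and more generally the $i$-th smallest height value cannot exceed $i$ (an inner vertex with height value $j$ has at least $j+1$ leaves below it, and the inner vertices can be ordered by height so that the one of height $j$ accounts for the relevant leaves), one gets $\mathcal{H}(T)_i \le i = \mathcal{H}(T^{cat}_n)_i$ for every $T$. Combined with the fact that equality in all entries forces $T = T^{cat}_n$ (a tree whose height sequence is $(1,\ldots,n-1)$ must be binary and must be the caterpillar, e.g.\ by an induction via Observation~\ref{Obs:H_recursiveness}), part~2 of Lemma~\ref{Lem:Seq_entries_Min_Max_meta_binary} in the maximization case gives condition~(i), provided $f$ is strictly increasing. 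This step needs no positivity assumption, so it covers both the binary and the arbitrary case simultaneously.

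For condition~(ii), I would restrict to $\BT$ with $n = 2^h$ and show that $T^{fb}_h$ uniquely minimizes $\Phi^{\mathcal{H}}_f$ for every strictly increasing $f$; again Lemma~\ref{Lem:Seq_entries_Min_Max_meta_binary}, part~2, reduces this to showing $\mathcal{H}(T^{fb}_h)_i \le \mathcal{H}(T)_i$ for all $i$ and all $T \in \BT[2^h]$, with equality everywhere only for $T = T^{fb}_h$. The height sequence of $T^{fb}_h$ is $(\underbrace{1,\ldots,1}_{2^{h-1}}, \underbrace{2,\ldots,2}_{2^{h-2}}, \ldots, \underbrace{h-1,\ldots,h-1}_{2}, h)$, as recorded in the preliminaries. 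The key combinatorial claim is that this is the entrywise \emph{minimal} height sequence among all binary trees on $2^h$ leaves: a binary tree with $\ge 2^{h-j}+1$ inner vertices of height $\le j$ would need more than $2^h$ leaves, which bounds how many small height values can occur; formalizing this, if $T$ has fewer than $2^{h-1} + 2^{h-2} + \cdots + 2^{h-j}$ inner vertices of height at most $j$ for some $j$, it would have fewer than the requisite number of leaves — so the count of inner vertices of height $\le j$ is at least that for $T^{fb}_h$, which is exactly the entrywise-domination statement in the sorted sequence. Uniqueness of the minimizer then follows because achieving all these bounds with equality forces every level to be "full," i.e.\ $T = T^{fb}_h$.

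The main obstacle I anticipate is the clean proof of the entrywise extremality of the height sequences — in particular the lower bound for $T^{fb}_h$ in condition~(ii). The subtlety is that the height sequence is sorted, so one is really comparing \emph{counting functions}: for each threshold $t$, the number of inner vertices with height value $\le t$. Showing that $T^{fb}_h$ simultaneously minimizes all of these counting functions (equivalently, maximizes the number of "tall" subtrees at every level) requires a careful argument, most naturally by induction on $h$ using the standard decomposition $T = (T_1, T_2)$ together with Observation~\ref{Obs:H_recursiveness}: the height sequence of $T$ is the merge of those of $T_1$ and $T_2$ plus one extra top entry $\max\{h(T_1),h(T_2)\}+1$, and a binary tree on $2^h$ leaves splits as $(T_1, T_2)$ with $n_1 + n_2 = 2^h$; one shows the merged sequence is entrywise $\ge$ that of $T^{fb}_h$ by applying the inductive hypothesis to $T_1$ and $T_2$ after, if necessary, comparing against the $(T^{fb}_{h-1}, T^{fb}_{h-1})$ split and using that unbalancing the split only raises height values. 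Once this extremality is established, both conditions~(i) and~(ii) fall out of Lemma~\ref{Lem:Seq_entries_Min_Max_meta_binary}. Finally, for the \emph{arbitrary}-tree statement, the only place positivity of $f$ is needed is to rule out the star tree (and other highly unbalanced non-binary trees) as a competitor to the caterpillar: although $\mathcal{H}(T^{star}_n) = (1)$ is entrywise dominated by the caterpillar's, the star has fewer summands, so $\Phi^{\mathcal{H}}_f(T^{star}_n) = f(1)$; $1$-positivity guarantees $f(1) > 0$, and one checks that for any non-binary tree the sum of the "missing" terms (relative to a binary refinement with the same leaf count) is nonnegative precisely because each such term is $f$ evaluated at a height value $\ge 1$, hence positive, so no non-binary tree can beat the caterpillar. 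I would package this comparison as: refining a non-binary vertex into binary ones can only add positive summands and can only increase existing height values, so $\Phi^{\mathcal{H}}_f$ strictly increases toward $T^{cat}_n$; this closes condition~(i) over all of $\T$.
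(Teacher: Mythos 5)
Your treatment of condition (i) is essentially sound and in fact more direct than the paper's: the entrywise bound $\mathcal{H}(T)_i \le i = \mathcal{H}\left(T^{cat}_n\right)_i$ (a tree of height $H$ contains an inner vertex of every height $1,\ldots,H$) combined with Lemma \ref{Lem:Seq_entries_Min_Max_meta_binary} replaces the paper's cherry-relocation route (Lemmas \ref{Lem:H_seq_relocate_cherry_to_cherry} and \ref{Lem:H_seq_caterpillar}), and your closing refinement-plus-$1$-positivity argument for non-binary competitors is exactly the paper's. Be aware, though, that your intermediate claim that this step \enquote{needs no positivity assumption} and covers the arbitrary case simultaneously is wrong: Lemma \ref{Lem:Seq_entries_Min_Max_meta_binary} presupposes that all trees in the class have sequences of the same length, which fails on $\T$, and the remark following Theorem \ref{Theo:cat_Max_H} (with $f(x)=-\frac{1}{x}$) shows the statement on $\T$ is genuinely false without $1$-positivity. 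Since you repair this at the end, it is a slip rather than a fatal flaw.

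The genuine gap is in condition (ii). Entrywise minimality of $\mathcal{H}\left(T^{fb}_h\right)$ means, in counting form, that every binary tree on $2^h$ leaves has \emph{at most} $2^h-2^{h-j}$ inner vertices of height at most $j$ (equivalently, at least $2^{h-j+1}-1$ inner vertices of height at least $j$); your formalized claim asserts the opposite inequality (\enquote{the count of inner vertices of height $\le j$ is at least that for $T^{fb}_h$}) and justifies it by an implication that is false: the caterpillar on $2^h$ leaves has only $j$ inner vertices of height at most $j$, far fewer than $2^h-2^{h-j}$, yet has the requisite $2^h$ leaves; likewise your opening claim (more than $2^{h-j}$ vertices of height $\le j$ forces more than $2^h$ leaves) fails for $j\ge 2$, with $T^{fb}_h$ itself as a counterexample. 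The fallback induction hides the entire difficulty in the unproven assertion that \enquote{unbalancing the split only raises height values}, and the induction hypothesis does not even apply to $T_1,T_2$, whose leaf numbers need not be powers of two. A correct direct argument does exist, e.g.: the set $S_j$ of inner vertices of height $\ge j$ is up-closed and hence a subtree containing the root, its vertices have exactly $|S_j|+1$ children outside $S_j$, each subtending at most $2^{j-1}$ leaves, whence $2^h\le\left(|S_j|+1\right)2^{j-1}$ and $|S_j|\ge 2^{h-j+1}-1$; uniqueness then follows because equal height sequences force height $h$, and $T^{fb}_h$ is the only binary tree on $2^h$ leaves of height $h$. As written, however, your argument for (ii) does not go through. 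The paper avoids this issue altogether by deducing condition (ii) from its full characterization of the HM-minimizers (Theorem \ref{Theo:Min_H_gfb_characterization} together with Proposition \ref{Prop:gfb_unique_Min}), which rests on the cherry-based structural lemmas rather than on a direct counting bound.
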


Note that the proof is a direct consequence of Theorem \ref{Theo:cat_Max_H}, Parts 1 and 3, and Proposition \ref{Prop:gfb_unique_Min}, which will be stated and proven subsequently. We now turn to the complete characterization of the minimizing trees.

\begin{Theo}
\label{Theo:Min_H_gfb_characterization}
    Let $B^{min}_n$ denote the set of binary trees that minimize the height metaconcept $\Phi^{\mathcal{H}}_f$ on $\BT$ for \textit{all} (not necessarily strictly) increasing functions $f$.
    \begin{enumerate}
        \item Then $T^{gfb}_n \in B^{min}_n$, i.e., the gfb-tree minimizes the height metaconcept $\Phi^{\mathcal{H}}_f$ on $\BT$ for all increasing functions $f$. In particular
        \[\mathcal{H}\left(T^{gfb}_n\right)_i \leq \mathcal{H}\left(T\right)_i \text{ for all } i = 1, \ldots, n-1 \text{ and } T \in \BT.\]
        \item
            \begin{enumerate}[i)]
                \item For $T^{min} \in B^{min}_n$, we have $\mathcal{H}\left(T^{min}\right) = \mathcal{H}\left(T^{gfb}_n\right)$, i.e., all minimizing trees have the same height sequence as the gfb-tree and have height $\left\lceil\log_2(n)\right\rceil$, which is minimal. In particular, $\mathcal{H}(T^{min})_i = \mathcal{H}\left(T^{gfb}_n\right)_i \leq \mathcal{H}\left(T\right)_i$ for all $i = 1, \ldots, n-1$ and for all binary trees $T \in \BT$.
                \item Let $T^{min}$ be a binary tree that minimizes the height metaconcept $\Phi^{\mathcal{H}}_f$ on $\BT$ for some strictly increasing function $f$. Then $T^{min} \in B^{min}_n$, i.e., $T^{min}$ minimizes the height metaconcept for \textit{all} increasing functions $f$.
            \end{enumerate}
        \item If $T_{sub} \in \mathcal{BT}^{\ast}_{n_{sub}}$ is a pending subtree of $T^{min} \in B^{min}_n$ with $n_{sub}$ leaves, then $T_{sub} \in B^{min}_{n_{sub}}$.
    \end{enumerate}
\end{Theo}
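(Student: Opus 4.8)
The plan is to reduce the whole theorem to the single entrywise statement of part~1, namely that $\mathcal{H}(T^{gfb}_n)_i\le\mathcal{H}(T)_i$ for every $i$ and every $T\in\BT$, and then to bootstrap parts~2 and~3 from it using Lemma~\ref{Lem:Seq_entries_Min_Max_meta_binary} and a monotonicity lemma describing how $\mathcal{H}$ reacts to replacing a pending subtree. A convenient device throughout is the threshold-count reformulation: for a binary tree $T$ set $N_{\ge j}(T):=|\{v\in\mathring{V}(T): h_T(v)\ge j\}|$. Since $\mathcal{H}(T)$ and $\mathcal{H}(\widetilde{T})$ are sorted nonnegative integer sequences of the common length $n-1$ for $T,\widetilde{T}\in\BT$, one has $\mathcal{H}(T)_i\le\mathcal{H}(\widetilde{T})_i$ for all $i$ if and only if $N_{\ge j}(T)\le N_{\ge j}(\widetilde{T})$ for all $j\ge1$ (a standard equivalence for sorted integer vectors), and moreover $\sum_i\mathcal{H}(T)_i=\sum_{j\ge1}N_{\ge j}(T)$. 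The only genuinely non-routine step of the whole proof is proving part~1 in this form; everything after that is bookkeeping built on part~1, Lemma~\ref{Lem:Seq_entries_Min_Max_meta_binary}, and Observation~\ref{Obs:H_recursiveness}.

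\emph{Part 1.} By Lemma~\ref{Lem:Seq_entries_Min_Max_meta_binary}(1)(b) it suffices to prove the displayed entrywise inequality, i.e.\ $N_{\ge j}(T)\ge N_{\ge j}(T^{gfb}_n)$ for all $j\ge1$ and all $T\in\BT$. I would establish the sharper
\[
N_{\ge j}(T)\ \ge\ \left\lceil n/2^{\,j-1}\right\rceil-1\qquad\text{for all }T\in\BT,\ j\ge 1,
\]
with equality for $T=T^{gfb}_n$. The inequality is proved by strong induction on $n$: for $n=1$ both sides are $0$; for $n\ge2$ use the standard decomposition $T=(T_1,T_2)$ with $h(T_1)\ge h(T_2)$, which by Observation~\ref{Obs:H_recursiveness} gives $N_{\ge j}(T)=N_{\ge j}(T_1)+N_{\ge j}(T_2)+[\,h(T_1)\ge j-1\,]$. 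If $h(T_1)\ge j-1$, the induction hypothesis together with the superadditivity $\lceil a/m\rceil+\lceil b/m\rceil\ge\lceil(a+b)/m\rceil$ of the ceiling closes the step; if $h(T_1)\le j-2$, then $h(T)=h(T_1)+1<j$, so $N_{\ge j}(T)=0$, while simultaneously $n=n_1+n_2\le2^{h(T_1)}+2^{h(T_2)}\le2^{\,j-1}$, so $\lceil n/2^{\,j-1}\rceil-1=0$ as well. For equality one runs the same recursion for $T^{gfb}_n$, reading off its standard decomposition from Proposition~\ref{Prop:gfb_Ta_or_Tb_fb} and using $h(T^{gfb}_m)=\lceil\log_2 m\rceil$ (Remark~\ref{Rem:gfb_properties}); a short case analysis on $j$ relative to $\lfloor\log_2 n\rfloor$ shows the indicator term picks up exactly the slack in the ceiling inequality. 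This gives $\mathcal{H}(T^{gfb}_n)_i\le\mathcal{H}(T)_i$ for all $i$ and $T\in\BT$, hence, by Lemma~\ref{Lem:Seq_entries_Min_Max_meta_binary}(1)(b), $T^{gfb}_n\in B^{min}_n$.

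\emph{Part 2.} For (i), let $T^{min}\in B^{min}_n$. Applying part~1 with the strictly increasing $f(x)=x$: both $T^{min}$ and $T^{gfb}_n$ minimize $\Phi^{\mathcal{H}}_f$, so $\sum_i\mathcal{H}(T^{min})_i=\sum_i\mathcal{H}(T^{gfb}_n)_i$; combined with the entrywise inequality $\mathcal{H}(T^{gfb}_n)_i\le\mathcal{H}(T^{min})_i$ from part~1, this forces $\mathcal{H}(T^{min})=\mathcal{H}(T^{gfb}_n)$, whence $h(T^{min})=\mathcal{H}(T^{min})_{n-1}=\lceil\log_2 n\rceil$ (Remark~\ref{Rem:gfb_properties}), which is minimal by part~1, and $\mathcal{H}(T^{min})_i=\mathcal{H}(T^{gfb}_n)_i\le\mathcal{H}(T)_i$ for all $i$ and $T\in\BT$. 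For (ii), if $T^{min}$ minimizes $\Phi^{\mathcal{H}}_f$ on $\BT$ for some strictly increasing $f$, then $\Phi^{\mathcal{H}}_f(T^{min})\le\Phi^{\mathcal{H}}_f(T^{gfb}_n)$, whereas $\mathcal{H}(T^{gfb}_n)_i\le\mathcal{H}(T^{min})_i$ (part~1) and monotonicity of $f$ give $\Phi^{\mathcal{H}}_f(T^{min})\ge\Phi^{\mathcal{H}}_f(T^{gfb}_n)$; equality in the sum plus injectivity of $f$ yields $f(\mathcal{H}(T^{min})_i)=f(\mathcal{H}(T^{gfb}_n)_i)$ and hence $\mathcal{H}(T^{min})=\mathcal{H}(T^{gfb}_n)$. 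Finally, any $T^{min}\in\BT$ with $\mathcal{H}(T^{min})=\mathcal{H}(T^{gfb}_n)$ satisfies $\Phi^{\mathcal{H}}_g(T^{min})=\Phi^{\mathcal{H}}_g(T^{gfb}_n)\le\Phi^{\mathcal{H}}_g(T)$ for every increasing $g$ and every $T\in\BT$ (again part~1), so $T^{min}\in B^{min}_n$; this last observation is also what drives part~3.

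\emph{Part 3.} First I would prove a replacement-monotonicity lemma for $\mathcal{H}$: if $T'\in\BT$ is obtained from $T\in\BT$ by replacing the pending subtree $T_v$ at a vertex $v$ by a binary tree $T'_v$ with the same number of leaves, and $\mathcal{H}(T'_v)_i\le\mathcal{H}(T_v)_i$ for all $i$, then $\mathcal{H}(T')_i\le\mathcal{H}(T)_i$ for all $i$; and if additionally $\mathcal{H}(T'_v)_{i_0}<\mathcal{H}(T_v)_{i_0}$ for some $i_0$, then $\mathcal{H}(T')_{i_1}<\mathcal{H}(T)_{i_1}$ for some $i_1$. This follows by induction on $\delta_T(v)$: the case $v=\rho$ is trivial, and for $v\ne\rho$ one locates $v$ in one block $T_m$ of the standard decomposition of $T$, applies the induction hypothesis to $T_m$ (which also gives $h(T'_m)\le h(T_m)$ from the last entry), and invokes Observation~\ref{Obs:H_recursiveness} together with the (threshold-count) monotonicity of the sorted merge, the strict case being tracked through $N_{\ge j}$. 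Now assume some pending subtree $T_{sub}\in\mathcal{BT}^{\ast}_{n_{sub}}$ of $T^{min}\in B^{min}_n$ is not in $B^{min}_{n_{sub}}$. By part~1 applied on $\mathcal{BT}^{\ast}_{n_{sub}}$ we have $\mathcal{H}(T^{gfb}_{n_{sub}})_i\le\mathcal{H}(T_{sub})_i$ for all $i$, and since equality of the two sequences would (as in part~2) put $T_{sub}$ into $B^{min}_{n_{sub}}$, there is $i_0$ with $\mathcal{H}(T^{gfb}_{n_{sub}})_{i_0}<\mathcal{H}(T_{sub})_{i_0}$. Replacing $T_{sub}$ by $T^{gfb}_{n_{sub}}$ inside $T^{min}$ yields $\widetilde{T}\in\BT$ with, by the lemma, $\mathcal{H}(\widetilde{T})_{i_1}<\mathcal{H}(T^{min})_{i_1}=\mathcal{H}(T^{gfb}_n)_{i_1}$ for some $i_1$ (using part~2(i)), contradicting part~1. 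Hence $\mathcal{H}(T_{sub})=\mathcal{H}(T^{gfb}_{n_{sub}})$ and $T_{sub}\in B^{min}_{n_{sub}}$, completing the proof. The main obstacle, as noted, is the key lemma in part~1 (the lower bound $N_{\ge j}(T)\ge\lceil n/2^{\,j-1}\rceil-1$ and the verification that $T^{gfb}_n$ attains it); the rest is a direct consequence of it.
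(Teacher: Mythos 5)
Your proposal is correct, but it reaches the heart of the theorem by a genuinely different route than the paper. For Part 1 the paper argues indirectly: it assumes a minimal leaf number $m$ for which some tree beats $T^{gfb}_m$ for some strictly increasing $f$, and derives a contradiction via a chain of structural results about minimizers (Lemma~\ref{Lem:T_Min_H_properties} on leaves lying in cherries, the cherry-attachment/deletion correspondences of Propositions~\ref{Prop:H_Min_n-1_to_Min_n} and~\ref{Prop:H_Min_n_to_Min_n-1}, and the doubling Lemma~\ref{Lem:Min_n_2n}), only afterwards extracting the entrywise statement from Lemma~\ref{Lem:Seq_entries_Min_Max_meta_binary}. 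You instead prove the entrywise domination directly, via the threshold counts $N_{\ge j}$ and the explicit extremal bound $N_{\ge j}(T)\ge\lceil n/2^{j-1}\rceil-1$ with equality for $T^{gfb}_n$; I checked both the inductive lower bound (the two cases $h(T_1)\ge j-1$ and $h(T_1)\le j-2$ close exactly as you say) and the equality case analysis for the gfb-tree using Proposition~\ref{Prop:gfb_Ta_or_Tb_fb} and $h\left(T^{gfb}_m\right)=\lceil\log_2 m\rceil$, and they go through. Your Part 2 coincides with the paper's; your Part 3 replaces the paper's direct $\Phi^{\mathcal{H}}_f$-value comparison (swap in $T^{gfb}_{n_1}$, use $h(T^{min})\ge h(\widetilde T)$ and monotonicity of $f$) by a sequence-level replacement-monotonicity lemma, which is true (a pointwise-dominating bijection of inner vertices exists, and strictness follows since the total sum strictly drops) and handles arbitrary pending subtrees without the recursion step. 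What each approach buys: yours is shorter and self-contained, avoids the paper's auxiliary lemmas entirely (and hence any appearance of circularity with Lemma~\ref{Lem:Min_n_2n}, whose proof cites this theorem), and yields as a by-product a closed formula for the gfb threshold counts that essentially re-derives Proposition~\ref{Prop:gfb_number_subtrees_certain_height}; the paper's longer route, however, establishes structural facts (all leaves in cherries, the $n\leftrightarrow n\pm1$ and $n\leftrightarrow 2n$ constructions) that it needs anyway for the uniqueness characterization in Proposition~\ref{Prop:gfb_unique_Min} and for the $B_1$ results, so your argument complements rather than replaces that machinery. Two compressed spots you should spell out in a final write-up: the gfb equality step also uses that \emph{both} maximal pending subtrees of a gfb-tree are gfb-trees (Proposition~\ref{Prop:gfb_Ta_or_Tb_fb} only names one of them as an fb-tree; the paper states the general fact in the preliminaries), and the sorted-sequence/threshold-count equivalence deserves a one-line proof since it carries the whole reduction.
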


We will prove Theorem \ref{Theo:Min_H_gfb_characterization}  at the end of Section \ref{subsubsec:complete}.

\begin{Rem}
  Note that, on $\BT$, the HM is minimized by various trees different from those minimizing the other three metaconcepts introduced in \cite{Fischer2025}, including many well-known imbalance indices such as the Sackin index, the average leaf depth, the total cophenetic index, and the (quadratic) Colless index. For example, for $n = 9$, the gfb-tree, the echelon tree, and the three trees depicted in Figure \ref{Fig:H_min_partition} are the trees minimizing the HM, whereas the imbalance indices induced by the metaconcepts mentioned above only allow (a subset of) trees having two leaf depths  $\lfloor\log_2(n)\rfloor$ and $\lceil\log_2(n)\rceil$ (see \cite{Fischer2023,Fischer2025,Cleary2025} for details).
\end{Rem}

\begin{figure}[htbp]
    \centering
    \includegraphics[width=1\textwidth]{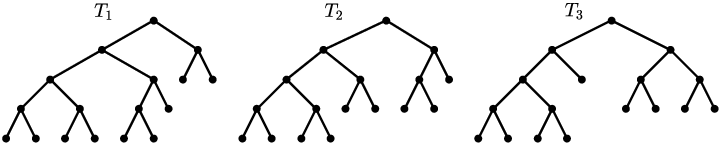}
    \caption{For $n = 9$, there are five trees, namely $T_1, T_2, T_3, T^{be}_9$ and $T^{gfb}_9$, in $\mathcal{BT}^{\ast}_9$ that minimize the HM. In particular, they all share the height sequence $\mathcal{H}(T) = (1,1,1,1,2,2,3,4)$.}
    \label{Fig:H_min_partition}
\end{figure}

In the next section, we identify cases in which the caterpillar uniquely maximizes the HM. This result is required as a first step to prove Theorem \ref{Theo:H_imbalance_index}.

\subsubsection{Maximizing trees}

In this section, we establish and prove the unique maximization of the HM by the caterpillar for all functions $f$ that satisfy the properties stated in the following theorem. This theorem plays a central role in demonstrating that the HM induces a (binary) imbalance index.

\begin{Theo}\leavevmode
\label{Theo:cat_Max_H}
\begin{enumerate}
    \item If $f$ is (strictly) increasing and $1$-positive, then $T^{cat}_n$ is the (unique) tree maximizing the height metaconcept $\Phi^{\mathcal{H}}_{f}$ on $\T$.

    \item Let $T \in \T \setminus \BT$ be a tree that is not binary. If $f$ is increasing and $1$-positive, then
    \[\Phi^{\mathcal{H}}_{f}\left(T^{cat}_n\right) > \Phi^{\mathcal{H}}_{f}(T).\]

    \item Let $T \in \BT \setminus \left\{T^{cat}_n\right\}$ be a binary tree. If $f$ is increasing, then
    \[\Phi^{\mathcal{H}}_{f}\left(T^{cat}_n\right) \geq \Phi^{\mathcal{H}}_{f}\left(T\right),\]
    where the inequality is strict for strictly increasing $f$.
\end{enumerate}
\end{Theo}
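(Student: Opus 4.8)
The plan is to prove all three parts together by induction on $n$, using the recursive structure of the height sequence (Observation \ref{Obs:H_recursiveness}) together with the standard decomposition. For $n=1,2$ the claims are trivial (there is only one tree). For the inductive step, let $T \in \T$ with standard decomposition $T = (T_1, \ldots, T_k)$, and write $H = \max_i h(T_i)$ so that the height value of the root is $H+1$. By Observation \ref{Obs:H_recursiveness},
\[
\Phi^{\mathcal{H}}_f(T) = f(H+1, o_1(T), \ldots) + \sum_{i=1}^{k} \Phi^{\mathcal{H}}_f(T_i),
\]
where in the first-order case the $o_j$ arguments are absent. Writing $n_i = |V_L(T_i)|$, each $T_i$ has fewer than $n$ leaves, so the induction hypothesis gives $\Phi^{\mathcal{H}}_f(T_i) \le \Phi^{\mathcal{H}}_f(T^{cat}_{n_i})$, with appropriate strictness statements. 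The remaining task is to compare $f(H+1, \ldots) + \sum_i \Phi^{\mathcal{H}}_f(T^{cat}_{n_i})$ against $\Phi^{\mathcal{H}}_f(T^{cat}_n)$, where the latter equals $f(n-1, \ldots) + \Phi^{\mathcal{H}}_f(T^{cat}_{n-1})$ using the decomposition $T^{cat}_n = (T^{cat}_{n-1}, T^{cat}_1)$.

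The heart of the argument is thus a purely combinatorial inequality about caterpillars: for any partition $n = n_1 + \cdots + n_k$ with $k \ge 2$, the height sequence of $T^{cat}_n$ dominates, entry by entry, the multiset union of the height sequences of $T^{cat}_{n_1}, \ldots, T^{cat}_{n_k}$ together with the extra value $H+1$ contributed by the root of $T$. Concretely, $\mathcal{H}(T^{cat}_m) = (1, 1, 2, 3, \ldots, m-1)$ for $m \ge 2$ (the two cherry-adjacent inner vertices both have height value... actually the caterpillar on $m$ leaves has inner vertices with height values $1, 1, 2, \ldots, m-2$ when $m\ge 2$; I would pin down the exact sequence first). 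The key observations are: (a) $H+1 \le n-1$, since a pending subtree on $n_i \le n-1$ leaves has height at most $n_i - 1 \le n - 2$; (b) each $\mathcal{H}(T^{cat}_{n_i})$ is a prefix-like sequence bounded by the tail of $\mathcal{H}(T^{cat}_n)$. One clean way is to invoke Lemma \ref{Lem:Seq_entries_Min_Max_meta_binary}: show that when all maximal pending subtrees are caterpillars, $\mathcal{H}(T^{cat}_n)_i \ge \mathcal{H}(T')_i$ for all $i$ for the resulting tree $T'$, and that the inequality is strict somewhere unless $T' = T^{cat}_n$ (equivalently, unless $k=2$ and one part has size $1$). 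Since $f$ increasing then yields $\sum_i f(\mathcal{H}(T^{cat}_n)_i) \ge \sum_i f(\mathcal{H}(T')_i)$, chaining with the induction hypothesis closes the binary case, part 3.

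For parts 1 and 2, I must additionally handle the loss of leaves/inner-vertex count when $T$ is non-binary, and this is where $1$-positivity enters. If $T$ has a vertex of out-degree $\ge 3$, then $|\mathring{V}(T)| < n-1 = |\mathring{V}(T^{cat}_n)|$, so $\mathcal{H}(T)$ is strictly shorter than $\mathcal{H}(T^{cat}_n)$; we pad the comparison by noting that the "extra" entries of $\mathcal{H}(T^{cat}_n)$ beyond the length of $\mathcal{H}(T)$ each contribute $f(\text{value}) > 0$ by $1$-positivity (all height values are $\ge 1$), which is what makes the caterpillar strictly beat every non-binary tree. I would formalize this by comparing $T$ first to a binary refinement (resolving multifurcations arbitrarily) — resolving a multifurcation strictly increases $\Phi^{\mathcal{H}}_f$ by the $1$-positivity argument, since it adds a new inner vertex with a positive $f$-value while not decreasing any existing height value — and then applying the already-established binary result (part 3) to the refinement. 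This two-stage reduction (non-binary $\to$ binary refinement $\to$ caterpillar) handles part 2 directly, and part 1 is the conjunction of parts 2 and 3 plus uniqueness: in $\T$, the maximizer must be binary (part 2) and among binary trees must be the caterpillar (part 3 with strict $f$).

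The main obstacle I anticipate is the bookkeeping in the entrywise-domination claim for caterpillars under arbitrary partitions: one must verify that interleaving several caterpillar height sequences $(1,1,2,\ldots,n_i-1)$ plus one root value $H+1$ into a single sorted sequence is dominated termwise by $(1,1,2,3,\ldots,n-1)$, and correctly identify the unique equality case $(k=2$, smallest part a single leaf, other part already a caterpillar$)$. A careful but elementary counting argument — comparing, for each threshold $t$, the number of entries $\le t$ on each side — should settle it; showing the count on the caterpillar side is always at least the count on the other side is the crux, and getting the strictness characterization exactly right is the delicate part. The $o_j$-arguments in parts 1–2 are harmless since $f$ is assumed increasing in its first coordinate with the others held fixed, but I would state explicitly that monotonicity is used only in the height coordinate.
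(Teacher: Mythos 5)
Your proposal is correct, but it reaches part 3 by a genuinely different route than the paper. The paper's proof of the binary case rests on a cherry-relocation argument: Lemma \ref{Lem:H_seq_relocate_cherry_to_cherry} shows that relocating a cherry onto a leaf of a deepest cherry makes every entry of the height sequence weakly larger (one strictly), iterating this gives the entrywise domination of every tree's height sequence by the caterpillar's (Lemma \ref{Lem:H_seq_caterpillar}), and Lemma \ref{Lem:Seq_entries_Min_Max_meta_binary} converts that into the inequality for increasing $f$. You instead induct on $n$ via the standard decomposition (Observation \ref{Obs:H_recursiveness}): replace the two maximal pending subtrees by caterpillars (the induction hypothesis handles the subtree sums, and the root term can only grow since the subtree heights can only grow and $f$ is increasing), and then settle the single comparison of $(T^{cat}_{n_1},T^{cat}_{n_2})$ against $T^{cat}_n$ by your threshold-counting argument; since $\mathcal{H}(T^{cat}_m)=(1,2,\ldots,m-1)$ --- note that both of your tentative guesses for this sequence were off, though the correct form only makes the counting cleaner --- the merged multiset is $\{1,\ldots,n_1-1\}\cup\{1,\ldots,n_2-1\}\cup\{n_1\}$, the domination holds for every threshold, and equality forces $n_2=1$, i.e.\ $T^{cat}_n$ itself, so the strictness bookkeeping closes exactly as you anticipate. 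One caution: your general-$k$ phrasing of the merge claim involves sequences of different lengths, so it only makes sense with padding; but since your formalized plan handles non-binary trees by first resolving multifurcations, only the $k=2$ case is actually needed. That resolution step is the same local construction the paper uses inside Lemma \ref{Lem:H_seq_caterpillar}; the difference is that the paper compares height sequences (no value decreases) and pays for the caterpillar's extra entries with $1$-positivity inside the theorem's proof, whereas you apply $1$-positivity immediately to conclude that each resolution strictly increases $\Phi^{\mathcal{H}}_f$ and then quote the binary case --- equivalent in substance. What each approach buys: the paper's relocation lemma is reused later (Corollary \ref{Cor:H_seq_relocate_maxcherry_to_leaf} and the analysis of minimizing trees), so it earns its keep beyond this theorem, while your induction is more self-contained and arguably more elementary, but delivers only the maximization statement.
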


\begin{Rem}
    Notice that $f$ being $1$-positive is necessary to ensure that the caterpillar maximizes the metaconcept among all trees, not just binary ones. To see this, consider the strictly increasing but not $1$-positive function $f(x) = -\frac{1}{x}$. In this case, we have
    \[\Phi^{\mathcal{H}}_{f}\left(T^{cat}_4\right) = -\frac{1}{3} -\frac{1}{2} -1 = -\frac{11}{6} < -1 = \Phi^{\mathcal{H}}_{f}\left(T^{star}_4\right),\]
    and
    \[\Phi^{\mathcal{H}}_{f}\left(T^{fb}_2\right) = -\frac{1}{2} + 2\cdot(-1) = -\frac{5}{2} < -\frac{11}{6} = \Phi^{\mathcal{H}}_{f}\left(T^{cat}_4\right).\]
    This example shows that  for this choice of $f$ and $n = 4$, the caterpillar does not maximize the height metaconcept among all trees, but it does maximize it among all binary trees.
\end{Rem}

To prove Theorem \ref{Theo:cat_Max_H}, we will rely on the next two lemmas. The first analyzes how the height sequence is affected by a specific cherry relocation, a step that will later play a role in obtaining the caterpillar as the unique maximizing tree for the HM for certain functions $f$.

\begin{Lem}
\label{Lem:H_seq_relocate_cherry_to_cherry}
Let $n \geq 4$ and let $T \in \BT$ be a binary tree that is not the caterpillar; i.e., $T$ has at least two cherries. Let $[x,y]$ be a cherry in $T$ of maximal depth, and let $T'$ be the binary tree obtained from $T$ by relocating an arbitrary cherry $[\widetilde{x},\widetilde{y}] \neq [x,y]$ of $T$ to the leaf $x$. Then
\[\mathcal{H}(T)_i \leq \mathcal{H}(T')_i \text{ for all } i \in \left\{1,2,\ldots,n-1\right\}\]
and
\[\mathcal{H}(T)_i < \mathcal{H}(T')_i \text{ for at least one } i \in \left\{1,2,\ldots,n-1\right\}.\]
\end{Lem}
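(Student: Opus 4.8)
The plan is to track the height sequence through the cherry relocation by comparing the multisets of height values vertex-by-vertex. Write $T = (T_{v_1}, T_{v_2})$ in standard decomposition and recall from Observation~\ref{Obs:H_recursiveness} that $\mathcal{H}(T)$ is obtained by merging the height sequences of the maximal pending subtrees and appending $h(T) = \max_j h(T_{v_j}) + 1$. More usefully, for \emph{every} inner vertex $w$ of $T$, its height value $h_T(w)$ equals the length of the longest downward path from $w$; so I want to understand how this quantity changes for each inner vertex when we delete the cherry $[\widetilde x,\widetilde y]$ (turning its parent $\widetilde p$ into a leaf and possibly suppressing the resulting out-degree-one vertex) and re-attach a cherry at $x$.

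First I would set up notation carefully: let $p$ be the parent of the maximal-depth cherry $[x,y]$, so $\delta_T(p) = h(T) - 1$ and $h_T(p) = 1$; let $\widetilde p$ be the parent of $[\widetilde x,\widetilde y]$ and $\widetilde g$ its grandparent. In $T'$, the vertex $x$ becomes a cherry parent (a new inner vertex of height value $1$), $p$ becomes the parent of a vertex of height value $1$ and of the leaf $y$, so $h_{T'}(p) = 2$, and $\widetilde p$ disappears (it is deleted and its former sibling subtree is reconnected to $\widetilde g$). The key structural point is that the set of inner vertices of $T'$ is $(\mathring V(T)\setminus\{\widetilde p\})\cup\{x\}$, a bijection that preserves cardinality $n-1$; so I can compare $\mathcal{H}(T)$ and $\mathcal{H}(T')$ by pairing $\widetilde p \leftrightarrow x$ and pairing every other inner vertex with itself, then arguing that height values are non-decreasing under this pairing.

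For the monotonicity I would argue as follows. The removed vertex $\widetilde p$ had $h_T(\widetilde p) = 1$; its replacement $x$ has $h_{T'}(x) = 1$, so that pair contributes equally. For an inner vertex $w$ that is \emph{not} an ancestor of $x$ or of $\widetilde p$ in $T$, its pending subtree is untouched, so $h_{T'}(w) = h_T(w)$. For an ancestor $w$ of $\widetilde p$ that is not an ancestor of $x$: deleting the cherry can only shorten paths through $\widetilde p$, but $\widetilde p$ itself only had height value $1$, and the sibling subtree that got reattached one level higher still sits below $w$; a short case check (using that $w$'s subtree may still have its longest path going elsewhere, and that when $\widetilde p$ lay on the longest path the loss is at most compensated because $x$ is at maximal depth) shows $h_{T'}(w)\ge h_T(w)$ — actually the cleaner route is to note that $x$ lies at maximal depth, so attaching a cherry there \emph{creates} a new deepest leaf, and every ancestor of $x$ (which includes $p$ and the root) gets height value at least as large as before; combined with the fact that the only vertex whose subtree genuinely lost a leaf is along the $\widetilde p$-branch where the height value was already the minimum possible, no height value decreases. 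Then the strict-increase claim: the root satisfies $h_{T'}(\rho) = h(T') = h(T) + 1 > h(T) = h_T(\rho)$, because the relocated cherry now hangs below a maximal-depth cherry, increasing the overall height by exactly one; hence at least one entry strictly increases.

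The main obstacle I anticipate is the ancestors-of-$\widetilde p$ case: when $\widetilde p$ happened to lie on the unique longest root-to-leaf path of some ancestor $w$ that is not an ancestor of $x$, deleting the cherry could in principle drop $h_{T'}(w)$ below $h_T(w)$. I expect this is ruled out by the maximal-depth hypothesis on $[x,y]$ together with the sorting of the sequence: even if an \emph{individual} vertex's height value drops, after sorting both sequences the $i$-th smallest entry cannot decrease, because the total gain at $\rho$ (and at all ancestors of $x$) dominates; so the argument should be phrased in terms of the sorted sequences (as in Lemma~\ref{Lem:Seq_entries_Min_Max_meta_binary}) rather than a naive vertex-to-vertex pairing. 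Concretely, I would show that for every threshold $t$, the number of inner vertices of $T'$ with height value $\le t$ is at most the corresponding count for $T$ (with strict inequality for $t = h(T)$), which is equivalent to the stated entrywise domination of the ascending sequences; this reduces the delicate case analysis to counting, where the maximal-depth choice of $[x,y]$ makes the bookkeeping go through.
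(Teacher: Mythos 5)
There is a genuine gap. First, the intermediate claim in your second paragraph --- that under the vertex pairing $\widetilde p\leftrightarrow x$, identity elsewhere, ``no height value decreases'' --- is false. If $w$ is a proper ancestor of $\widetilde p$ that is \emph{not} an ancestor of $x$, and $[\widetilde x,\widetilde y]$ is the unique deepest cherry in $T_w$, then $h_{T'}(w)=h_T(w)-1$: the gain created by attaching a cherry at $x$ is irrelevant to $w$ because $x$ does not lie below $w$. (In the paper's Figure \ref{Fig:H_relocate_deepest_cherry} the vertices $\widetilde v_4,\widetilde v_5$ are exactly of this type.) So a naive vertex-to-vertex comparison cannot work, and your own final paragraph concedes this.

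Second, the fallback you then propose --- compare the sorted sequences, equivalently show that for every threshold $t$ the number of inner vertices of $T'$ with height value at most $t$ does not exceed the corresponding count in $T$ --- is indeed the right formulation, but you never carry it out; ``I expect the bookkeeping goes through'' is precisely the step that constitutes the lemma. What has to be shown is that the decreased values are confined to a terminal subpath $\widetilde v_l,\dots,\widetilde v_j=\widetilde v$ of the path from $u=LCA_T(v,\widetilde v)$ to $\widetilde v$ (where $v,\widetilde v$ are the parents of $[x,y]$ and $[\widetilde x,\widetilde y]$), that each such value drops by exactly one (from $j-l+1,\dots,2$ in $T$ to $j-l,\dots,1$ in $T'$), and that these losses are absorbed by the path from $u$ to $v$, whose $T$-heights $k,k-1,\dots,1$ all rise by one in $T'$; here the maximal-depth hypothesis enters crucially through $\delta_T(v)\ge\delta_T(\widetilde v)$, i.e.\ $k\ge j$, which guarantees the gaining path is at least as long as the losing one, and through $h(T')=h(T)+1$. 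Making this matching explicit is exactly the content of the paper's six-group table (Table \ref{Tab:HeightChange}); without it, neither the threshold inequality nor the role of the maximal-depth assumption is actually established. (Your strictness argument via the root is fine once the entrywise dominance is in place, since the last entry of the sequence is the tree height and $h(T')=h(T)+1$.)
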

\begin{proof}
Let $n \geq 4$ and let $T \in \BT$ be a binary tree with $T \neq T^{cat}_n$. Let $[x,y]$ be a cherry in $T$ of maximal depth, and let $T'$ be the binary tree obtained from $T$ by relocating an arbitrary cherry $[\widetilde{x},\widetilde{y}] \neq [x,y]$ of $T$, which must exist as $T \neq T^{cat}_n$ and thus has more than one cherry, to the leaf $x$. Note that in $T'$, the cherry $[\widetilde{x},\widetilde{y}]$ is the unique cherry of maximal depth; for an illustration, see Figure \ref{Fig:H_relocate_cherry}.

\begin{figure}[htbp]
    \centering
    \includegraphics[width=\textwidth]{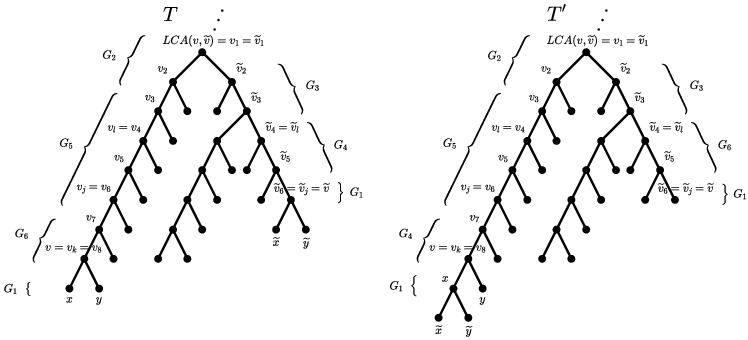}
    \caption{Example of trees $T$ and $T'$ as described in the proof of Lemma \ref{Lem:H_seq_relocate_cherry_to_cherry}. Here, $k=8$, $j=6$, and $l=4$, and the vertices are grouped into six groups $G_1, \ldots, G_6$. Recall that the cherry $[\widetilde{x},\widetilde{y}]$ can be chosen arbitrarily, as long $[\widetilde{x},\widetilde{y}] \neq [x,y]$.}
    \label{Fig:H_relocate_cherry}
\end{figure}

We now show that every entry of the height sequence of $T'$ is at least as large as the corresponding entry in $T$, with at least one entry strictly larger. This establishes the claim of the lemma.

Let $v$ be the parent of $[x,y]$ and $\widetilde{v}$ be the parent of $[\widetilde{x},\widetilde{y}]$ in $T$. Then $\delta_T(v) \geq \delta_T(\widetilde{v})$ as $[x,y]$ is a cherry of maximal depth in $T$.

Let $LCA(v,\widetilde{v}) = v_1, v_2, \ldots, v_{k-1}, v_k = v$ with $k \geq 2$ be the vertices on the path from the lowest common ancestor of $v$ and $\widetilde{v}$ to $v$ in $T$ and $T'$ (see Figure \ref{Fig:H_relocate_cherry}). Then, for $i=1, \ldots, k$, we have $h_{T}(v_i) = k-i+1$ and $h_{T'}(v_i) = k-i+2$, since $[x,y]$ is a cherry of maximal depth in $T$. Moreover, $h_{T}(x) = 0$ and $h_{T'}(x) = 1$. Thus, all height values on the path from $x$ to $v_1$ increase by one when passing from $T$ to $T'$.

On the other hand, let $LCA(v,\widetilde{v}) = \widetilde{v}_1, \widetilde{v}_2, \ldots, \widetilde{v}_{j-1}, \widetilde{v}_j = \widetilde{v}$ with $2 \leq j \leq k$ be the vertices on the path from the lowest common ancestor of $v$ and $\widetilde{v}$ to $\widetilde{v}$ in $T$ and $T'$ (see Figure \ref{Fig:H_relocate_cherry}). Then, $h_{T}(\widetilde{v}_j) = 1$ and $h_{T'}(\widetilde{v}_j) = 0$. Let $\widetilde{v}_l$ be the vertex of minimal depth whose height value decreases from $T$ to $T'$; that is, all ancestors of $\widetilde{v}_l$ (excluding $\widetilde{v}_l$) have the same height value in $T'$ as in $T$. Note that such a vertex $\widetilde{v}_l$ exists as the height value of $\widetilde{v_j}$ decreases from $1$ in $T$ to $0$ in $T'$. Since the height value of $\widetilde{v}_l$ decreases from $T$ to $T'$, the cherry $[\widetilde{x},\widetilde{y}]$ must be the unique cherry of maximal depth in $T_{\widetilde{v}_l}$ (and thus in all pending subtrees of all descendants of $\widetilde{v}_l$), otherwise the height of $T_{\widetilde{v}_l}$ would remain unchanged. Consequently, the height values of all descendants of $\widetilde{v}_l$ decrease by one after the cherry relocation. For an example see Figure \ref{Fig:H_relocate_cherry}, where $\widetilde{v}_l = \widetilde{v}_4$.

Note that $l$ can take any value between $2$ and $j$. Thus, $h_{T}(\widetilde{v}_i) = h_{T'}(\widetilde{v}_i)$ for $2 \leq i \leq l-1$. If $l \neq j$, the exact height values of the vertices $\widetilde{v}_{j-1}, \widetilde{v}_{j-2}, \ldots, \widetilde{v}_{l}$ have not been given yet. In the case that $l \neq j$, these vertices attain in $T$ the height values $2, 3, \ldots, j-l+1$ and in $T'$ the height values $1, 2, \ldots, j-l$.

We  summarize all height values in $T$ and $T'$ in Table \ref{Tab:HeightChange}, grouping the vertices into six groups, $G_1, \ldots, G_6$, corresponding to the notation in Figure \ref{Fig:H_relocate_cherry}.

\begin{table}[htbp]
    \small
	\centering
	\caption{Overview of the different groups of vertices described in the proof of Lemma~\ref{Lem:H_seq_relocate_cherry_to_cherry}. Note that some groups do not exist for certain combinations of $j,k,l$. For example, $G_3$ does not exist if $l = 2$ as there, the indices range from $2$ to $l-1$.}
	\label{Tab:HeightChange}
	\begin{tabular}{ c | l || c | l}
	\textbf{Group} & \textbf{Height values} & \textbf{Group} & \textbf{Height values}\\
	\hline \hline
    $G_1$ &  $\begin{aligned}
            h_{T}(x) &= 0 = h_{T'}(\widetilde{v}_j)\\
            h_{T}(\widetilde{v}_j) &= 1 = h_{T'}(x)
        \end{aligned}$

    &$G_4$ & $\begin{aligned}
            h_{T}(\widetilde{v}_l) &= j-l+1 = h_{T'}(v_{k-j+l+1})\\
            h_{T}(\widetilde{v}_{l+1}) &= j-l = h_{T'}(v_{k-j+l+2})\\
            &\vdots\\
            h_{T}(\widetilde{v}_{j-2}) &= 3 = h_{T'}(v_{k-1})\\
            h_{T}(\widetilde{v}_{j-1}) &= 2 = h_{T'}(v_k)\\
        \end{aligned}$ \\
    \hline
    $G_2$ & $\begin{aligned}
            h_{T}(v_1) &= k < k+1 = h_{T'}(v_1)\\
            h_{T}(v_2) &= k-1 < k = h_{T'}(v_2)
        \end{aligned}$

    &$G_5$ & $\begin{aligned}
            h_{T}(v_3) &= k-2 < k-1 = h_{T'}(v_3)\\
            h_{T}(v_4) &= k-3 < k-2 = h_{T'}(v_4)\\
            &\vdots\\
            h_{T}(v_{k-j+l}) &= j-l+1 < j-l+2 = h_{T'}(v_{k-j+l})\\
        \end{aligned}$ \\
    \hline 
    $G_3$ & $\begin{aligned}
            h_{T}(\widetilde{v}_2) &= h_{T'}(\widetilde{v}_2)\\
            h_{T}(\widetilde{v}_3) &= h_{T'}(\widetilde{v}_3)\\
            &\vdots\\
            h_{T}(\widetilde{v}_{l-1}) &= h_{T'}(\widetilde{v}_{l-1})\\
        \end{aligned}$

    &$G_6$ & $\begin{aligned}
            h_{T}(v_{k-j+l+1}) &= j-l = h_{T'}(\widetilde{v}_l)\\
            h_{T}(v_{k-j+l+2}) &= j-l-1 = h_{T'}(\widetilde{v}_{l+1})\\
            &\vdots\\
            h_{T}(v_k) &= 1 = h_{T'}(\widetilde{v}_{j-1})\\
        \end{aligned}$
\end{tabular}
\end{table}

As we can see, for every height value in $T$, there exists a height value in $T'$ (possibly corresponding to a different vertex) that is at least as large. Importantly, the height value of the lowest common ancestor of $v$ and $\widetilde{v}$, $LCA(v,\widetilde{v}) = v_1 = \widetilde{v}_1$, and the height value of $v_2$ both increase strictly by one.

Furthermore, the height values of all vertices on the path from $LCA(v,\widetilde{v})$ to the root $\rho$ (if any) increase by one. All other vertices are not affected by the relocation.

Taken together, when comparing the height sequences, for every height value in $T$ there exists a height value in $T'$ that is at least as large, and at least one value in $T'$ is strictly larger than the corresponding value in $T$. This completes the proof.
\end{proof}

Next, we consider the reverse cherry relocation of the one described in the previous lemma. This corollary will be needed later on to analyze the minimizing trees.

\begin{Cor}
\label{Cor:H_seq_relocate_maxcherry_to_leaf}
    Let $T' \in \BT$ be a binary tree with the following two properties: (1) $T'$ has a unique cherry $[\widetilde{x},\widetilde{y}]$ of maximal depth $k \geq 3$, and (2) there exists a leaf $l \neq \widetilde{x},\widetilde{y}$ of depth $\delta_l \leq k-2$ in $T'$. Let $T$ be the binary tree obtained from $T'$ by relocating $[\widetilde{x},\widetilde{y}]$ to $l$. Then
    \[\mathcal{H}(T)_i \leq \mathcal{H}(T')_i \text{ for all } i \in \left\{1,2,\ldots,n-1\right\}\]
    and
    \[\mathcal{H}(T)_i < \mathcal{H}(T')_i \text{ for at least one } i \in \left\{1,2,\ldots,n-1\right\}.\]
\end{Cor}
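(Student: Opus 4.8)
The plan is to derive this corollary directly from Lemma~\ref{Lem:H_seq_relocate_cherry_to_cherry} by identifying $T$ and $T'$ here with the trees in that lemma. The key observation is that the cherry relocation described here is \emph{almost} the reverse of the one in the lemma, so I want to set things up so that Lemma~\ref{Lem:H_seq_relocate_cherry_to_cherry}, applied to $T$, produces exactly $T'$. Concretely, starting from $T'$, relocating the unique deepest cherry $[\widetilde x,\widetilde y]$ to the leaf $l$ yields $T$; I then argue that, in $T$, the cherry sitting at leaf $l$ (the image of $[\widetilde x, \widetilde y]$) has been attached to a leaf of depth $\delta_l \le k-2$, so its parent has depth $\delta_l+1 \le k-1$, and its leaves have depth $\delta_l + 2 \le k$. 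Meanwhile, the cherry in $T$ corresponding to the old parent of $[\widetilde x,\widetilde y]$ now sits at depth $k-1$ (since $[\widetilde x,\widetilde y]$ was removed, the grandparent-to-parent edge collapses, so what was at depth $k-1$ becomes a cherry). Hence $T$ has at least two cherries, i.e.\ $T \neq T^{cat}_n$, and one of its cherries of maximal depth is the one at depth $k-1$ that replaced the old parent of $[\widetilde x,\widetilde y]$.

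Next I would invoke Lemma~\ref{Lem:H_seq_relocate_cherry_to_cherry} with $T$ in the role of "$T$", with $[x,y]$ taken to be a cherry of maximal depth in $T$ that is \emph{not} the one currently at leaf $l$, and with the relocated cherry $[\widetilde x, \widetilde y]$ of the lemma taken to be precisely the cherry of $T$ currently sitting at leaf $l$ (the image of $[\widetilde x,\widetilde y]$). Relocating that cherry from leaf $l$ back onto the leaf $x$ recovers a tree with a unique deepest cherry; I then need to check that this recovered tree is isomorphic to $T'$. This requires care: the lemma relocates the cherry to \emph{a} deepest cherry $[x,y]$ of $T$, and I must ensure that this choice reconstructs the original $T'$. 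Since $T'$ had a \emph{unique} deepest cherry $[\widetilde x,\widetilde y]$ of depth $k$, and $T$ is obtained from $T'$ by detaching that cherry and reattaching it at $l$, the tree $T$ has (up to isomorphism) a well-defined "parent leaf'' $x_0 := $ the leaf of $T$ at the position where $\widetilde v$'s parent was, i.e.\ at depth $k-1$; reattaching $[\widetilde x,\widetilde y]$ there gives back $T'$. So I should apply the lemma with $[x,y]$ chosen as the cherry of $T$ at maximal depth $k-1$ whose parent-leaf $x_0$, after reattachment, reproduces $T'$ — and I must confirm such a choice is legitimate, i.e.\ that $[x,y]$ is indeed at maximal depth in $T$ and is distinct from the cherry at $l$.

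The conclusion then follows immediately: Lemma~\ref{Lem:H_seq_relocate_cherry_to_cherry} gives $\mathcal H(T)_i \le \mathcal H(T'')_i$ for all $i$ with strict inequality somewhere, where $T''$ is the tree obtained from $T$ by the lemma's relocation; and by the reconstruction argument $T'' \cong T'$, so $\mathcal H(T'') = \mathcal H(T')$, yielding exactly the claimed inequalities.

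The main obstacle I anticipate is the bookkeeping in the reconstruction step — verifying that applying Lemma~\ref{Lem:H_seq_relocate_cherry_to_cherry} to $T$ with the right choice of deepest cherry $[x,y]$ genuinely recovers $T'$ and not some other tree with the same height sequence. This hinges on two facts: (1) the uniqueness of the deepest cherry in $T'$ (hypothesis~(1)) pins down where the cherry must be reattached, and (2) the depth bound $\delta_l \le k-2$ (hypothesis~(2)) guarantees that after reattaching $[\widetilde x,\widetilde y]$ at $l$, the leaf $l$'s cherry does \emph{not} become a (co-)deepest cherry, so that in $T$ the deepest cherries are exactly the ones inherited from the structure around $\widetilde v$ in $T'$, making the reverse relocation unambiguous up to isomorphism. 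Once these two points are nailed down, the rest is a direct citation of the lemma. If the reconstruction turns out to be fussier than expected, the fallback is to simply repeat the height-value comparison argument of Lemma~\ref{Lem:H_seq_relocate_cherry_to_cherry} verbatim in reverse, tracking the same six groups $G_1,\dots,G_6$ of vertices but with the roles of $T$ and $T'$ swapped — which works precisely because the correspondence in Table~\ref{Tab:HeightChange} is between multisets of height values and is symmetric under interchanging $T$ and $T'$ together with reversing all the displayed inequalities.
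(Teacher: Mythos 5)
Your overall route is exactly the paper's: the published proof simply notes that the hypotheses force $n\geq 4$ and that the relocation in the corollary is the precise reverse of the one in Lemma~\ref{Lem:H_seq_relocate_cherry_to_cherry}, so that lemma applies directly; your reconstruction argument is the right way to make this precise. Two details in your write-up are off, though. First, the depth bookkeeping does not match the paper's conventions: attaching a cherry to the leaf $l$ makes $l$ itself the cherry parent (depth $\delta_l$) with new leaves at depth $\delta_l+1\leq k-1$, not depths $\delta_l+1$ and $\delta_l+2$; and deleting $[\widetilde{x},\widetilde{y}]$ turns its parent $\widetilde{v}$ (depth $k-1$) into a leaf, whose sibling must already be a leaf by uniqueness of the depth-$k$ cherry, so $T$ contains a cherry with leaves at depth $k-1$, which is the maximal leaf depth of $T$. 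Second, your ``fact (2)'' --- that $\delta_l\leq k-2$ prevents the reattached cherry at $l$ from being (co-)deepest in $T$ --- is false in the boundary case $\delta_l=k-2$ (its leaves then also lie at depth $k-1$), and other cherries of $T'$ at depth $k-1$ may be co-deepest as well; this boundary case genuinely occurs in the paper's own application of the corollary inside Lemma~\ref{Lem:T_Min_H_properties}. Fortunately, no such unambiguity is needed: Lemma~\ref{Lem:H_seq_relocate_cherry_to_cherry} allows an arbitrary choice of a maximal-depth cherry $[x,y]$ and an arbitrary other cherry to relocate, so choosing $[x,y]$ to be the cherry formed by $\widetilde{v}$ and its sibling and relocating the cherry rooted at $l$ back onto $\widetilde{v}$ reproduces $T'$ exactly, and the lemma then gives the claimed inequalities. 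With these repairs your argument is sound and coincides with the paper's.
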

\begin{proof}
    From the fact that there exists a unique cherry of maximal depth $k \geq 3$, it follows immediately that $n \geq 4$. The proof then follows directly from Lemma \ref{Lem:H_seq_relocate_cherry_to_cherry}, since the cherry relocation described here is precisely the reverse of the relocation considered in that lemma. This is because, in Lemma \ref{Lem:H_seq_relocate_cherry_to_cherry}, the relocated cherry is chosen arbitrarily, and after its relocation, it becomes the unique cherry of maximal depth.
\end{proof}

Next, we present another lemma needed for the proof of Theorem \ref{Theo:cat_Max_H}. In this lemma, we compare the height sequence of the caterpillar with that of an arbitrary tree. We sort the height sequence in descending order, which simplifies the notation when the other tree is not binary.

\begin{Lem}
\label{Lem:H_seq_caterpillar}
    Let $\mathcal{H}^d(T)$ denote the height sequence of a tree $T$ in descending  (instead of ascending) order. For all trees $T \in \T \setminus \left\{T^{cat}_n\right\}$, we have
    \[\mathcal{H}^d(T)_i \leq \mathcal{H}^d\left(T^{cat}_n\right)_i \text{ for all } i \in \left\{1,2,\ldots,|\mathring{V}(T)|\right\}.\]
     If $T$ is binary, then
    \[\mathcal{H}^d(T)_i < \mathcal{H}^d\left(T^{cat}_n\right)_i \text{ for at least one } i \in \left\{1,2,\ldots,n-1\right\}.\]
\end{Lem}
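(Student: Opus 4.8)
The plan is to compare the caterpillar's height sequence, sorted in descending order, with that of an arbitrary tree $T \in \T \setminus \{T^{cat}_n\}$ entry by entry. First I would record the height sequence of the caterpillar $T^{cat}_n$ with $n$ leaves: its inner vertices have height values exactly $1, 2, \ldots, n-1$, so that $\mathcal{H}^d(T^{cat}_n) = (n-1, n-2, \ldots, 2, 1)$, i.e.\ $\mathcal{H}^d(T^{cat}_n)_i = n-i$ for $i = 1, \ldots, n-1$. The key structural fact I would isolate is that, for \emph{any} tree $T \in \T$, the $i$-th largest height value among its inner vertices is at most $n-i$. This is the heart of the argument, and I would prove it by the following counting observation: if a tree has an inner vertex $v$ with height value $h_v = m$, then $T_v$ contains a path of $m$ edges from $v$ down to some leaf, and every vertex on that path (other than the leaf) is an inner vertex of $T$ with a strictly smaller height value — more precisely, the inner vertices on a root-to-deepest-leaf path realizing a given height value have pairwise distinct height values $m, m-1, \ldots, 1$.

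More carefully, I would argue as follows. Fix $i \in \{1, \ldots, |\mathring V(T)|\}$ and let $m = \mathcal{H}^d(T)_i$ be the $i$-th largest height value, attained at some inner vertex $v$. There are at least $i$ inner vertices with height value $\geq m$ (those accounting for the first $i$ positions in the descending sequence). On the other hand, descending from $v$ along a path of length $m$ to a leaf at depth $h(T_v) = m$ below $v$, the $m$ vertices $v = u_0, u_1, \ldots, u_{m-1}$ on this path (excluding the final leaf) are all inner vertices, and $h_T(u_t) \geq m - t$; in particular $u_1, \ldots, u_{m-1}$ are $m-1$ further inner vertices, each with positive height value, whose height values are all $< m$ is \emph{not} quite what I need — instead I need a lower bound on the \emph{total} number of inner vertices. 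The cleaner route: the total number of inner vertices of $T$ is at most $n-1$, with equality iff $T$ is binary. Among these at most $n-1$ inner vertices, at least $i$ have height value $\geq m$ (the top $i$), and additionally the path descending from $v$ forces the existence of inner vertices with height values $m-1, m-2, \ldots, 1$ that are proper descendants of $v$ and hence distinct from any inner vertex with height value $\geq m$. That gives at least $i + (m-1)$ distinct inner vertices, so $i + m - 1 \leq n - 1$, i.e.\ $m \leq n - i$, which is exactly $\mathcal{H}^d(T)_i \leq n - i = \mathcal{H}^d(T^{cat}_n)_i$.

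For the strict inequality in the binary case, I would use that for binary $T \neq T^{cat}_n$ there are exactly $n-1$ inner vertices, so the inequality $i + m - 1 \leq n-1$ above can fail to be strict for every $i$ only if the at-least-$(i+m-1)$ count is tight for all $i$ — which would force every inner vertex to lie either among the top-$i$ set or on the descending path, for every $i$. I would derive a contradiction from this: a non-caterpillar binary tree has at least two cherries, hence at least two inner vertices of height value $1$, whereas the caterpillar's descending sequence has the value $1$ only once (at position $n-1$); tracking where a ``second'' inner vertex of height $1$ can sit in the sorted sequence shows some entry of $\mathcal{H}^d(T)$ must be strictly smaller than the corresponding caterpillar entry. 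Concretely: since $\mathcal{H}^d(T)$ and $\mathcal{H}^d(T^{cat}_n)$ are both length-$(n-1)$ sequences with entrywise $\mathcal{H}^d(T)_i \leq \mathcal{H}^d(T^{cat}_n)_i$, if they were equal then $T$ would have the same height sequence as the caterpillar, namely all distinct values $1, \ldots, n-1$; but a non-caterpillar binary tree has a repeated height value (two cherries give two $1$'s), a contradiction.

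The main obstacle I anticipate is handling the non-binary case cleanly for the first inequality: when $T$ is not binary the number of inner vertices can be as small as $1$ (the star), so the bound ``$\leq n-1$ inner vertices'' is generous and the counting still goes through, but I must be careful that the indices $i$ for which $\mathcal{H}^d(T)_i$ is defined only run up to $|\mathring V(T)| \leq n-1$, so there is never an ``out of range'' comparison — the statement only claims the inequality for $i \leq |\mathring V(T)|$. I would also double-check the edge cases $n = 1$ (empty sequences, vacuous) and $n = 2$ (only the star/caterpillar coincide, so $T \neq T^{cat}_n$ is impossible and the claim is vacuous), and make sure the path-counting argument correctly treats the possibility that the deepest leaf below $v$ coincides with a leaf that is a descendant of a top-$i$ vertex — it cannot, because that leaf's proper ancestors below $v$ all have height value $< m$ by construction, so they are genuinely outside the top-$i$ set.
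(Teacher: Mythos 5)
Your argument is correct, but it takes a genuinely different route from the paper. The paper proves the lemma in two reduction steps: it first resolves each multifurcation by inserting a new vertex, showing no height value decreases, which reduces the claim to binary trees; it then transforms any non-caterpillar binary tree into the caterpillar by repeatedly relocating a cherry onto a leaf of a deepest cherry and invokes Lemma~\ref{Lem:H_seq_relocate_cherry_to_cherry}, which says each such relocation weakly increases every entry of the height sequence and strictly increases at least one. Your proof instead establishes the universal bound $\mathcal{H}^d(T)_i \leq n-i = \mathcal{H}^d\left(T^{cat}_n\right)_i$ directly by counting distinct inner vertices: the $i$ vertices realizing the top $i$ entries all have height value at least $m = \mathcal{H}^d(T)_i$, while a path from the $i$-th such vertex down to a deepest leaf of its pending subtree supplies $m-1$ further inner vertices (proper descendants, with height values $m-1,\ldots,1$, hence disjoint from the top-$i$ set), giving $i+m-1 \leq |\mathring{V}(T)| \leq n-1$; strictness in the binary case follows because entrywise equality would force the multiset $\{1,\ldots,n-1\}$ with all values distinct, whereas a non-caterpillar binary tree has at least two cherry parents and hence two height values equal to $1$. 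Your version is more elementary and self-contained (it needs neither the multifurcation-resolution step nor the cherry-relocation machinery), whereas the paper's route leans on Lemma~\ref{Lem:H_seq_relocate_cherry_to_cherry}, which it needs anyway for Corollary~\ref{Cor:H_seq_relocate_maxcherry_to_leaf} and the analysis of minimizers, so the extra machinery is amortized there; do clean up the mid-paragraph false start about the path vertices (the final counting formulation is the one that works) before writing it up.
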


We remark that the proof of Lemma~\ref{Lem:H_seq_caterpillar} follows the same principle as that of \cite[Proposition 3.15]{Fischer2025}.

\begin{proof}[Proof of Lemma~\ref{Lem:H_seq_caterpillar}]
    First, assume that $T \in \T \setminus \BT$. We now transform $T$ into a binary tree without decreasing the height values of its vertices. By the choice of $T$, there exists an inner vertex with at least $3$ children; denote one such vertex by $u_1$, and let its children be $v_1, \ldots, v_k$ with $k \geq 3$. We construct a tree $T'$ from $T$ as follows: delete the edges $(u_1,v_i)$ for all $2 \leq i \leq k$, add a new vertex $u_0$, and add the edges $(u_1,u_0)$ and $(u_0,v_i)$ for all $2 \leq i \leq k$. For an illustration, see Figure \ref{Fig:H_caterpillar_T_to_binary}. Repeating this procedure iteratively yields a binary tree.

    Comparing the height values of the vertices of $T$ and $T'$, we observe that in $T'$ either $u_1$ or $u_0$ has the same height value as $u_1$ in $T$. Moreover, all other vertices either retain their height value or have their height value increased by one; in particular, no vertex decreases its height value.

    Thus, it remains to prove the statement for binary trees.

    \begin{figure}[htbp]
        \centering
        \includegraphics[scale=2]{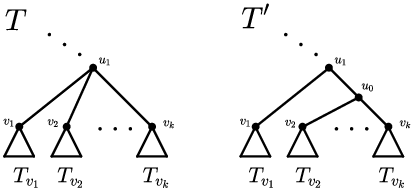}
        \caption{$T$ and $T'$ as described in the proof of Lemma \ref{Lem:H_seq_caterpillar} and Theorem \ref{Theo:B1_extrema}.}
        \label{Fig:H_caterpillar_T_to_binary}
    \end{figure}

    If $T \in \BT \setminus \{T^{cat}_n\}$, then $n \geq 4$, and we can transform $T$ into the caterpillar by repeatedly relocating a cherry of $T$ to a leaf of a cherry with maximal depth, as described in Lemma \ref{Lem:H_seq_relocate_cherry_to_cherry}. The statement then follows directly from Lemma \ref{Lem:H_seq_relocate_cherry_to_cherry}.
\end{proof}

We are now in a position to prove Theorem \ref{Theo:cat_Max_H}.

\begin{proof}[Proof of Theorem \ref{Theo:cat_Max_H}]\leavevmode
    There are three statements to prove. We first consider the second and third. 
    \begin{enumerate}
        \item[2.] Let $T \in \T \setminus \BT$; in particular, $n \geq 3$. Then $T$ has $|\mathring{V}(T)| < n-1$ inner vertices. Moreover, let $f$ be an increasing and $1$-positive function, and let $\mathcal{H}^d$ be the height sequence of a tree in descending order. Then,
        \begin{align*}
            \Phi^{\mathcal{H}}_{f}\left(T^{cat}_n\right) &= \sum\limits_{i = 1}^{|\mathring{V}(T)|} f\left(\mathcal{H}^d\left(T^{cat}_n\right)_i\right) + \sum\limits_{i = |\mathring{V}(T)|+1}^{n-1} f\left(\mathcal{H}^d\left(T^{cat}_n\right)_i\right)\\
            &\stackrel{f \text{ $1$-pos.}}{>} \sum\limits_{i = 1}^{|\mathring{V}(T)|} f\left(\mathcal{H}^d\left(T^{cat}_n\right)_i\right) \geq \sum\limits_{i = 1}^{|\mathring{V}(T)|} f\left(\mathcal{H}^d(T)_i\right) = \Phi^{\mathcal{H}}_{f}(T),
        \end{align*}
        where the last inequality follows from Lemma \ref{Lem:H_seq_caterpillar} and the fact that $f$ is increasing.
    
        \item[3.] Now, let $f$ be a (strictly) increasing function and let $T \in \BT \setminus \{T^{cat}_n\}$ be a binary tree with $n$ leaves. In this case, the statement follows directly from Lemma \ref{Lem:Seq_entries_Min_Max_meta_binary}, Part 2(b), and Lemma \ref{Lem:H_seq_caterpillar}.
    
        \item[1.] This statement is a direct consequence of the two previous parts.
    \end{enumerate}
    This completes the proof.
\end{proof}

Next, we analyze which trees minimize the HM on $\T$ and $\BT$, respectively.

\subsubsection{Complete characterization of all trees minimizing the HM \texorpdfstring{$\Phi^{\mathcal{H}}_f$}{PhiHf}} \label{subsubsec:complete}

In this section, we first discuss the minimization of the HM on $\T$ and then turn to the minimization on $\BT$. We begin by identifying general properties of minimizing trees, which will be needed to prove Theorem \ref{Theo:Min_H_gfb_characterization}. This theorem provides a complete characterization of the trees minimizing the HM for all increasing functions $f$: the gfb-tree is always a minimizer, and every other minimizing tree has the same height sequence as the gfb-tree. We then focus on the number of minimizing trees. In particular, we show that there is a unique minimizing tree (namely, the gfb-tree) for the HM if and only if the leaf number $n$ can be expressed as a difference of two powers of two, i.e., $n = 2^h - 2^i$ for some integers $h > i \geq 0$. Finally, we conclude the analysis on $\BT$ by presenting several examples of minimizing trees other than the gfb-tree.

We begin with the unique minimization of the star tree on $\T$.

\begin{Prop}
\label{Prop:star_Min_on_T}
Let $T^{star}_n$ be the star tree on $n$ leaves and let $f$ be a $1$-positive (but not necessarily increasing) function. Then the star tree is the unique tree minimizing the height metaconcept $\Phi^{\mathcal{H}}_f$ on $\T$.
\end{Prop}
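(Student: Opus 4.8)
The plan is to show two things: first, that the star tree attains the smallest possible value of $\Phi^{\mathcal{H}}_f$ on $\T$, and second, that no other tree attains this value. The key structural fact is that for $n \geq 2$ the height sequence $\mathcal{H}(T)$ always contains at least one entry, namely the height value $h(T_\rho) = h(T) \geq 1$ of the root, and the star tree $T^{star}_n$ is the \emph{unique} tree whose height sequence has length exactly $1$ (as recalled in the preliminaries, the height sequence has length $1$ if and only if $T$ is a star tree). So for the star tree we get $\Phi^{\mathcal{H}}_f(T^{star}_n) = f(1)$ (its single inner vertex, the root, has height value $1$), while for any non-star tree $T \in \T$ with $n \geq 2$ the sum $\Phi^{\mathcal{H}}_f(T) = \sum_{h \in \mathcal{H}(T)} f(h)$ has at least two summands.

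First I would dispose of the trivial case $n = 1$, where $\T$ contains only the single-vertex tree, which is then vacuously the unique minimizer (the height sequence is empty, so $\Phi^{\mathcal{H}}_f = 0$). Then I would treat $n \geq 2$. Here the crucial point is that $1$-positivity of $f$ means every summand $f(h)$ with $h \geq 1$ is strictly positive. For a non-star tree $T$, write $\mathcal{H}(T) = (h_1, \ldots, h_m)$ with $m = |\mathring{V}(T)| \geq 2$; sorting in ascending order we have $h_1 \leq h_2 \leq \cdots \leq h_m$, and since every inner vertex has height value at least $1$, all $h_i \geq 1$. Moreover the largest entry $h_m$ equals $h(T) \geq 1$ as well. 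Therefore
\[
\Phi^{\mathcal{H}}_f(T) = \sum_{i=1}^{m} f(h_i) \geq f(h_1) + f(h_m) > f(h_m) \geq f(1)?
\]
— wait, this last step is not immediate because $f$ need not be increasing, so $f(h_m) \geq f(1)$ can fail. I would instead argue more carefully: $\Phi^{\mathcal{H}}_f(T) = \sum_{i=1}^m f(h_i)$ with each $f(h_i) > 0$ by $1$-positivity, hence $\Phi^{\mathcal{H}}_f(T) \geq f(h_m) + \sum_{i=1}^{m-1} f(h_i) > f(h_m)$; but I still need to compare with $f(1)$. The clean way: note that $T$ being non-star and $n \geq 2$ forces $h(T) \geq 2$ is \emph{not} guaranteed (e.g.\ $T^{fb}_1$ has height $1$) — so the real handle must be the \emph{number} of summands, not their magnitude. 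I would observe that since $T$ has at least one cherry (every tree with $n \geq 2$ has a cherry parent), there is an inner vertex of height value exactly $1$, so $h_1 = 1$; thus $\Phi^{\mathcal{H}}_f(T) = f(1) + \sum_{i=2}^m f(h_i) \geq f(1) + f(h_2) > f(1) = \Phi^{\mathcal{H}}_f(T^{star}_n)$, using $m \geq 2$ and $1$-positivity for the strict inequality $f(h_2) > 0$. This gives both the minimization and the uniqueness in one stroke.

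The main obstacle — and the reason to be cautious — is precisely that $f$ is only assumed $1$-positive, \emph{not} increasing, so one cannot use any monotonicity argument to compare individual height values; the entire argument must run through counting summands and positivity. The one genuine fact to verify carefully is that every non-star tree with $n \geq 2$ leaves has an inner vertex of height value $1$, i.e.\ contains a cherry parent: this holds because any tree on $n \geq 2$ leaves has a vertex of maximal depth whose parent has all its children as leaves (if $n \geq 2$), giving a cherry parent, which is an inner vertex of height value $1$. Combined with $|\mathring{V}(T)| \geq 2$ for non-star trees, the chain of (in)equalities above completes the proof. I would also note that this proposition, together with part 2 of Lemma~\ref{Lem:Seq_entries_Min_Max_meta_binary} applied to $Seq = \mathcal{H}$, is what underpins the "arbitrary trees" half of Theorem~\ref{Theo:H_imbalance_index}, since it shows $\Phi^{\mathcal{H}}_f$ has a unique minimizer that is moreover the maximally balanced arbitrary tree.
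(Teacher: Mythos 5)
Your final argument is exactly the paper's proof: every tree with $n \geq 2$ leaves has a cherry parent and hence an inner vertex of height value $1$, so for a non-star tree $T$ (which has $|\mathring{V}(T)| \geq 2$) one gets $\Phi^{\mathcal{H}}_f(T) = f(1) + \sum_{i=2}^{|\mathring{V}(T)|} f\left(\mathcal{H}(T)_i\right) > f(1) = \Phi^{\mathcal{H}}_f\left(T^{star}_n\right)$ by $1$-positivity, and your self-correction away from any monotonicity comparison is precisely the right move, since $f$ need not be increasing. (Only your closing aside is off: the paper derives Theorem~\ref{Theo:H_imbalance_index} from Theorem~\ref{Theo:cat_Max_H} and Proposition~\ref{Prop:gfb_unique_Min}, because the imbalance-index definition only constrains minimization on $\BT$, so the star-tree result is not an ingredient there.)
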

\begin{proof}
Let $T^{star}_n$ be the star tree on $n$ leaves and let $f$ be a $1$-positive function. For $n \leq 2$, there exists only one tree, so there is nothing to show. Now let $n \geq 3$. Moreover, let $T \in \T \setminus \left\{T^{star}_n\right\}$ be another tree on $n$ leaves, i.e., $|\mathring{V}(T)| \geq 2$. 

Every tree with at least two leaves has at least one inner vertex whose children are all leaves, i.e., an inner vertex with height value $1$. Therefore,
\[\Phi^{\mathcal{H}}_f\left(T^{star}_n\right) = f(1) \stackrel{f \text{ $1$-pos.}}{<} f(1) + \sum\limits_{i = 2}^{|\mathring{V}(T)|} f\left(\mathcal{H}(T)_i\right) = \Phi^{\mathcal{H}}_f(T).\] 
This completes the proof.
\end{proof}

In what follows, we analyze the binary minimizing trees. Our goal is to prove Theorem \ref{Theo:Min_H_gfb_characterization}, which states that all trees minimizing the HM for all increasing functions $f$ share the same height sequence as the gfb-tree. To establish this result, we first describe several properties that any minimizing binary tree must satisfy. We then present constructions that allow one to obtain minimizing trees with a given number of leaves $n$ from minimizing trees with another number of leaves $m$. These constructions will play a central role in the proof of Theorem \ref{Theo:Min_H_gfb_characterization}, where we use them to obtain a contradiction.

The next lemma plays a crucial role in several of the upcoming proofs, as it transfers a key property of the gfb-tree (see Remark \ref{Rem:gfb_properties}) to all minimizing trees.

\begin{Lem}
\label{Lem:T_Min_H_properties}
Let $T \in \BT$ be a binary tree with $n$ leaves that minimizes the height metaconcept $\Phi^{\mathcal{H}}_f$ for some strictly increasing function $f$. Then:
\begin{itemize}
    \item If $n$ is even, all leaves of $T$ are part of a cherry. Moreover, if $n \geq 4$, then $T$ contains $T^{fb}_2$ as a pending subtree.
    \item If $n$ is odd, all leaves except for one are part of a cherry. Moreover, if $n \geq 7$, then $T$ contains $T^{fb}_2$ as a pending subtree.
\end{itemize}
\end{Lem}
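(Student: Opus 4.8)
\textbf{Proof plan for Lemma~\ref{Lem:T_Min_H_properties}.}
The plan is to argue by contradiction in each case, using a local cherry-type modification that strictly decreases some entry of the height sequence while not increasing any other, which by Lemma~\ref{Lem:Seq_entries_Min_Max_meta_binary} (part~2, maximization-to-minimization version) would contradict that $T$ minimizes $\Phi^{\mathcal{H}}_f$ for a strictly increasing $f$. The key structural tool is the contrapositive of part~1(a)/2 of Lemma~\ref{Lem:Seq_entries_Min_Max_meta_binary}: since $T$ minimizes the HM for \emph{some} strictly increasing $f$, we must have $\mathcal{H}(T)_i \leq \mathcal{H}(T')_i$ for all $i$ for every $T' \in \BT$; hence it suffices to exhibit a single competitor $T'$ whose height sequence is entrywise $\leq$ that of $T$ with at least one strict inequality whenever the claimed property fails.

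\textbf{Leaves in cherries.} First I would show that a minimizing $T$ cannot have ``too many'' leaves outside cherries. Suppose $n$ is even (resp.\ odd) and there are at least two (resp.\ at least two other than the single allowed one) leaves not contained in any cherry. Pick a cherry $[\widetilde x,\widetilde y]$ of maximal depth $k$ in $T$; since $n\ge 4$ and there are leaves outside cherries, one can find a leaf $l\ne\widetilde x,\widetilde y$ whose depth is small (at most $k-2$) -- this needs a short counting/parity argument: if all non-cherry leaves had depth $\ge k-1$, together with the fact that the deepest cherry sits at depth $k$, one gets more pairs of leaves at deep levels than the tree can accommodate, forcing another cherry or contradicting that $[\widetilde x,\widetilde y]$ is the unique deepest cherry; alternatively, iterate a relocation to reduce the number of leaves outside cherries and invoke minimality. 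Once such a leaf $l$ is found, apply Corollary~\ref{Cor:H_seq_relocate_maxcherry_to_leaf} to relocate $[\widetilde x,\widetilde y]$ onto $l$, obtaining $T'$ with $\mathcal{H}(T)_i\le\mathcal{H}(T')_i$ and strict somewhere -- but this is the wrong direction, so instead I would phrase it the other way: if $T$ has the bad property, then $T$ is obtainable from some $T'$ by the relocation of Corollary~\ref{Cor:H_seq_relocate_maxcherry_to_leaf}, so $\mathcal H(T)$ dominates $\mathcal H(T')$, contradicting the entrywise-minimality forced by Lemma~\ref{Lem:Seq_entries_Min_Max_meta_binary}. So the correct setup is: assume a non-cherry leaf $l$ with $\delta_l\le k-2$ exists (which I must show is forced by having too many non-cherry leaves) and $[\widetilde x,\widetilde y]$ is the deepest cherry; since $k\ge 2$ is only guaranteed, I separately rule out $k=2$ (a tree with $n\ge 5$ and deepest cherry at depth $2$ with extra non-cherry leaves forces $k\ge 3$) and $k\ge 3$, and then $T=$ (relocate $[\widetilde x,\widetilde y]$ to~$l$ in some $T'$) gives the contradiction.

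\textbf{Containing $T^{fb}_2$.} For the ``moreover'' parts, once we know (from the first part) that for even $n\ge 4$ all leaves lie in cherries, deleting all cherries yields a binary tree on $n/2\ge 2$ leaves, which has at least one cherry, so $T$ has two cherries with a common parent, i.e.\ $T^{fb}_2$ as a pending subtree. For odd $n\ge 7$: all leaves but one lie in cherries, so there are $\ge 3$ cherries; I would argue that among the parents of these $\ge 3$ cherries, two must be siblings (i.e.\ have the same parent), because otherwise one could relocate a cherry to the deepest cherry leaf as in Lemma~\ref{Lem:H_seq_relocate_cherry_to_cherry} reading it backwards -- more concretely, if no two cherry-parents are siblings, the deepest cherry is ``isolated'' and reattaching another cherry there strictly lowers the height sequence, so $T$ is again obtained from such a $T'$, contradicting minimality; the threshold $n\ge 7$ (i.e.\ $\ge 3$ cherries) is exactly what guarantees there is a cherry to move without destroying the ``all but one in a cherry'' structure.

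\textbf{Main obstacle.} I expect the genuinely delicate point to be the existence claim in the first part: showing that if $T$ minimizes the HM and has ``too many'' non-cherry leaves, then there is necessarily a non-cherry leaf of depth $\le k-2$ with the deepest cherry being unique of depth $k\ge 3$, so that Corollary~\ref{Cor:H_seq_relocate_maxcherry_to_leaf} applies. This requires combining a parity/counting argument on the multiset of leaf depths with the minimality of $T$ (which, via Lemma~\ref{Lem:Seq_entries_Min_Max_meta_binary}, already pins down $\mathcal H(T)=\mathcal H(T^{gfb}_n)$ up to the results being proven later -- so care is needed not to argue circularly). The cleanest route is probably: use Theorem~\ref{Theo:H_imbalance_index}/earlier parts only as far as ``$T$ is entrywise-minimal in height sequence'', then directly derive the cherry structure from entrywise-minimality by the relocation corollaries, which decouples the argument from the gfb characterization.
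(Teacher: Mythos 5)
Your proposal has a genuine gap at its foundation and at its main step. First, the ``key structural tool'' you invoke is not available: Lemma~\ref{Lem:Seq_entries_Min_Max_meta_binary} yields entrywise domination $\mathcal{H}(T)_i \leq \mathcal{H}(T')_i$ only when $T$ minimizes $\Phi^{\mathcal{H}}_f$ for \emph{all} strictly increasing $f$ (or gives a single strict inequality under \emph{unique} minimization), whereas the lemma to be proved assumes minimization for \emph{some} strictly increasing $f$. The fact that such a minimizer is entrywise minimal is exactly Part~2 of Theorem~\ref{Theo:Min_H_gfb_characterization}, whose proof relies on the present lemma, so using it here is circular. This particular flaw is repairable, since for a contradiction you only need to exhibit one competitor $T'$ with $\mathcal{H}(T')$ entrywise $\leq \mathcal{H}(T)$ and strictly smaller somewhere, which already gives $\Phi^{\mathcal{H}}_f(T') < \Phi^{\mathcal{H}}_f(T)$ for the given strictly increasing $f$. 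The unrepaired gap is the step you yourself flag as delicate: you never show that two non-cherry leaves force a configuration to which Corollary~\ref{Cor:H_seq_relocate_maxcherry_to_leaf} applies (a \emph{unique} deepest cherry of depth $k \geq 3$ together with a leaf of depth $\leq k-2$), and in fact such a configuration need not exist. For instance, the tree on six leaves whose standard decomposition is $\bigl((T^{fb}_1,x),(T^{fb}_1,y)\bigr)$ has two non-cherry leaves $x,y$ but two cherries of maximal depth, so the corollary is inapplicable and your plan gives no move at all. The paper avoids this entirely with a more elementary operation: given two non-cherry leaves $x,y$ with parents $v_x,v_y$, it swaps $x$ with the sibling subtree of $y$ (choosing roles by ancestry or by comparing sibling heights), which strictly decreases the height value of $v_y$ and increases no height value, yielding the contradiction without any uniqueness or depth-gap hypotheses. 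Your relocation-based route would need a separate argument (or an induction) for all configurations the corollary does not cover, and none is supplied.

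For the ``contains $T^{fb}_2$'' claims: your argument for even $n \geq 4$ (delete all cherries; the resulting binary tree on $n/2 \geq 2$ leaves has a cherry, whose two leaves were sibling cherry parents in $T$) is correct and is a nice alternative to the paper's grandparent-of-the-deepest-cherry argument. The odd case $n \geq 7$, however, is only gestured at: the dichotomy ``either two cherry parents are siblings, or the deepest cherry is isolated and a relocation lowers the height sequence'' is not proved, and again you would need uniqueness of the deepest cherry and a leaf of depth $\leq k-2$, neither of which you establish; the threshold $n \geq 7$ must enter quantitatively (in the paper it guarantees, in the critical sub-case where the non-cherry leaf's sibling is a cherry and all other maximal pending subtrees are cherries, that the deepest cherry has depth $k \geq 4$ while a cherry near the root has leaves at depth $2 \leq k-2$), but your sketch does not use it beyond counting cherries. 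As it stands, the first bullet and the odd $T^{fb}_2$ claim are not proved.
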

\begin{proof}
For each $n \leq 3$, there exists only one tree with $n$ leaves, and it is easily checked that in these cases, the statement is true.

Now, let $T$ be a binary tree with $n \geq 4$ leaves that minimizes the HM $\Phi^{\mathcal{H}}_f$ for some strictly increasing function $f$.

First, we show that $T$ contains at most one leaf that is not part of a cherry. Consequently, if $n$ is even, all leaves of $T$ are part of a cherry, and if $n$ is odd, $T$ has exactly one leaf that is not part of a cherry. The strategy of the proof is to show that for any tree $T$ with at least two leaves not belonging to a cherry, there exists another tree $T'$ such that $\Phi^{\mathcal{H}}_f(T) > \Phi^{\mathcal{H}}_f(T')$, thus leading to a contradiction to the minimality of $T$.

Hence, assume that $T$ has two leaves, say $x$ and $y$, that are not part of a cherry. Let $v_x$ and $v_y$ denote the parents of $x$ and $y$, respectively. We distinguish two cases depending on whether one of $v_x$ and $v_y$ is an ancestor of the other.

\begin{enumerate}
    \item We first consider the case in which one of $v_x$ and $v_y$ is ancestral to the other one. Without loss of generality, assume that $v_x$ is an ancestor of $v_y$. This implies that the depth of $v_x$ is smaller than the depth of $v_y$. Let $w_y$ be the other child of $v_y$; since $y$ is not part of a cherry, $w_y$ is an inner vertex. Let $T'$ be the tree obtained by swapping $T_{w_y}$ and $x$ (see Figure \ref{Fig:leaves_in_cherries_1case} for an illustration). Then at least the height value of $v_y$ decreases from $T$ to $T'$. Moreover, all ancestors of $v_y$ may also decrease their height values, but no vertex increases its height value. Since $f$ is strictly increasing, we conclude that $\Phi^{\mathcal{H}}_f(T) > \Phi^{\mathcal{H}}_f(T')$.

    \begin{figure}[htbp]
        \centering
        \includegraphics[scale=1]{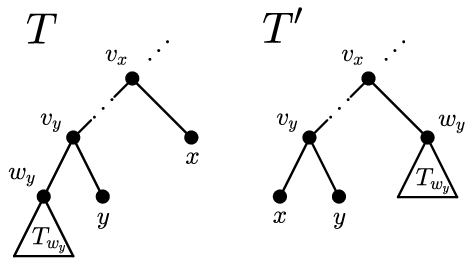}
        \caption{Illustration of $T$ and $T'$ as used in the first case in the proof of Lemma \ref{Lem:T_Min_H_properties}.}
        \label{Fig:leaves_in_cherries_1case}
    \end{figure}

    \item Now, assume that $v_x$ and $v_y$ are not ancestors of one another. Let $w_y$ and $w_x$ denote the other children of $v_y$ and $v_x$, respectively. Without loss of generality, assume that $h_{w_x} \geq h_{w_y}$. Let $T'$ be the tree obtained from $T$ by swapping $T_{w_y}$ and $x$ (see Figure \ref{Fig:leaves_in_cherries_2case}). Again, at least the height value of $v_y$ decreases, and all of its ancestors may decrease their height values as well. All other height values are not affected, since $h_{w_x} \geq h_{w_y}$. Hence, because $f$ is strictly increasing, we have $\Phi^{\mathcal{H}}_f(T) > \Phi^{\mathcal{H}}_f(T')$.

    \begin{figure}
        \centering
        \includegraphics[scale=1]{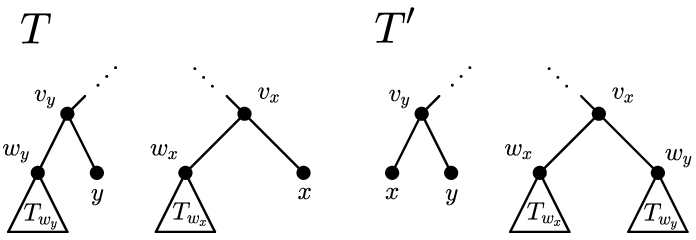}
        \caption{Illustration of $T$ and $T'$ as used in the second case in the proof of Lemma \ref{Lem:T_Min_H_properties}. Note that $h_{w_x} \geq h_{w_y}$.}
        \label{Fig:leaves_in_cherries_2case}
    \end{figure}
\end{enumerate}
As both cases contradict the minimality of $T$, we can conclude that $T$ has at most one leaf that is not part of a cherry. 

Next, we show that $T$ must contain $T^{fb}_2$ as a pending subtree if $n \geq 4$ is even. Note that for $n\geq 4$, we have $h(T)\geq 2$. Now, let $v$ be the grandparent (i.e., the parent of the parent) of a cherry of maximal depth, and let $v'$ denote the other child of $v$ (i.e., not the cherry parent). By the first part of the proof, $T_{v'}$ cannot solely consist of a single leaf (as $n$ is even), i.e., $T_{v'}$ cannot have height $0$. Moreover, $T_{v'}$ cannot have height greater than $1$, since this would contradict the maximal depth of the cherry. Therefore, $T_{v'}$ has height $1$, i.e., it is the parent of a cherry. Consequently, $T_v = T^{fb}_2$, and thus $T$ contains $T^{fb}_2$ as a pending subtree. This completes the proof for the case when $n$ is even.

Next, we show that $T$ contains $T^{fb}_2$ as a pending subtree if $n \geq 7$ is odd. From the first part of the proof, we know that $T$ has a unique leaf, say $z$, which is not part of a cherry. Let $v$ be the parent of $z$, and let $T'_v$ be the pending subtree rooted in the other child of $v$. Suppose $T'_v$ has $n'$ leaves. Note that $n' \geq 2$ (because $n'=1$ would imply that $z$ belongs to a cherry; a contradiction), and moreover,  $n'$ must be even; otherwise $T'_v$ also would contain a leaf that is not part of a cherry, contradicting the uniqueness of $z$ in $T$. We now distinguish the two cases: $n' = 2$ and $n' > 2$.

\begin{enumerate}[(i)]
    \item First, assume that $n' > 2$; since $n'$ is even, this means $n' \geq 4$. In this case, $T'_v$, and therefore also $T$, contains $T^{fb}_2$ as a pending subtree by the previous part of the proof.

    \item Now, assume $n' = 2$. Then $T_v = T^{cat}_3$. The remaining $n-3 \geq 4$ leaves appear in cherries as otherwise this would contradict the uniqueness of $z$. If one pending subtree of $T$ that does not contain $T_v$ has at least $4$ leaves, we note that the number of leaves in this subtree must be even (as otherwise there would be a second leaf not contained in a cherry), which is why we can then apply the previous part of the proof to conclude that this subtree must contain $T_2^{fb}$. It now only remains to consider the case in which no pending subtree of $T$ which does \textit{not} contain $T_v$ has at least four leaves. In this case, \textit{all} pending subtrees of $T$ not containing $T_v$ must contain precisely two leaves (because again, they have to contain an even number of leaves as $T$ has only one leaf not contained in a cherry). In this case, $T$ looks as follows (cf. Figure \ref{Fig:T_Min_H_properties}): All vertices on the path from the parent of $v$ to the root of $T$ have a cherry parent as a child, and the other child contains $T_v$ in its pending subtree. In particular, the cherry in $T_v$ is the unique cherry of maximal depth $k \geq 4$ of $T$, as $T$ has at least seven leaves, so the path from $v$ to the root of $T$ has at least length $2$ (as there are at least two more cherry subtrees descending from $v$'s path to the root), and as the depth of the cherry leaves of $T_v$ in $T$ is said path length plus $2$, the depth of these leaves is strictly larger than that of all other leaves in the tree. Moreover, the leaves in the maximal pending subtree of $T$ that consists only of a cherry have depth $2 \leq k-2$. Let $\widehat{T}$ be the tree obtained from relocating the cherry of maximal depth to one of these leaves (cf. Figure \ref{Fig:T_Min_H_properties}). By Corollary \ref{Cor:H_seq_relocate_maxcherry_to_leaf}, and since $f$ is strictly increasing, we have $\Phi^{\mathcal{H}}_f(\widehat{T}) < \Phi^{\mathcal{H}}_f(T)$. This contradicts the minimality of $T$. This contradiction shows that our assumption was wrong, i.e., $T$ cannot be such that it contains no pending $T^{fb}_2$ subtree. This completes the proof. 
\end{enumerate}
\end{proof}

\begin{figure}
    \centering
    \includegraphics[scale=1.5]{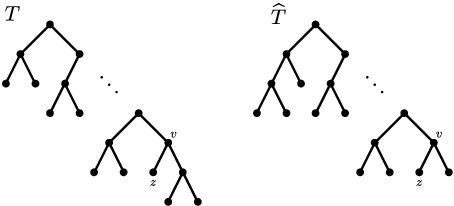}
    \caption{The only possible tree $T$ in the last part of the proof of Lemma \ref{Lem:T_Min_H_properties} for the case $n \geq 7$ and $n' = 2$ that does not contain $T^{fb}_2$ as a  pending subtree. Tree $\widehat{T}$ is obtained from $T$ by the cherry relocation described in the same part of the proof.}
    \label{Fig:T_Min_H_properties}
\end{figure}

In the following, we demonstrate how to construct all binary minimizing trees with an odd number of leaves from those with the next higher even number of leaves, and vice versa. These constructions will be needed for the proof of Theorem \ref{Theo:Min_H_gfb_characterization}, i.e., the complete characterization of the minimizing trees, where they are used to derive a contradiction.

\begin{Lem}
\label{Lem:Min_n-1_n_sameHplus1}
Let $f$ be strictly increasing and let $n \geq 4$ be even. Let $T_{n-1} \in \mathcal{BT}^{\ast}_{n-1}$ and $T_n \in \BT$. Assume that
\[\mathcal{H}(T_n)_{i+1} = \mathcal{H}(T_{n-1})_i \text{ for all } i = 1, \ldots, n-2.\]
Then $T_{n-1}$ minimizes the height metaconcept $\Phi^{\mathcal{H}}_f$ on $\mathcal{BT}^{\ast}_{n-1}$ if and only if $T_n$ minimizes the height metaconcept $\Phi^{\mathcal{H}}_f$ on $\BT$.
\end{Lem}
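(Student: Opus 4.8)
The plan is to exploit the recursive structure of the height sequence established in Observation \ref{Obs:H_recursiveness} together with the characterization of minimizing trees that will follow from Lemma \ref{Lem:T_Min_H_properties}. The key structural fact is that, since $n\geq 4$ is even, any $T_n$ minimizing $\Phi^{\mathcal{H}}_f$ has all leaves in cherries; and since $n-1$ is odd, any $T_{n-1}$ minimizing $\Phi^{\mathcal{H}}_f$ has exactly one leaf $z$ outside a cherry. The hypothesis that $\mathcal{H}(T_n)_{i+1}=\mathcal{H}(T_{n-1})_i$ for $i=1,\dots,n-2$ says precisely that the two height sequences agree except that $\mathcal{H}(T_n)$ has one extra entry at the bottom, which (since every tree has a cherry and hence a vertex of height value $1$) must be $\mathcal{H}(T_n)_1=1$. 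So the correspondence between $T_{n-1}$ and $T_n$ is: attach a cherry to the lone non-cherry leaf $z$ of $T_{n-1}$ (this is exactly how $T^{gfb}_n$ arises from $T^{gfb}_{n-1}$ by Remark \ref{Rem:gfb_properties}), which adds one inner vertex of height value $1$ and leaves all other height values unchanged — matching the hypothesis. Conversely, deleting any cherry of $T_n$ of minimal depth and relating the height sequences should recover a $T_{n-1}$ satisfying the hypothesis.

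First I would prove the following quantitative link: if $T_{n-1}$ and $T_n$ satisfy the displayed hypothesis, then
\[
\Phi^{\mathcal{H}}_f(T_n)=\Phi^{\mathcal{H}}_f(T_{n-1})+f(1).
\]
This is immediate from the definition of the HM and the hypothesis, since $\sum_{i=1}^{n-1}f(\mathcal{H}(T_n)_i)=f(\mathcal{H}(T_n)_1)+\sum_{i=2}^{n-1}f(\mathcal{H}(T_n)_i)=f(1)+\sum_{i=1}^{n-2}f(\mathcal{H}(T_{n-1})_i)$, using $\mathcal{H}(T_n)_1=1$. The constant $f(1)$ is the same for every tree in the respective domains, so this already shows that the hypothesis forces the $\Phi^{\mathcal{H}}_f$-values of $T_{n-1}$ and $T_n$ to differ by a fixed additive constant.

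Next, to close the equivalence, I would show that the map ``attach a cherry to the unique non-cherry leaf'' is a bijection between $\mathcal{B}^{\min}_{n-1}$-candidates and $\mathcal{B}^{\min}_n$-candidates at the level of achievable $\Phi^{\mathcal{H}}_f$-values. Concretely: ($\Rightarrow$) Suppose $T_{n-1}$ minimizes the HM on $\mathcal{BT}^\ast_{n-1}$. Given any $S_n\in\BT$, by Lemma \ref{Lem:T_Min_H_properties} a minimizer $S_n^\ast$ of the HM on $\BT$ has all leaves in cherries; deleting one of its cherries of minimal depth yields $S_{n-1}^\ast\in\mathcal{BT}^\ast_{n-1}$ whose height sequence satisfies $\mathcal{H}(S_n^\ast)_{i+1}=\mathcal{H}(S_{n-1}^\ast)_i$ (a cherry of minimal depth, when deleted, removes a height-$1$ entry and decreases no other height value — one must check the minimal-depth choice guarantees no other height value changes, which follows because its grandparent's other subtree then has height $\geq 1$ still, and heights of vertices off the path from that cherry to the root are untouched while vertices on that path had height $\geq 2$). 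Then by the quantitative link applied to both pairs, $\Phi^{\mathcal{H}}_f(S_n^\ast)=\Phi^{\mathcal{H}}_f(S_{n-1}^\ast)+f(1)\geq\Phi^{\mathcal{H}}_f(T_{n-1})+f(1)=\Phi^{\mathcal{H}}_f(T_n)$, so $T_n$ is a minimizer. ($\Leftarrow$) runs symmetrically, deleting a minimal-depth cherry of a minimizing $T_n$ and using that $\Phi^{\mathcal{H}}_f(T_{n-1})=\Phi^{\mathcal{H}}_f(T_n)-f(1)\leq\Phi^{\mathcal{H}}_f(S_{n-1})-f(1)$ $=\Phi^{\mathcal{H}}_f(S_n)-f(1)+f(1)-f(1)$, comparing against any competitor $S_{n-1}$ extended to $S_n$.

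The main obstacle I expect is the bookkeeping in the deletion step: verifying that deleting a cherry of \emph{minimal} depth from a tree whose leaves all lie in cherries produces a tree whose height sequence is exactly $\mathcal{H}(T_n)$ with one bottom entry $1$ removed, i.e., that no other height value changes. The subtle point is that one must argue the grandparent of a minimal-depth cherry still has height value $\geq 2$ after deletion (so its height does not drop and propagate), which relies on there being at least one other pending subtree of height $\geq 1$ hanging nearby — this is where $n\geq 4$ and the all-leaves-in-cherries property of minimizers (Lemma \ref{Lem:T_Min_H_properties}) are used. Once this deletion/attachment correspondence is pinned down, the rest is the two short inequality chains above.
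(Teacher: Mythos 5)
Your proposal is correct and follows essentially the same route as the paper: both rest on Lemma \ref{Lem:T_Min_H_properties} (all leaves of an even-size minimizer lie in cherries; an odd-size minimizer has a unique non-cherry leaf), on the observation that attaching/deleting a cherry in these situations only adds/removes a single height-value $1$ while leaving all other heights unchanged, and on the resulting additive relation $\Phi^{\mathcal{H}}_f(T_n)=\Phi^{\mathcal{H}}_f(T_{n-1})+f(1)$ used to compare against a competing minimizer (the paper phrases this as a contradiction and deletes a cherry of a pending $T^{fb}_2$, you delete a minimal-depth cherry — same substance). The only blemish is a garbled inequality chain in your backward direction (it should read $\Phi^{\mathcal{H}}_f(T_{n-1})=\Phi^{\mathcal{H}}_f(T_n)-f(1)\leq\Phi^{\mathcal{H}}_f(S_n)-f(1)=\Phi^{\mathcal{H}}_f(S_{n-1})$ for a competitor $S_{n-1}$ extended to $S_n$), which is a notational slip rather than a gap.
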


\begin{figure}
    \centering
    \includegraphics[width=0.5\linewidth]{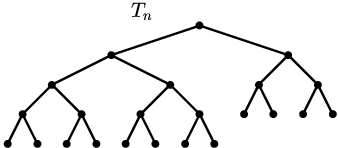}\\
    \includegraphics[width=\textwidth]{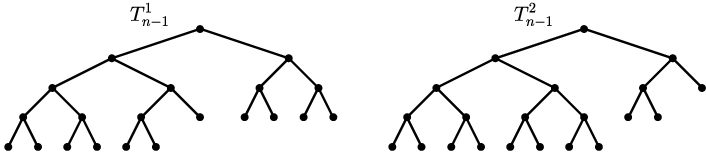}
    \caption{Both pairs $(T_n, T_{n-1}^1)$ and $(T_n, T_{n-1}^2)$ satisfy the conditions of Lemma \ref{Lem:Min_n-1_n_sameHplus1}. Note that $T^1_{n-1}$ and $T^2_{n-1}$ can be obtained by deleting different cherries of $T_n$. We have $\mathcal{H}(T_n) = (1,\underline{1,1,1,1,1,2,2,2,3,4})$ and $\mathcal{H}(T^1_{n-1}) = (1,1,1,1,1,2,2,2,3,4) = \mathcal{H}(T^2_{n-1})$, where the latter sequence coincides with the underlined part of $\mathcal{H}(T_n)$, as stated by Lemma \ref{Lem:Min_n-1_n_sameHplus1}.}
    \label{Fig:Tn_Tn-1}
\end{figure}

\begin{proof}
Let $f$ be strictly increasing and let $n \geq 4$ be even. Let $T_{n-1}$ and $T_n$ be as described (for an example, see Figure \ref{Fig:Tn_Tn-1}). In particular,
\[\mathcal{H}(T_n)_{i+1} = \mathcal{H}(T_{n-1})_i \text{ for all } i = 1, \ldots, n-2.\]
We prove both directions by contradiction.

Suppose that $T_{n-1}$ minimizes the HM on $\mathcal{BT}^{\ast}_{n-1}$. Seeking a contradiction, assume that $T_n$ does not minimize the HM on $\BT$, but that some tree $T'_n$ does. In other words, $\Phi^{\mathcal{H}}_f(T_n) > \Phi^{\mathcal{H}}_f(T'_n)$. By Lemma \ref{Lem:T_Min_H_properties},  tree $T'_n$ contains $T^{fb}_2$ as a pending subtree. Let $T'_{n-1}$ be the tree obtained from $T'_n$ by deleting a cherry of the pending subtree $T^{fb}_2$. By construction,
\[\mathcal{H}(T'_n)_{i+1} = \mathcal{H}(T'_{n-1})_i \text{ for all } i = 1, \ldots, n-2.\]
Moreover, $\mathcal{H}(T_n)_1 = 1 = \mathcal{H}(T'_n)_1$. Consequently,
\[\Phi^{\mathcal{H}}_f(T_{n}) = f(1) + \Phi^{\mathcal{H}}_f(T_{n-1}) \leq f(1) + \Phi^{\mathcal{H}}_f(T'_{n-1}) = \Phi^{\mathcal{H}}_f(T'_{n}),\]
where the inequality holds as $T_{n-1}$ minimizes the HM by assumption. This contradicts the assumption that $\Phi^{\mathcal{H}}_f(T_n) > \Phi^{\mathcal{H}}_f(T'_n)$. Therefore, $T_n$  must minimize the HM on $\BT$. \smallskip

The converse direction works similarly: Suppose that $T_{n}$ minimizes the HM on $\BT$. Seeking a contradiction, assume that $T_{n-1}$ does not minimize the HM on $\mathcal{BT}^{\ast}_{n-1}$, but that some tree $T'_{n-1}$ does. In other words, $\Phi^{\mathcal{H}}_f(T_{n-1}) > \Phi^{\mathcal{H}}_f(T'_{n-1})$. By Lemma \ref{Lem:T_Min_H_properties} and since $n-1$ is odd, tree $T'_{n-1}$ has a unique leaf that is not part of a cherry. Let $T'_{n}$ be the tree obtained from $T'_{n-1}$ by attaching a cherry to this leaf. By construction,
\[\mathcal{H}(T'_n)_{i+1} = \mathcal{H}(T'_{n-1})_i \text{ for all } i = 1, \ldots, n-2.\]
Moreover, $\mathcal{H}(T_n)_1 = 1 = \mathcal{H}(T'_n)_1$. Thus,
\[\Phi^{\mathcal{H}}_f(T_{n-1}) = \Phi^{\mathcal{H}}_f(T_{n}) - f(1) \leq \Phi^{\mathcal{H}}_f(T'_{n}) - f(1) = \Phi^{\mathcal{H}}_f(T'_{n-1}).\]
This contradicts the assumption that $\Phi^{\mathcal{H}}_f(T_{n-1}) > \Phi^{\mathcal{H}}_f(T'_{n-1})$. Therefore, $T_{n-1}$ must  minimize the HM on $\mathcal{BT}^\ast_{n-1}$.

This completes the proof.
\end{proof}

We now analyze how every binary minimizing tree with an even number of leaves $n$ can be constructed from binary minimizing trees with $n-1$ leaves. This construction transfers a property the gfb-tree has (see Remark \ref{Rem:gfb_properties}) to all minimizing trees.

\begin{Prop}
\label{Prop:H_Min_n-1_to_Min_n}
Let $n \geq 4$ be even, and let $f$ be strictly increasing. Denote by $B_{n}$ the set of binary trees with $n$ leaves that are obtained as follows: take any binary tree $T_{n-1}$ on $n-1$ leaves that minimizes the height metaconcept $\Phi^{\mathcal{H}}_f$ on $\mathcal{BT}^{\ast}_{n-1}$, and attach a cherry to the unique leaf of $T_{n-1}$ that is not part of a cherry. Let $B^{min}_{n}$ be the set of trees that minimize the HM $\Phi^{\mathcal{H}}_f$ on $\BT$. Then $B_{n} = B^{min}_{n}$.
\end{Prop}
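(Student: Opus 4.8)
The statement asserts the set equality $B_n = B^{min}_n$, so I will prove the two inclusions separately. Throughout, the key structural fact I will lean on is Lemma~\ref{Lem:T_Min_H_properties}: every binary tree with an even number of leaves that minimizes $\Phi^{\mathcal{H}}_f$ (for the given strictly increasing $f$) has \emph{all} its leaves in cherries, and every binary tree with an odd number of leaves that minimizes $\Phi^{\mathcal{H}}_f$ has exactly one leaf outside a cherry. The other workhorse is Lemma~\ref{Lem:Min_n-1_n_sameHplus1}, which says that whenever $\mathcal{H}(T_n)_{i+1} = \mathcal{H}(T_{n-1})_i$ for all $i$, then $T_{n-1}$ minimizes the HM on $\mathcal{BT}^{\ast}_{n-1}$ iff $T_n$ minimizes it on $\BT$. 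The bridge between these is the elementary observation (already used in the proof of Lemma~\ref{Lem:Min_n-1_n_sameHplus1}) that if $T_n$ is obtained from $T_{n-1}$ by attaching a cherry to a leaf that is not part of a cherry in $T_{n-1}$, then $\mathcal{H}(T_n) = \mathcal{H}(T_{n-1})$ with a single extra entry equal to $1$ prepended; i.e.\ the ascending sequences satisfy exactly the hypothesis of Lemma~\ref{Lem:Min_n-1_n_sameHplus1}. (Attaching a cherry to a leaf $z$ that is not in a cherry turns $z$, of height value $0$, into a cherry parent of height value $1$; every other vertex keeps its height value, since $z$'s parent had a child of height value $\geq 1$ already, so the parent's height value — and hence all ancestors' — is unchanged.)

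\textbf{Inclusion $B_n \subseteq B^{min}_n$.} Take $T_n \in B_n$. By definition there is a tree $T_{n-1} \in \mathcal{BT}^{\ast}_{n-1}$ minimizing $\Phi^{\mathcal{H}}_f$ on $\mathcal{BT}^{\ast}_{n-1}$, with a unique leaf not in a cherry (this uniqueness is guaranteed by Lemma~\ref{Lem:T_Min_H_properties} since $n-1$ is odd and, for $n \geq 4$, $n - 1 \geq 3$), and $T_n$ is obtained by attaching a cherry to that leaf. By the observation above, $\mathcal{H}(T_n)_{i+1} = \mathcal{H}(T_{n-1})_i$ for all $i = 1, \ldots, n-2$, so the hypothesis of Lemma~\ref{Lem:Min_n-1_n_sameHplus1} is met. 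Since $T_{n-1}$ minimizes on $\mathcal{BT}^{\ast}_{n-1}$, that lemma gives that $T_n$ minimizes on $\BT$, i.e.\ $T_n \in B^{min}_n$.

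\textbf{Inclusion $B^{min}_n \subseteq B_n$.} Take $T_n \in B^{min}_n$. Since $n$ is even, Lemma~\ref{Lem:T_Min_H_properties} tells us all leaves of $T_n$ lie in cherries; in particular, picking any cherry of maximal depth and deleting it, we obtain a binary tree $T_{n-1} \in \mathcal{BT}^{\ast}_{n-1}$ in which the single new leaf (the former cherry parent) is the unique leaf not in a cherry — uniqueness because every other leaf of $T_n$ stayed inside its cherry and the deletion does not destroy any of those cherries (here one should just check the former cherry's grandparent situation, but deleting a \emph{maximal-depth} cherry never merges two other cherries). By construction $T_n$ is recovered from $T_{n-1}$ by attaching a cherry to that unique leaf, so $\mathcal{H}(T_n)_{i+1} = \mathcal{H}(T_{n-1})_i$ for all $i$; applying Lemma~\ref{Lem:Min_n-1_n_sameHplus1} in the converse direction, $T_{n-1}$ minimizes $\Phi^{\mathcal{H}}_f$ on $\mathcal{BT}^{\ast}_{n-1}$. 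Hence $T_n$ arises from a minimizing tree on $n-1$ leaves by the cherry-attachment recipe, which is precisely the definition of $B_n$; therefore $T_n \in B_n$. Combining the two inclusions yields $B_n = B^{min}_n$.

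\textbf{Main obstacle.} The two set-theoretic inclusions are short once the right lemmas are invoked; the only genuinely delicate point is verifying the "uniqueness of the non-cherry leaf" after deleting a maximal-depth cherry in the $B^{min}_n \subseteq B_n$ direction, and dually that attaching a cherry to that leaf does not accidentally merge structure — i.e.\ confirming that the height sequences shift by exactly one entry of value $1$ in both directions. This amounts to a careful but routine local analysis of height values around the cherry parent and its parent, using that the deleted cherry had maximal depth. Everything else is a direct composition of Lemmas~\ref{Lem:T_Min_H_properties} and~\ref{Lem:Min_n-1_n_sameHplus1}.
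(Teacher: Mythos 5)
Your proof is correct and follows essentially the same route as the paper: both inclusions are obtained by combining Lemma~\ref{Lem:T_Min_H_properties} with Lemma~\ref{Lem:Min_n-1_n_sameHplus1} via the observation that attaching/deleting a cherry at the relevant leaf shifts the height sequence by a single entry equal to $1$. The only difference is cosmetic: you delete a maximal-depth cherry, whereas the paper deletes an arbitrary cherry and justifies the key local fact by noting that the former cherry parent's sibling subtree must have height at least $1$ (otherwise it would be a leaf outside a cherry in $T_n$, contradicting Lemma~\ref{Lem:T_Min_H_properties}) -- it is this, not maximality of depth or the absence of ``merged'' cherries, that guarantees the new leaf is the unique non-cherry leaf and that no height value above it changes.
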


Note that if there is a unique tree minimizing the HM on $\mathcal{BT}^{\ast}_{n-1}$, then $|B_n| = |B^{min}_{n}| = 1$. In other words, in this case there is also a unique tree minimizing the HM on $\BT$.

\begin{proof}
Let $f$ be strictly increasing.

First, we show that $B_{n} \subseteq B^{min}_{n}$. Let $T_{n-1}$ be a binary tree with an odd number $n-1 \geq 3$ of leaves that minimizes the HM. By Lemma \ref{Lem:T_Min_H_properties}, $T_{n-1}$ has a unique leaf that is not part of a cherry. Let $T_n$ be the tree obtained by attaching a cherry to this leaf, so that $T_n \in B_n$. Then
\[\mathcal{H}(T_n)_{i+1} = \mathcal{H}(T_{n-1})_i \text{ for all } i = 1, \ldots, n-2.\]
By Lemma \ref{Lem:Min_n-1_n_sameHplus1}, it follows that $T_n$ minimizes the HM, i.e., $T_n \in B^{min}_n$.

Second, we show that $B_{n} \supseteq B^{min}_{n}$. Let $T_n$ be a binary tree with an even number of leaves $n \geq 4$ that minimizes the HM, i.e., $T_n \in B^{min}_{n}$. By Lemma \ref{Lem:T_Min_H_properties}, all leaves of $T_n$ are part of a cherry. Let $T_{n-1}$ be any of the binary trees with $n-1$ leaves obtained from $T_n$ by deleting an arbitrary cherry $[x,y]$. We need to show that $T_{n-1}$ minimizes the HM, which will imply that $T_n \in B_n$. Let $v$ be the grandparent of the cherry $[x,y]$, i.e., the parent of its cherry parent. Then, the other maximal pending subtree of $T_v$, say $T'$, not containing $[x,y]$, has height $h(T') \geq 1$, i.e., deleting the cherry $[x,y]$ does not change the height value of $v$ and hence of no other vertex. This is because if $h(T') = 0$, $T'$ is a leaf not part of a cherry, a contradiction. Hence,  by construction, we have
\[\mathcal{H}(T_n)_{i+1} = \mathcal{H}(T_{n-1})_i \text{ for all } i = 1, \ldots, n-2.\]
Now, by Lemma \ref{Lem:Min_n-1_n_sameHplus1}, $T_{n-1}$ minimizes the HM. Therefore, $T_n \in B_n$. This completes the proof.
\end{proof}

Next, we prove the converse direction, i.e., we analyze how all binary trees minimizing the HM with $n-1$ leaves can be constructed from the set of binary trees minimizing the HM with $n$ leaves, where $n$ is even. This again generalizes a property of the gfb-tree (see Remark \ref{Rem:gfb_properties}).

\begin{Prop}
\label{Prop:H_Min_n_to_Min_n-1}
Let $n \geq 4$ be even, and let $f$ be strictly increasing. Denote by $B_{n-1}$ the set of binary trees with $n-1$ leaves that can be obtained as follows: take an arbitrary binary tree $T_{n}$ on $n$ leaves that minimizes the height metaconcept $\Phi^{\mathcal{H}}_f$ on $\BT$, and delete an arbitrary cherry of $T_n$. Let $B^{min}_{n-1}$ be the set of trees with $n-1$ leaves that minimize the height metaconcept $\Phi^{\mathcal{H}}_f$ on $\mathcal{BT}^{\ast}_{n-1}$. Then $B_{n-1} = B^{min}_{n-1}$.
\end{Prop}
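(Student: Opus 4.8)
The plan is to prove the two inclusions $B_{n-1} \subseteq B^{min}_{n-1}$ and $B_{n-1} \supseteq B^{min}_{n-1}$ separately, using the two constructions (deleting a cherry from an even-leaf tree, attaching a cherry to the unique non-cherry leaf of an odd-leaf tree) together with Lemma \ref{Lem:Min_n-1_n_sameHplus1} and Lemma \ref{Lem:T_Min_H_properties}; in fact much of the work has already been done inside the proof of Proposition \ref{Prop:H_Min_n-1_to_Min_n}, so I would reuse those sub-arguments.

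First I would show $B_{n-1} \subseteq B^{min}_{n-1}$. Take $T_n \in B^{min}_n$ and let $T_{n-1}$ be obtained by deleting an arbitrary cherry $[x,y]$ of $T_n$. This is exactly the situation analyzed in the second half of the proof of Proposition \ref{Prop:H_Min_n-1_to_Min_n}: by Lemma \ref{Lem:T_Min_H_properties} all leaves of $T_n$ lie in cherries, so the grandparent $v$ of $[x,y]$ has its other pending subtree of height $\geq 1$, hence deleting $[x,y]$ leaves every height value unchanged except that one entry equal to $1$ is removed. Thus $\mathcal{H}(T_n)_{i+1} = \mathcal{H}(T_{n-1})_i$ for all $i = 1, \ldots, n-2$, and Lemma \ref{Lem:Min_n-1_n_sameHplus1} gives $T_{n-1} \in B^{min}_{n-1}$.

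Next I would show $B_{n-1} \supseteq B^{min}_{n-1}$. Take any $T_{n-1} \in B^{min}_{n-1}$; since $n-1$ is odd and $n-1 \geq 3$, Lemma \ref{Lem:T_Min_H_properties} says $T_{n-1}$ has a unique leaf $z$ not in a cherry. Attach a cherry to $z$ to obtain $T_n$; then $\mathcal{H}(T_n)_{i+1} = \mathcal{H}(T_{n-1})_i$ for $i = 1, \ldots, n-2$ and $\mathcal{H}(T_n)_1 = 1$, so by Lemma \ref{Lem:Min_n-1_n_sameHplus1} $T_n$ minimizes the HM on $\BT$, i.e.\ $T_n \in B^{min}_n$. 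It remains to recover $T_{n-1}$ from $T_n$ by deleting a cherry: deleting the cherry just attached at $z$ returns $T_{n-1}$ (the grandparent of that cherry has its other subtree containing $z$... wait — here one must instead note that deleting that particular cherry simply reverses the attachment, so $T_{n-1}$ is genuinely obtained from $T_n \in B^{min}_n$ by a cherry deletion). Hence $T_{n-1} \in B_{n-1}$.

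The main obstacle, such as it is, is bookkeeping rather than a deep difficulty: one must make sure that in the second inclusion the cherry deleted from $T_n$ to recover $T_{n-1}$ is literally the newly attached cherry, so that the deletion operation defining $B_{n-1}$ is satisfied; and one must make sure the hypotheses of Lemma \ref{Lem:T_Min_H_properties} (namely $n-1 \geq 3$, equivalently $n \geq 4$, which is assumed) and of Lemma \ref{Lem:Min_n-1_n_sameHplus1} ($n$ even, $f$ strictly increasing, and the shifted-equality of height sequences) are all in place in each direction. Since both constructions are mutually inverse in the relevant sense, the two inclusions combine to give $B_{n-1} = B^{min}_{n-1}$, completing the proof.
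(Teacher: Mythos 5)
Your proposal is correct and follows essentially the same route as the paper's proof: both inclusions are established via Lemma \ref{Lem:T_Min_H_properties} (all leaves in cherries for even $n$, a unique non-cherry leaf for odd $n-1$), the shifted height-sequence identity $\mathcal{H}(T_n)_{i+1} = \mathcal{H}(T_{n-1})_i$, and Lemma \ref{Lem:Min_n-1_n_sameHplus1}, with the reverse inclusion likewise realized by attaching a cherry to the unique non-cherry leaf and observing that deleting that cherry recovers $T_{n-1}$. No gaps to report.
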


The following example shows that, in contrast to Proposition \ref{Prop:H_Min_n-1_to_Min_n}, there may be multiple trees minimizing the HM on $\mathcal{BT}^{\ast}_{n-1}$ even in cases in which there is a unique tree minimizing the HM on $\BT$.

\begin{Ex}
    Let $f$ be a strictly increasing function. First, we show that there are two trees minimizing the HM on $\mathcal{BT}^{\ast}_5$, where $|\mathcal{BT}^{\ast}_5| = 3$, i.e., there are three binary trees with five leaves: $T^{gfb}_5 = \left(T^{cat}_3,T^{fb}_1\right)$, $T^{be}_5 = \left(T^{fb}_2,T^{fb}_0\right)$ (see Figure \ref{Fig:5_2_3} in the appendix), and $T^{cat}_5$. By Theorem \ref{Theo:cat_Max_H}, Part 3, the caterpillar uniquely maximizes the HM. Moreover, the height sequences of the other two trees are $\mathcal{H}\left(T^{gfb}_5\right) = (1,1,2,3) = \mathcal{H}\left(T^{be}_5\right)$ as illustrated in Figure \ref{Fig:5_2_3} in the appendix. Hence, both $T^{gfb}_5$ and $T^{be}_5$ minimize the HM on $\mathcal{BT}^{\ast}_{5}$.

    Next, we show that there is a unique tree minimizing the HM on $\mathcal{BT}^{\ast}_{6}$. By Proposition \ref{Prop:H_Min_n-1_to_Min_n}, all minimizing trees on $\mathcal{BT}^{\ast}_{6}$ can be obtained by attaching a cherry to the unique leaf of $T^{gfb}_{5}$ and $T^{be}_5$ that is not part of a cherry. In both cases, this operation yields the gfb-tree on six leaves, i.e., $T^{gfb}_6$, as depicted in Figure \ref{Fig:6_5_6} in the appendix, which is therefore the unique tree minimizing the HM on $\mathcal{BT}^{\ast}_{6}$.
\end{Ex}

\begin{proof}[Proof of Proposition \ref{Prop:H_Min_n_to_Min_n-1}]
Let $f$ be strictly increasing.

First, we show that $B_{n-1} \subseteq B^{min}_{n-1}$. Let $T_{n}$ be a binary tree with an even number of leaves $n \geq 4$ that minimizes the HM. Let $T_{n-1}$ be the tree obtained from $T_n$ by deleting an arbitrary cherry of $T_n$, so that $T_{n-1} \in B_{n-1}$. We need to prove that $T_{n-1}$ minimizes the HM on $\mathcal{BT}^{\ast}_{n-1}$, i.e., $T_{n-1} \in B^{min}_{n-1}$.

By Lemma \ref{Lem:T_Min_H_properties}, all leaves of $T_n$ are part of a cherry. Let $[x,y]$ be an arbitrary cherry of $T_n$ with parent $v$. As in the proof of Proposition \ref{Prop:H_Min_n-1_to_Min_n}, the pending subtree of the other child of the parent of $v$ has at least height value $1$, i.e., it is not a single leaf. Thus,
\[\mathcal{H}(T_n)_{i+1} = \mathcal{H}(T_{n-1})_i \text{ for all } i = 1, \ldots, n-2.\]
By Lemma \ref{Lem:Min_n-1_n_sameHplus1}, $T_{n-1}$ minimizes the HM, and therefore $T_{n-1} \in B^{min}_{n-1}$. \medskip

Second, we show that $B_{n-1} \supseteq B^{min}_{n-1}$. Let $T_{n-1}$ be a binary tree with an odd number of leaves $n \geq 3$ that minimizes the HM, i.e., $T_{n-1} \in B^{min}_{n-1}$. By Lemma \ref{Lem:T_Min_H_properties}, $T_{n-1}$ has a unique leaf $x$ that is not part of a cherry. Thus, $T_{n-1}$ can be obtained from the binary tree $T_n$ on $n$ leaves that has a cherry in place of the leaf $x$ by deleting that cherry.

We need to prove that $T_{n}$ minimizes the HM, i.e., $T_n \in B^{min}_{n}$, which, by definition, implies $T_{n-1} \in B_{n-1}$. By construction,
\[\mathcal{H}(T_n)_{i+1} = \mathcal{H}(T_{n-1})_i \text{ for all } i = 1, \ldots, n-2.\]
By Lemma \ref{Lem:Min_n-1_n_sameHplus1}, $T_n$ minimizes the HM, and therefore $T_{n-1} \in B_{n-1}$.

This completes the proof.
\end{proof}

Next, we analyze the relationship between the minimizing trees with $n$ and $2n$ leaves. This is another property that minimizing trees share with gfb-trees, as the latter can be constructed from one another in the manner described in the next lemma (see Remark \ref{Rem:gfb_properties}).

\begin{Lem}
\label{Lem:Min_n_2n_all_f}
    Let $T_{n} \in \BT$ and $T_{2n} \in \mathcal{BT}^{\ast}_{2n}$ be such that $T_{2n}$ is obtained from $T_n$ by attaching a cherry to each of its leaves. Then $T_n$ (uniquely) minimizes the height metaconcept $\Phi^{\mathcal{H}}_f$ for all strictly increasing functions $f$ on $\BT$ if and only if $T_{2n}$ (uniquely) minimizes the height metaconcept $\Phi^{\mathcal{H}}_f$ for all strictly increasing functions $f$ on $\mathcal{BT}^{\ast}_{2n}$.
\end{Lem}

Note that this lemma implies a simple constructive relationship between minimizing trees with $n$ and $2n$ leaves. Specifically, every minimizing tree with $n$ leaves can be obtained by deleting all cherries from some minimizing tree with $2n$ leaves. Conversely, every minimizing tree with $2n$ leaves can be obtained by attaching a cherry to each leaf of some minimizing tree with $n$ leaves.

\begin{proof}[Proof of Lemma \ref{Lem:Min_n_2n_all_f}] 
    First, we analyze the relationship between the height sequences of two binary trees $T' \in \BT$ and $T'' \in \mathcal{BT}^{\ast}_{2n}$ when $T''$ is constructed from $T'$ by attaching a cherry to each of its leaves. By construction, $T''$ has $n$ ones in its height sequence corresponding to the $n$ parents of the attached cherries. Moreover, it contains all elements of $\mathcal{H}(T')$ increased by one, i.e.,
    \[\mathcal{H}\left(T''\right)_i = 1 \text{ for } i = 1, \ldots, n,\]
    and
    \[\mathcal{H}\left(T''\right)_i = \mathcal{H}\left(T'\right)_{i-n} +1 \text{ for } i = n+1, \ldots, 2n-1.\]

    Now, let $T_n$ and $T_{2n}$ be as described in the lemma. First, we prove that $T_{2n}$ (uniquely) minimizes the HM for all strictly increasing functions $f$ if $T_n$ (uniquely) minimizes it for all strictly increasing functions $f$. Assume $T_n$ (uniquely) minimizes the HM for all strictly increasing functions $f$, i.e., $\Phi^{\mathcal{H}}_f(T_n) \leq \Phi^{\mathcal{H}}_f(T)$ for all $T \in \BT\setminus\{T_n\}$ and all strictly increasing functions $f$ with strict inequality in case of unique minimization. By Lemma \ref{Lem:Seq_entries_Min_Max_meta_binary}, Parts 1(a) and 2(a), we have \[\mathcal{H}(T_n)_i \leq \mathcal{H}(T)_i \text{ for all } i = 1, \ldots, n-1 \text{ and all } T \in \BT\setminus\{T_n\},\]
    where, in the case of unique minimization, at least one inequality is strict.

    Seeking a contradiction, assume that $T_{2n}$ does not (uniquely) minimize the HM for all strictly increasing functions $f$ on $\mathcal{BT}^{\ast}_{2n}$. Then there exists a minimizing tree $\widetilde{T}_{2n} \in \mathcal{BT}^{\ast}_{2n}$ and some strictly increasing function $f$ such that $\Phi^{\mathcal{H}}_f\left(\widetilde{T}_{2n}\right) < \Phi^{\mathcal{H}}_f\left(T_{2n}\right)$, (or $\Phi^{\mathcal{H}}_f\left(\widetilde{T}_{2n}\right) \leq \Phi^{\mathcal{H}}_f\left(T_{2n}\right)$ in case of unique minimization). By Lemma \ref{Lem:T_Min_H_properties}, all leaves of $\widetilde{T}_{2n}$ are part of a cherry. Let $\widetilde{T}_n$ be the tree obtained from $\widetilde{T}_{2n}$ by deleting all its cherries.

    From the first part of this proof, we know that
    \[\mathcal{H}\left(T_{2n}\right)_i = \mathcal{H}\left(\widetilde{T}_{2n}\right)_i = 1 \text{ for } i = 1, \ldots, n\]
    and
    \[\mathcal{H}\left(T_{2n}\right)_i = \mathcal{H}\left(T_n\right)_{i-n} +1 \leq \mathcal{H}\left(\widetilde{T}_n\right)_{i-n} +1 = \mathcal{H}\left(\widetilde{T}_{2n}\right)_i \text{ for  } i = n+1, \ldots, 2n-1,\]
    where, in the case of unique minimization, the inequality is strict for at least one $i$.

    Since $f$ is strictly increasing, it follows that
    \[\Phi^{\mathcal{H}}_f\left(T_{2n}\right) = \sum\limits_{i=1}^{2n-1} f\left(\mathcal{H}\left(T_{2n}\right)_i\right) \leq \sum\limits_{i=1}^{2n-1} f\left(\mathcal{H}\left(\widetilde{T}_{2n}\right)_i\right) = \Phi^{\mathcal{H}}_f\left(\widetilde{T}_{2n}\right),\]
    with strict inequality in the case of unique minimization. This contradicts the assumption that $T_{2n}$ does not (uniquely) minimize the HM for all strictly increasing functions $f$, completing the proof of the if-direction.\\

    Second, we prove the only-if-direction, which proceeds in a similar way. Let $T_{2n}$ (uniquely) minimize the HM for all strictly increasing functions $f$, i.e., $\Phi^{\mathcal{H}}_f(T_{2n}) \leq \Phi^{\mathcal{H}}_f(T)$ for all $T \in \mathcal{BT}^{\ast}_{2n}\setminus\{T_{2n}\}$ and all strictly increasing $f$ with strict inequality in the case of unique minimization. By Lemma \ref{Lem:Seq_entries_Min_Max_meta_binary}, Parts 1(a) and 2(a), we have 
    \[\mathcal{H}(T_{2n})_i \leq \mathcal{H}(T)_i \text{ for all } i = 1, \ldots, 2n-1 \text{ and all } T \in \mathcal{BT}^{\ast}_{2n}\setminus\{T_{2n}\},\] where, in the case of unique minimization, at least one inequality is strict.

    Seeking a contradiction, assume that $T_{n}$ does not (uniquely) minimize the HM for all strictly increasing functions $f$ on $\BT$. Then there exists a minimizing tree $\widetilde{T}_{n} \in \BT$ and some strictly increasing function $f$ such that $\Phi^{\mathcal{H}}_f\left(\widetilde{T}_{n}\right) < \Phi^{\mathcal{H}}_f\left(T_{n}\right)$ (or $\Phi^{\mathcal{H}}_f\left(\widetilde{T}_{n}\right) \leq \Phi^{\mathcal{H}}_f\left(T_{n}\right)$ in case of unique minimization). Let $\widetilde{T}_{2n}$ be the tree obtained from $\widetilde{T}_{n}$ by attaching a cherry to each of its leaves.

    From the first part of this proof, we know that
    \[\mathcal{H}\left(T_n\right)_{i} = \mathcal{H}\left(T_{2n}\right)_{n+i} -1 \leq \mathcal{H}\left(\widetilde{T}_{2n}\right)_{n+i} -1 = \mathcal{H}\left(\widetilde{T}_n\right)_{i} \text{ for } i = 1, \ldots, n-1,\]
    where, in the case of unique minimization, at least one inequality is strict.

    Since $f$ is strictly increasing, it follows that
    \[\Phi^{\mathcal{H}}_f\left(T_{n}\right) = \sum\limits_{i=1}^{n-1} f\left(\mathcal{H}\left(T_{n}\right)_i\right) \leq \sum\limits_{i=1}^{n-1} f\left(\mathcal{H}\left(\widetilde{T}_{n}\right)_i\right) = \Phi^{\mathcal{H}}_f\left(\widetilde{T}_{n}\right),\]
    with strict inequality in the case of unique minimization. This contradicts the assumption that $T_n$ does not (uniquely) minimize the HM for all strictly increasing functions $f$, completing the proof of the only-if-direction, and thus the proof of the lemma.
\end{proof}

Having established the previous results, we are now in a position to prove Theorem \ref{Theo:Min_H_gfb_characterization}, which provides a complete characterization of the trees that minimize the HM on $\BT$.

\begin{proof}[Proof of Theorem \ref{Theo:Min_H_gfb_characterization}]
\label{Proof:Theo_Min_H_gfb_characterization}
    Let $B^{min}_n$ denote the set of binary trees that minimize the HM $\Phi^{\mathcal{H}}_f$ on $\BT$ for all (not necessarily strictly) increasing functions $f$. There are three statements to prove. We proceed by establishing them one by one, beginning with the first.
    \begin{enumerate}
        \item We first show that $T^{gfb}_n \in B^{min}_n$, i.e., the gfb-tree minimizes the HM on $\BT$ for all increasing functions $f$, and $\mathcal{H}\left(T^{gfb}_n\right)_i \leq \mathcal{H}\left(T\right)_i$ for all $i = 1, \ldots, n-1$ and $T \in \BT$. To this end, we begin by proving that the gfb-tree minimizes the HM on $\BT$ for all \textit{strictly} increasing functions. The full statement then follows from Lemma \ref{Lem:Seq_entries_Min_Max_meta_binary}, Part 1.

        For the sake of a contradiction, assume that the statement is false. Then there exists a smallest positive integer $m$ such that there is a binary tree $T^{min}_m \in \mathcal{BT}^{\ast}_m$ with $\Phi^{\mathcal{H}}_{f}\left(T^{min}_m\right) < \Phi^{\mathcal{H}}_{f}\left(T^{gfb}_m\right)$ for some strictly increasing function $f$.

        We first argue that $m > 3$ and that $m$ must be odd. For $m \leq 3$, there exists only one tree, namely the gfb-tree, so there is nothing to show. Moreover, by Remark \ref{Rem:gfb_properties}, the gfb-tree with an even number of leaves can be obtained from the gfb-tree with one leaf less by attaching a cherry to its unique leaf that is not part of a cherry. Hence, by Proposition \ref{Prop:H_Min_n_to_Min_n-1} and the minimality of $m$, $m$ must be odd.

        Let $n \coloneqq m+1$ (so that $n$ is even and $m = n-1$ is odd). By assumption and Lemma \ref{Lem:T_Min_H_properties}, tree $T^{min}_m$ has exactly one leaf that is not part of a cherry. Attaching a cherry to this leaf yields a tree $T^{min}_n$ on $n$ leaves. By Proposition \ref{Prop:H_Min_n-1_to_Min_n}, tree $T^{min}_n$ minimizes the HM on $\BT$, i.e., $\Phi^{\mathcal{H}}_{f}(T^{min}_n) \leq \Phi^{\mathcal{H}}_{f}\left(T^{gfb}_n\right)$.

        By Lemma \ref{Lem:T_Min_H_properties}, all leaves of $T^{min}_n$ are part of a cherry, since $T^{min}_n$ minimizes $\Phi^{\mathcal{H}}_{f}$. Likewise, all leaves of the gfb-tree $T^{gfb}_n$ are part of cherries. Let $T_{\frac{n}{2}} \in \mathcal{BT}^{\ast}_{\frac{n}{2}}$ be the tree obtained from $T^{min}_n$ by deleting all of its cherries. By Remark \ref{Rem:gfb_properties}, deleting all cherries of $T^{gfb}_n$ yields $T^{gfb}_{\frac{n}{2}}$. Note that $\frac{n}{2} < m$ since $m>3$.

        By the minimality of $m$, the gfb-tree $T^{gfb}_{\frac{n}{2}}$ minimizes the HM on $\mathcal{BT}^{\ast}_{\frac{n}{2}}$ for all strictly increasing functions $f$. Hence, by Lemma \ref{Lem:Min_n_2n_all_f}, the gfb-tree $T^{gfb}_n$ minimizes the HM on $\BT$ for all strictly increasing functions $f$. Finally, by Lemma \ref{Prop:H_Min_n_to_Min_n-1}, the gfb-tree $T^{gfb}_{n-1} = T^{gfb}_{m}$ minimizes the HM on $\mathcal{BT}^{\ast}_{n-1}$ for all strictly increasing functions $f$. This contradicts the assumption that $\Phi^{\mathcal{H}}_{f}\left(T^{min}_m\right) < \Phi^{\mathcal{H}}_{f}\left(T^{gfb}_{m}\right)$ for some strictly increasing function $f$. Therefore, the gfb-tree minimizes the HM for all strictly increasing functions $f$.

        Finally, by Lemma \ref{Lem:Seq_entries_Min_Max_meta_binary}, Part 1, the gfb-tree also minimizes the HM for all increasing functions $f$, and the statement regarding the height sequence follows. This completes this part of the proof.

        \item Let $T^{min}$ be a binary tree that minimizes the HM on $\BT$ for some strictly increasing function $f$. Note that $T^{min} \in B^{min}_n$ is possible. To prove Parts 2.i) and 2.ii) of Theorem \ref{Theo:Min_H_gfb_characterization}, we show that $T^{min}$ has the same height sequence as the gfb-tree.

        By the first part of the proof, the gfb-tree minimizes the HM on $\BT$ for all (not necessarily strictly) increasing functions, i.e., $T^{gfb}_n \in B^{min}_n$, and it satisfies
        \[\mathcal{H}\left(T^{gfb}_n\right)_i \leq \mathcal{H}\left(T\right)_i \text{ for all } i = 1, \ldots, n-1, \text{ and all } T \in \BT.\] 
        To prove the claim, we show that $T^{min}$ has the same height sequence as the gfb-tree. Consequently, it also minimizes the HM on $\BT$ for all (strictly) increasing functions, i.e., $T^{min} \in B^{min}_n$, and all trees in $B^{min}_n$ have the same height sequence as the gfb-tree.

        By assumption and the first part of the proof, we have \[\Phi^{\mathcal{H}}_f(T^{min}) = \Phi^{\mathcal{H}}_f\left(T^{gfb}_n\right)\] for some strictly increasing function $f$, and 
        \[\mathcal{H}\left(T^{gfb}_n\right)_i \leq \mathcal{H}(T^{min})_i \text{ for all } i = 1, \ldots, n-1.\] 
        Hence, $T^{min}$ must have exactly the same height sequence as the gfb-tree, since $f$ is strictly increasing; otherwise we would have
        \[\Phi^{\mathcal{H}}_f\left(T^{gfb}_n\right) = \sum\limits_{i=1}^{n-1} f\left(\mathcal{H}\left(T^{gfb}_n\right)_i\right) < \sum\limits_{i=1}^{n-1} f\left(\mathcal{H}\left(T^{min}\right)_i\right) = \Phi^{\mathcal{H}}_f\left(T^{min}\right),\]
        a contradiction.

        In particular,
        \[h(T^{min}) = \mathcal{H}(T^{min})_{n-1} = \mathcal{H}\left(T^{gfb}_n\right)_{n-1} = h\left(T^{gfb}_n\right) = \left\lceil\log_2(n)\right\rceil,\]
        which is the minimal possible height for any $T \in \BT$ (see Remark \ref{Rem:gfb_properties}). This completes this part of the proof.

        \item Finally, we show that all pending subtrees of $T^{min} \in B^{min}_n$ are minimizing trees as well. We prove this statement for maximal pending subtrees; the general case then follows recursively.

        Let $T^{min} = (T_1, T_2) \in B^{min}_n$ be a minimizing tree of the HM on $\BT$ for all increasing functions $f$, where $T_1$ and $T_2$ are its maximal pending subtrees with $n_1$ and $n_2$ leaves, respectively. 

        Without loss of generality, assume that $T_1 \notin B^{min}_{n_1}$, i.e., $T_1$ does not minimize the HM on $\mathcal{BT}^{\ast}_{n_1}$. Consider tree $\widetilde{T} = \left(T^{gfb}_{n_1},T_2\right)$. By Remark \ref{Rem:gfb_properties}, we have $h(T_1) \geq h\left(T^{gfb}_{n_1}\right)$, and thus $h\left(T^{min}\right) \geq h(\widetilde{T})$. We now show that $\widetilde{T}$ attains a strictly smaller value of the HM than $T^{min}$, contradicting the minimality of $T^{min}$:
        \begin{align*}
            \Phi^{\mathcal{H}}_f\left(T^{min}\right) - \Phi^{\mathcal{H}}_f(\widetilde{T}) &= f\left(h\left(T^{min}\right)\right) + \Phi^{\mathcal{H}}_f(T_1) + \Phi^{\mathcal{H}}_f(T_2) - \left(f(h(\widetilde{T})) + \Phi^{\mathcal{H}}_f\left(T^{gfb}_{n_1}\right) + \Phi^{\mathcal{H}}_f(T_2)\right)\\
            &\geq \Phi^{\mathcal{H}}_f(T_1) - \Phi^{\mathcal{H}}_f\left(T^{gfb}_{n_1}\right) > 0.
        \end{align*}
        The first inequality holds because $h\left(T^{min}\right) \geq h(\widetilde{T})$, and $f$ is increasing. The second inequality follows from the assumption that $T_1$ does not minimize the HM, together with the first part of the proof, which states that the gfb-tree minimizes the HM.

        This contradicts the minimality of $T^{min}$ and shows that every maximal pending subtree of a minimizing tree must itself be minimizing. By recursion, it follows that all pending subtrees of binary minimizing trees are minimizing trees.
    \end{enumerate}
    This completes the proof.
\end{proof}

Next, we state a direct consequence of Theorem \ref{Theo:Min_H_gfb_characterization}, which will be used frequently in subsequent proofs.

\begin{Cor}
\label{Cor:Min_n_2n}
    Let $f$ be a strictly increasing function. Let $T_{n} \in \BT$ and $T_{2n} \in \mathcal{BT}^{\ast}_{2n}$ be such that $T_{2n}$ is obtained from $T_n$ by attaching a cherry to each of its leaves. Then $T_n$ (uniquely) minimizes the height metaconcept $\Phi^{\mathcal{H}}_f$ on $\BT$ if and only if $T_{2n}$ (uniquely) minimizes the height metaconcept $\Phi^{\mathcal{H}}_f$ on $\mathcal{BT}^{\ast}_{2n}$.
\end{Cor}
\begin{proof}
    The proof follows directly from Lemma \ref{Lem:Min_n_2n_all_f} and Theorem \ref{Theo:Min_H_gfb_characterization}, Part 2(ii).
\end{proof}

We now know several properties of the minimizing trees, in particular that all share the same height sequence as the gfb-tree. We next turn our attention to the number of minimizing trees.

\subsubsection{The number of trees minimizing the HM}

In Proposition \ref{Prop:star_Min_on_T}, we have seen that the star tree is the unique tree minimizing the HM on $\T$ for all $1$-positive functions $f$. On the other hand, we proved in Theorem \ref{Theo:Min_H_gfb_characterization} that the gfb-tree minimizes the HM for all increasing functions on $\BT$. We now describe cases in which the gfb-tree is the unique minimizer. For $n = 2^h$ the HM has to be uniquely minimized by the gfb-tree (which, in this case, coincides with the fb-tree of height $h$) in order to be a (binary) imbalance index.

\begin{Prop}
\label{Prop:gfb_unique_Min}
    Let $f$ be strictly increasing. Then the gfb-tree $T^{gfb}_n$ is the unique tree minimizing the height metaconcept $\Phi^{\mathcal{H}}_f$ on $\BT$ if and only if $n = 2^h - 2^i$ for integers $h > i \geq 0$.

    Moreover, in the case of unique minimization of the gfb-tree, we have for all binary trees $T \in \BT \setminus \left\{T^{gfb}_{n}\right\}$ that
    \[\mathcal{H}\left(T^{gfb}_n\right)_i \leq \mathcal{H}\left(T\right)_i \text{ for all } i \in \left\{1, \ldots, n-1\right\}\]
    and
    \[\mathcal{H}\left(T^{gfb}_n\right)_i < \mathcal{H}\left(T\right)_i \text{ for at least one } i \in \left\{1, \ldots, n-1\right\}.\]
\end{Prop}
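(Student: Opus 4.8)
\emph{Proof idea.} The plan is to first translate the statement into a purely combinatorial one about the gfb-tree. By Theorem~\ref{Theo:Min_H_gfb_characterization}, for any strictly increasing $f$ the set of minimizers of $\Phi^{\mathcal{H}}_f$ on $\BT$ equals the set of binary trees with $n$ leaves whose height sequence equals $\mathcal{H}(T^{gfb}_n)$, and this set always contains $T^{gfb}_n$; in particular it does not depend on the chosen strictly increasing $f$. Hence ``$T^{gfb}_n$ is the unique minimizer'' is equivalent to $|B^{min}_n| = 1$, i.e.\ to $T^{gfb}_n$ being the only binary tree with its own height sequence. The ``moreover'' part is then immediate: the componentwise inequality $\mathcal{H}(T^{gfb}_n)_i \le \mathcal{H}(T)_i$ is Theorem~\ref{Theo:Min_H_gfb_characterization}(1), and if $|B^{min}_n| = 1$ then every $T \in \BT \setminus \{T^{gfb}_n\}$ satisfies $\mathcal{H}(T) \ne \mathcal{H}(T^{gfb}_n)$, so at least one of those inequalities must be strict.

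Next I would strip off powers of two. Write $n = 2^j m$ with $m$ odd. By Lemma~\ref{Lem:Min_n_2n}, together with the fact (Remark~\ref{Rem:gfb_properties}) that $T^{gfb}_{2n}$ arises from $T^{gfb}_n$ by attaching a cherry to every leaf, one gets $|B^{min}_n| = 1 \Longleftrightarrow |B^{min}_m| = 1$; and by uniqueness of the factorization into a power of two times an odd number, $n = 2^h - 2^i$ for some $h > i \ge 0$ if and only if $m = 2^k - 1$ for some $k \ge 1$. So it suffices to prove, for odd $m$, that $|B^{min}_m| = 1$ if and only if $m = 2^k - 1$.

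For the ``if'' direction (the case $m = 1$ being trivial) I would take $m = 2^k - 1$ with $k \ge 2$. Iterating Lemma~\ref{Lem:Min_n_2n} upward from $|B^{min}_1| = 1$ gives $|B^{min}_{2^k}| = 1$, so $B^{min}_{2^k} = \{T^{fb}_k\}$. Proposition~\ref{Prop:H_Min_n_to_Min_n-1} then identifies $B^{min}_{2^k - 1}$ with the set of trees obtained from $T^{fb}_k$ by deleting a single cherry; since the automorphism group of $T^{fb}_k$ acts transitively on its cherries, all these deletions yield isomorphic trees, so $|B^{min}_{2^k - 1}| = 1$. For the ``only if'' direction I would prove the contrapositive: assume $m$ is odd and $m \ne 2^k - 1$, so $m \ge 5$ and $m + 1$ is even but not a power of two. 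All leaves of $T^{gfb}_{m+1}$ lie in cherries (Remark~\ref{Rem:gfb_properties}); if all of its cherry parents had the same depth $d$, then all its leaves would have depth $d+1$, making $T^{gfb}_{m+1}$ a fully balanced tree and forcing $m+1$ to be a power of two, a contradiction. Hence $T^{gfb}_{m+1}$ has cherry parents at two distinct depths. By Proposition~\ref{Prop:H_Min_n_to_Min_n-1}, deleting any cherry of $T^{gfb}_{m+1}$ yields a tree in $B^{min}_m$; moreover, since $m$ is odd, each such tree has a unique leaf that is not part of a cherry (Lemma~\ref{Lem:T_Min_H_properties}), and this leaf sits at exactly the depth of the deleted cherry parent. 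Deleting cherries of two different depths therefore produces two non-isomorphic trees in $B^{min}_m$, so $|B^{min}_m| \ge 2$.

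The step I expect to be the real crux is assembling these reductions so that they interlock, and in particular making rigorous the final isomorphism-invariant argument: one must verify that deleting cherries of different depths from $T^{gfb}_{m+1}$ really does give non-isomorphic trees, which is precisely where Lemma~\ref{Lem:T_Min_H_properties} is indispensable, as it pins the resulting tree down to having a single non-cherry leaf whose depth is an invariant. The remaining work is routine bookkeeping with the already-established constructions in Lemma~\ref{Lem:Min_n_2n}, Proposition~\ref{Prop:H_Min_n-1_to_Min_n}, and Proposition~\ref{Prop:H_Min_n_to_Min_n-1}, together with the elementary observation that a binary tree all of whose leaves have the same depth is a fully balanced tree.
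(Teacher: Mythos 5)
Your proposal is correct and takes essentially the same route as the paper: both arguments rest on Lemma~\ref{Lem:Min_n_2n} (doubling/halving via attaching or deleting cherries at every leaf), Proposition~\ref{Prop:H_Min_n_to_Min_n-1}, Lemma~\ref{Lem:T_Min_H_properties}, and Theorem~\ref{Theo:Min_H_gfb_characterization}, with uniqueness for $n=2^k-1$ obtained from the symmetric cherry deletion in $T^{fb}_k$ and non-uniqueness from deleting cherries at two distinct depths in an even-sized minimizer. Your repackaging---reducing to the odd part of $n$ rather than the paper's double induction on $(h,i)$ and minimal-counterexample descent, and making the non-isomorphism explicit via the depth of the unique non-cherry leaf (which the paper leaves implicit)---is a clean but inessential variation of the same argument.
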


Note that by choosing $i = h-1$, the number of leaves $n$ can attain any power of two as then $n = 2^h - 2^i = 2^{h-1}$. 

Moreover, the only \textit{odd} leaf numbers for which the minimizing tree is unique are those with $i=0$, i.e., $n = 2^h-1$ for some positive integer $h$.

\begin{proof}
    Let $f$ be a strictly increasing function. We first prove the if-and-only-if statement. By Theorem \ref{Theo:Min_H_gfb_characterization}, Part 1, we know that the gfb-tree minimizes the HM for all $n$. Hence, it suffices to show that there is a unique tree minimizing the HM for leaf numbers of the form $n = 2^h - 2^i$ for arbitrary integers $h > i \geq 0$, and that there are at least two distinct trees minimizing the HM if $n \neq 2^h - 2^i$ for all $h > i \geq 0$.

    First, let $n = 2^h - 2^i$ for arbitrary integers $h > i \geq 0$. We have to show that there is a unique tree minimizing the HM. 
    Our proof proceeds by double induction on the pairs $(h,i)$ with $h > i \geq 0$.

    For the base case, consider all pairs $(h,0)$ with $h \geq 1$, i.e., $i = 0$ and $n = 2^h - 1$. If $h = 1$, we have $n = 1$, and the statement clearly holds (as there exists only one tree with $n = 1$ leaves). Thus, assume in the following that $h \geq 2$. We first show that there is a unique tree minimizing the HM on $n+1 = 2^h$ leaves. By Theorem \ref{Theo:Min_H_gfb_characterization}, Part 2(i), any minimizing tree with $n+1 = 2^h$ leaves must have height $h$. The fb-tree $T^{fb}_{h}$ is the only tree of height $h$, and it coincides with the gfb-tree $T^{gfb}_{2^h}$. Therefore, the minimizer for $n+1 = 2^h$ leaves is unique.

    Furthermore, due to the symmetry of $T^{fb}_h$, deleting any cherry produces the same tree on $n = 2^h -1$ leaves. By Proposition \ref{Prop:H_Min_n_to_Min_n-1}, this resulting tree is the unique HM-minimizing tree for $n=2^h-1$. Hence, the base case holds for all pairs $(h,0)$.

    For the induction step, assume that the statement holds for the pair $(h-1,i-1)$ with $0 \leq i-1 < h-1$. Now consider the pair $(h,i)$ with $h > i > 0$ (note that the latter inequality ensures that we are no longer in the base case of the induction), so that $n = 2^h - 2^i$. By the induction hypothesis, there exists a unique tree, say $T_1$, that minimizes the HM for $\frac{n}{2} = 2^{h-1} - 2^{i-1}$. Let $T_2$ be the tree obtained from $T_1$ by attaching a cherry to each of its leaves. Then, by Corollary \ref{Cor:Min_n_2n}, $T_2$ is the unique tree minimizing the HM for $n = 2^h - 2^i$, and hence for the pair $(h,i)$. This completes the double induction.

    \medskip

    It remains to show that there exist at least two distinct trees minimizing the HM if $n \neq 2^h - 2^i$ for all $h > i \geq 0$. We consider two cases, depending on whether $n$ is even or odd.

    First, suppose $n$ is odd and $n \neq 2^h - 2^i$ for all $h > i \geq 0$. In particular, $n \neq 2^h -1 = 2^h - 2^0$, and thus $n \geq 5$. By Proposition \ref{Prop:H_Min_n_to_Min_n-1}, any minimizing tree with $n$ leaves can be obtained from a minimizing tree with $n+1$ leaves by deleting an arbitrary cherry. Let $T_{n+1}$ be a minimizing tree with $n+1$ leaves. Since $n+1 \neq 2^h$, tree $T_{n+1}$ must contain leaves of different depths. In fact, by Lemma \ref{Lem:T_Min_H_properties}, $T_{n+1}$ contains two cherries at different depths. Deleting either of these cherries yields two distinct trees with $n$ leaves. Therefore, at least two distinct trees with $n$ leaves minimize the HM.

    Second, let $n$ be even and $n \neq 2^h - 2^i$ for all $h > i > 0$. Note that the case $i = 0$ cannot occur here, since it would imply that $n$ is odd. Moreover, we must have $n > 4$, because $2 = 2^2 - 2^1$ and $4 = 2^3 - 2^2$. Now, let $n$ be the smallest such leaf number for which there is a unique tree minimizing the HM. We now distinguish two cases, depending on whether $\frac{n}{2}$ is even or odd. As we shall see, both cases will lead to a contradiction, implying that no such $n$ exists.

    \begin{itemize}
        \item Suppose first that $\frac{n}{2}$ is even. As, by assumption, $n$ cannot be expressed as a difference of two powers of two, we also have $\frac{n}{2} \neq 2^h - 2^i$ for all $h > i > 0$. By the minimality of $n$, there must be at least two distinct trees minimizing the HM for $\frac{n}{2}$ leaves. By Corollary \ref{Cor:Min_n_2n}, this implies that there are also at least two distinct trees minimizing the HM for $n$ leaves. This contradicts the assumed uniqueness of the minimizer for $n$.
        \item Now suppose that $\frac{n}{2}$ is odd. Again, as $n$ cannot be expressed as a difference of two powers of two, we also have $\frac{n}{2} \neq 2^h - 2^i$ for all $h > i \geq 0$. In particular, $\frac{n}{2} \neq 2^h - 1 = 2^h - 2^0$ for all $h$, so the case $i = 0$ is excluded. By the previous part of the proof (the case for odd $n$), there must again be at least two distinct trees minimizing the HM for $\frac{n}{2}$ leaves. Applying Corollary \ref{Cor:Min_n_2n} once more, we conclude that there are at least two distinct trees minimizing the HM for $n$ leaves as well. This contradiction completes the proof of the if-and-only-if statement.
    \end{itemize}

    Finally, since $f$ was arbitrary, the corresponding statement for the entries of the height sequence follows directly from the preceding arguments together with Lemma \ref{Lem:Seq_entries_Min_Max_meta_binary}, Parts 1(a) and 2(a).
\end{proof}

Besides the gfb-tree, there exist additional families of trees that also minimize the HM on $\BT$. These families of trees share certain structural properties, which are stated in the following proposition. Note that, by Remark \ref{Rem:gfb_properties}, the gfb-tree itself satisfies these properties as well.

\begin{Prop}\label{Prop:H_Min_T_not_gfb}
    Let $\{T_n\}_{n\in\N}$ be a family of binary trees satisfying the following properties:
    \begin{enumerate}
        \item For every $n \in \mathbb{N}$, attaching a cherry to each leaf of $T_n$ yields $T_{2n}$.
        \item If $n$ is even, then every leaf of $T_n$ is part of a cherry.
        \item If $n$ is odd, then $T_n$ has exactly one leaf that is not part of a cherry, and attaching a cherry to this leaf yields $T_{n+1}$.
    \end{enumerate}
    Then $T_n$ has the same height sequence as the gfb-tree $T^{gfb}_n$ and therefore minimizes the height metaconcept $\Phi^{\mathcal{H}}_f$ for all (not necessarily strictly) increasing functions $f$ on $\BT$.
\end{Prop}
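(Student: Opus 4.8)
The plan is to prove, by strong induction on $n$, that $T_n$ has the same height sequence as $T^{gfb}_n$; the minimization claim then follows at once, since $\Phi^{\mathcal{H}}_f(T)$ depends only on $\mathcal{H}(T)$ and $T^{gfb}_n$ minimizes $\Phi^{\mathcal{H}}_f$ on $\BT$ for every increasing $f$ by Theorem~\ref{Theo:Min_H_gfb_characterization}. For the base cases $n \le 3$ there is only one binary tree on $n$ leaves, and properties~1 and~3 force $T_n$ to be it, so $T_n = T^{gfb}_n$ trivially.

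For the inductive step, fix $n \ge 4$ and assume the claim for all smaller indices. If $n$ is even, then by property~1 the tree $T_n$ arises from $T_{n/2}$ by attaching a cherry to each of its leaves, and by Remark~\ref{Rem:gfb_properties} the very same operation turns $T^{gfb}_{n/2}$ into $T^{gfb}_n$. I would then invoke the height-sequence computation carried out in the first part of the proof of Lemma~\ref{Lem:Min_n_2n}: attaching a cherry to every leaf of a binary tree $S$ on $m$ leaves yields a tree whose height sequence is $m$ copies of $1$ followed by the entries of $\mathcal{H}(S)$ each increased by one. Applying this to both $T_{n/2}$ and $T^{gfb}_{n/2}$, and using the induction hypothesis $\mathcal{H}(T_{n/2}) = \mathcal{H}(T^{gfb}_{n/2})$, gives $\mathcal{H}(T_n) = \mathcal{H}(T^{gfb}_n)$.

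If $n$ is odd (hence $n \ge 5$), I would route through $T_{n+1}$. Since $n+1$ is even and $(n+1)/2 < n$, the even case above -- with the induction hypothesis applied to the strictly smaller index $(n+1)/2$ -- yields $\mathcal{H}(T_{n+1}) = \mathcal{H}(T^{gfb}_{n+1})$. It then remains to peel off one cherry. By property~3, $T_{n+1}$ is obtained from $T_n$ by attaching a cherry to the unique leaf $x$ of $T_n$ that is not part of a cherry; because $x$ is not in a cherry, the pending subtree rooted at the sibling of $x$ has height at least $1$, so this operation leaves the height value of the parent of $x$ (and therefore of every ancestor) unchanged and simply introduces one new vertex of height value $1$. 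Consequently $\mathcal{H}(T_{n+1})$ is $\mathcal{H}(T_n)$ with a single extra leading $1$, i.e.\ $\mathcal{H}(T_{n+1})_{i+1} = \mathcal{H}(T_n)_i$ for $i = 1,\dots,n-1$. By Remark~\ref{Rem:gfb_properties}, $T^{gfb}_{n+1}$ is obtained from $T^{gfb}_n$ in exactly the same way, so the analogous identity holds for the gfb-trees, and combining it with $\mathcal{H}(T_{n+1}) = \mathcal{H}(T^{gfb}_{n+1})$ forces $\mathcal{H}(T_n) = \mathcal{H}(T^{gfb}_n)$.

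The hard part is the odd case, and really only two points there need care: first, keeping the induction well-founded -- we mention $T_{n+1}$ but only access it through $T_{(n+1)/2}$, whose index is genuinely smaller than $n$; and second, justifying precisely that grafting a cherry onto a non-cherry leaf alters no existing height value, which is exactly where property~3 (together with the observation that the sibling subtree is nontrivial) is indispensable. Everything else is routine bookkeeping with the recursive description of $\mathcal{H}$ (Observation~\ref{Obs:H_recursiveness}) and the known structure of the gfb-tree.
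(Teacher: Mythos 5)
Your proposal is correct and follows essentially the same route as the paper: induction (the paper phrases it as a minimal counterexample), an even/odd case split, the doubling relation for even $n$, and routing the odd case through $T_{n+1}$ via the unique non-cherry leaf. The only difference is cosmetic -- you track the height sequences explicitly and re-derive the cherry-attach/peel identities, where the paper instead invokes Lemma~\ref{Lem:Min_n_2n} and Proposition~\ref{Prop:H_Min_n_to_Min_n-1} as black boxes at the level of HM-minimization (which is equivalent by Theorem~\ref{Theo:Min_H_gfb_characterization}).
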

\begin{proof}
    Note that having the same height sequence as the gfb-tree is equivalent to minimizing the HM (see Theorem \ref{Theo:Min_H_gfb_characterization}, Part 2(i)). First, let $f$ be a strictly increasing function. We show that $T_n \in \{T_n\}_{n\in\N}$ minimizes the HM for all $n$.

    Let $n$ be the minimum number of leaves such that $T_n$ does \emph{not} minimize the HM. For $n \leq 3$, only one tree exists, which coincides with the gfb-tree. Hence, this tree must belong to the family $\{T_n\}_{n\in\N}$. Now, let $n \geq 4$. Next, we consider two cases, depending on whether $n$ is even or odd.

    First, assume that $n$ is even. By the minimality of $n$, the tree $T_{\frac{n}{2}}$ minimizes the HM. Hence, by Corollary \ref{Cor:Min_n_2n}, it follows that $T_n$ also minimizes the HM, which contradicts the choice of $n$.

    Second, assume that $n$ is odd. In this case, $T_{n+1}$ can be obtained from $T_n$ by attaching a cherry to its unique leaf that is not part of a cherry. Since $n \geq 4$, the same argument as in the even case shows that $T_{n+1}$ minimizes the HM. Then, by Proposition \ref{Prop:H_Min_n_to_Min_n-1}, it follows that $T_n$ also minimizes the HM, again contradicting the choice of $n$.

    Hence, $T_n \in \{T_n\}_{n\in\N}$ minimizes the HM for some strictly increasing function $f$ for all $n$. The statement now follows from Theorem \ref{Theo:Min_H_gfb_characterization}, Part 2.

    This completes the proof.
\end{proof}

Next, we show that, in addition to the gfb-tree, the echelon tree also minimizes the HM.

\begin{Prop}
\label{Prop:H_echelon_Min}
    The echelon tree has the same height sequence as the gfb-tree $T_n^{gfb}$ and thus minimizes the height metaconcept $\Phi^{\mathcal{H}}_f$ for all (not necessarily strictly) increasing functions $f$ on $\BT$.
\end{Prop}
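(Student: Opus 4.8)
The plan is to invoke Proposition~\ref{Prop:H_Min_T_not_gfb}: it suffices to check that the family $\{T^{be}_n\}_{n\in\N}$ of echelon trees satisfies the three structural properties listed there, and the proposition then immediately yields that $T^{be}_n$ has the same height sequence as $T^{gfb}_n$ and hence minimizes $\Phi^{\mathcal{H}}_f$ on $\BT$ for every increasing $f$. The main tool throughout will be the recursive standard decomposition $T^{be}_n=\left(T^{fb}_{k_n-1},T^{be}_{n-2^{k_n-1}}\right)$ with $k_n=\lceil\log_2(n)\rceil$ recalled in Section~\ref{Sec:Prelim}, together with two elementary facts: (a) attaching a cherry to every leaf of $T^{fb}_h$ yields $T^{fb}_{h+1}$; and (b) $k_{2n}=k_n+1$, while $k_{n+1}=k_n$ whenever $n\geq 3$ is odd (since then $2^{k_n-1}<n<2^{k_n}$, so $2^{k_n-1}<n+1\leq 2^{k_n}$).

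First I would verify Property~1 (attaching a cherry to each leaf of $T^{be}_n$ gives $T^{be}_{2n}$) by strong induction on $n$ using the standard decomposition: the left component $T^{fb}_{k_n-1}$ becomes $T^{fb}_{k_n}$ by~(a), and the right component becomes $T^{be}_{2(n-2^{k_n-1})}=T^{be}_{2n-2^{k_n}}$ by the induction hypothesis; since $k_{2n}-1=k_n$ by~(b), this is precisely the standard decomposition of $T^{be}_{2n}$, with $T^{be}_1$ as base case. Property~2 (for even $n$, every leaf of $T^{be}_n$ lies in a cherry) follows from the same recursion, since for $n\geq 3$ the subtree $T^{fb}_{k_n-1}$ has height at least one (so all of its leaves are cherry leaves) and $n-2^{k_n-1}$ is again even and smaller. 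The first half of Property~3 (for odd $n$, $T^{be}_n$ has exactly one leaf outside a cherry) is analogous: such a leaf cannot lie in the all-cherry subtree $T^{fb}_{k_n-1}$, hence it lies in $T^{be}_{n-2^{k_n-1}}$, whose leaf number is odd and strictly smaller, so induction applies, with the single-leaf tree $T^{be}_1$ as the degenerate base case.

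The step I expect to be the principal obstacle is the second half of Property~3: attaching a cherry to the unique non-cherry leaf of $T^{be}_n$ (for odd $n$) yields $T^{be}_{n+1}$. A direct combinatorial argument is awkward because this operation can ``cascade'' up the top caterpillar of the echelon tree whenever several consecutive powers of two occur in the binary expansion of $n$ (the carry in $n\mapsto n+1$). The clean route is again via the standard decomposition: by the first half of Property~3 the non-cherry leaf of $T^{be}_n$ lies in its echelon component $T^{be}_{n-2^{k_n-1}}$, which has odd and strictly smaller leaf number, so by the induction hypothesis attaching a cherry there turns it into $T^{be}_{n+1-2^{k_n-1}}$; since $k_{n+1}=k_n$ by~(b), the resulting tree is $\left(T^{fb}_{k_{n+1}-1},T^{be}_{n+1-2^{k_{n+1}-1}}\right)=T^{be}_{n+1}$, as required (using that the standard decomposition of a binary tree is unique as an unordered pair). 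One must still dispatch the small base cases ($n\leq 3$, where the gfb-tree is the only tree) and the degenerate instances in which the echelon component reduces to a single leaf or a single cherry, but these are routine. Having verified all three properties, Proposition~\ref{Prop:H_Min_T_not_gfb} completes the proof.
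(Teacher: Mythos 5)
Your proposal is correct, and its overall skeleton coincides with the paper's: the paper likewise reduces Proposition~\ref{Prop:H_echelon_Min} to checking the three hypotheses of Proposition~\ref{Prop:H_Min_T_not_gfb}, which it does in a separate lemma (Lemma~\ref{Lem:echelon_properties}). Where you genuinely diverge is in how those three properties are verified. The paper argues directly from the top-caterpillar/binary-expansion construction of $T^{be}_n$: parity of the $2^0$-term gives the at-most-one non-cherry leaf, the shift of the binary expansion under $n\mapsto 2n$ gives the doubling property, and the delicate carry in $n\mapsto n+1$ is handled explicitly by isolating the \enquote{leading sequence} $2^0+2^1+\cdots+2^i$ of consecutive powers of two and showing that the corresponding pending subtree collapses to $T^{fb}_{i+1}$ once the singleton cherry leaf is doubled. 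You instead run a single induction over the standard decomposition $T^{be}_n=\left(T^{fb}_{k_n-1},T^{be}_{n-2^{k_n-1}}\right)$, using $k_{2n}=k_n+1$ and $k_{n+1}=k_n$ for odd $n\geq 3$, so the carry is absorbed automatically by the identity $T^{be}_{2^j}=T^{fb}_j$ at the bottom of the recursion rather than analyzed combinatorially; the localization of the non-cherry leaf in the echelon component, plus uniqueness of the standard decomposition, makes the induction step for the hardest property (attaching a cherry to the non-cherry leaf yields $T^{be}_{n+1}$) essentially formal. Your route is shorter and less error-prone precisely at the step the paper spends most effort on, at the cost of relying on the stated decomposition formula and of giving less structural insight into how the carry reshapes the top caterpillar; the degenerate cases you defer (echelon component equal to a single leaf or a single cherry, and $n\leq 3$) are indeed routine, since the sibling subtree $T^{fb}_{k_n-1}$ is never a leaf there.
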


Before proving this proposition, we show that the echelon tree satisfies the properties stated in Proposition \ref{Prop:H_Min_T_not_gfb}.

\begin{Lem}
\label{Lem:echelon_properties}
    Let $T^{be}_n$ be the echelon tree with $n$ leaves. Attaching a cherry to each leaf of $T^{be}_n$ yields $T^{be}_{2n}$. Moreover, if $n$ is even, every leaf of $T^{be}_n$ belongs to a cherry. If $n$ is odd, $T^{be}_n$ has a unique leaf that is not part of a cherry, and attaching a cherry to this leaf yields $T^{be}_{n+1}$.
\end{Lem}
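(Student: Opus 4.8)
The plan is to prove the three claims together by strong induction on $n$, using the standard decomposition $T^{be}_n = \left(T^{fb}_{k_n-1},\,T^{be}_{n-2^{k_n-1}}\right)$ recorded in the excerpt (valid for $n\ge 2$, with $1\le n-2^{k_n-1}<n$), together with two elementary facts about $k_n=\lceil\log_2 n\rceil$: $k_{2n}=k_n+1$, and $k_{n+1}=k_n$ whenever $n\ge 3$ is odd. I would first record a preliminary observation: $T^{fb}_{h+1}$ is obtained from $T^{fb}_h$ by attaching a cherry to every leaf — immediate, since $T^{fb}_h$ has $2^h$ leaves all at depth $h$, so the resulting binary tree has $2^{h+1}$ leaves all at depth $h+1$ and hence equals $T^{fb}_{h+1}$ — and that every leaf of $T^{fb}_h$ is part of a cherry once $h\ge 1$. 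The base cases $n\in\{1,2,3\}$ are checked by inspection (e.g.\ $T^{be}_1$ is a single leaf, $T^{be}_2=T^{fb}_1$, and $T^{be}_3=\left(T^{fb}_0,T^{fb}_1\right)=T^{cat}_3$).

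For the first claim, apply the standard decomposition to $2n$: since $k_{2n}=k_n+1$, we obtain $T^{be}_{2n}=\left(T^{fb}_{k_n},\,T^{be}_{2(n-2^{k_n-1})}\right)$. Writing $m:=n-2^{k_n-1}<n$, the induction hypothesis says that attaching a cherry to each leaf of $T^{be}_m$ produces $T^{be}_{2m}$, and by the preliminary observation the same operation turns $T^{fb}_{k_n-1}$ into $T^{fb}_{k_n}$. Since the leaf set of $T^{be}_n=\left(T^{fb}_{k_n-1},T^{be}_m\right)$ is the disjoint union of the leaf sets of its two maximal pending subtrees, attaching a cherry to every leaf of $T^{be}_n$ yields exactly $\left(T^{fb}_{k_n},T^{be}_{2m}\right)=T^{be}_{2n}$.

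For the cherry-structure claims, note that for $n\ge 4$ we have $k_n\ge 2$, hence $k_n-1\ge 1$, so every leaf of the first component $T^{fb}_{k_n-1}$ lies in a cherry; moreover $n-2^{k_n-1}$ has the same parity as $n$, because $2^{k_n-1}$ is even. If $n$ is even, the induction hypothesis gives that all leaves of $T^{be}_{n-2^{k_n-1}}$ lie in cherries, hence all leaves of $T^{be}_n$ do. If $n$ is odd, the induction hypothesis yields a unique non-cherry leaf in $T^{be}_{n-2^{k_n-1}}$, which is then the unique non-cherry leaf of $T^{be}_n$. Attaching a cherry to that leaf modifies only the second component, which by the induction hypothesis becomes $T^{be}_{(n-2^{k_n-1})+1}$; the resulting tree is therefore $\left(T^{fb}_{k_n-1},\,T^{be}_{(n+1)-2^{k_n-1}}\right)$, and this equals $T^{be}_{n+1}$ because $k_{n+1}=k_n$ makes $\left(T^{fb}_{k_{n+1}-1},T^{be}_{(n+1)-2^{k_{n+1}-1}}\right)$ the standard decomposition of $T^{be}_{n+1}$.

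The main obstacle I anticipate is purely bookkeeping: confirming that the standard decomposition applies with the correct indices in every regime, and that the parity and $k$-value identities hold at the boundaries — in particular verifying $k_{n+1}=k_n$ for odd $n\ge 3$ (including the case $n=2^{k_n}-1$, where $n+1$ is a power of two and $T^{fb}_{k_n-1}$ becomes one of the two halves of $T^{fb}_{k_n}$), and handling small $n$ where $n-2^{k_n-1}$ may be as small as $1$ so that the relevant induction hypothesis is only the base case. None of this is deep, but it needs care to make the induction airtight.
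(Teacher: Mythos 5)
Your proof is correct, but it takes a genuinely different route from the paper's. The paper argues directly from the construction of $T^{be}_n$ as a top caterpillar on $w(n)$ leaves decorated with fb-trees: the cherry statements follow from the observation that $2^0$ occurs in the binary expansion of $n$ exactly when $n$ is odd; the doubling statement from $w(2n)=w(n)$ together with each attached fb-tree doubling in size; and the odd case from an explicit analysis of the binary carry, namely the leading block $2^0+2^1+\cdots+2^i$ of the expansion of $n$ collapsing to $2^{i+1}$ in $n+1$, matched with the corresponding pending subtree of $T^{be}_n$ turning into $T^{fb}_{i+1}$ once the singleton leaf is replaced by a cherry. You instead run a strong induction via the standard decomposition $T^{be}_n=\left(T^{fb}_{k_n-1},T^{be}_{n-2^{k_n-1}}\right)$, using $k_{2n}=k_n+1$ and $k_{n+1}=k_n$ for odd $n\ge 3$ (which indeed also covers $n=2^{k_n}-1$). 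This is sound: the decomposition is stated in the preliminaries and holds for all $n\ge 2$, including powers of two where it reads $\left(T^{fb}_{k_n-1},T^{fb}_{k_n-1}\right)$; the quantity $m=n-2^{k_n-1}$ satisfies $1\le m<n$ and has the parity of $n$ once $k_n\ge 2$; the unique non-cherry leaf of the second component stays non-cherry in the whole tree (for $m=1$ its sibling is the root of $T^{fb}_{k_n-1}$, an inner vertex); and the carry propagation the paper treats explicitly is absorbed into the recursive appeal to the induction hypothesis on the second component. What your approach buys is a shorter, more mechanical argument that avoids the somewhat delicate description of the leading run of ones; what the paper's direct argument provides instead is the explicit structural picture of which fb-subtrees make up $T^{be}_n$, with no induction and no $k$-value bookkeeping. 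The bookkeeping you flag (the $k$-identities, the parity claim, and the base cases $n\le 3$) all checks out, so the plan goes through as written.
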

\begin{proof}
    For the proof, recall that the echelon tree can be constructed from a caterpillar on $w(n)$ leaves, where $w(n)$ is the binary weight of $n$, i.e., the number of non-zero summands in its binary expansion. In the construction, each leaf of the caterpillar is replaced by an fb-tree whose height is determined by the binary expansion of $n$ (see Section~\ref{Sec:Prelim} for details).

    First, we show that at most one leaf of $T^{be}_n$ is not part of a cherry. This follows immediately from the fact that the summand $2^0 = 1$ occurs in the binary expansion of $n$ if and only if $n$ is odd; thus, there is exactly one fb-tree of height $0$ if $n$ is odd, and hence at most one leaf not belonging to a cherry for all $n$.

    Second, we show that attaching a cherry to each leaf of $T^{be}_n$ yields $T^{be}_{2n}$. Observe that $n$ and $2n$ have the same binary weight, since the binary expansion of $2n$ is obtained from that of $n$ by multiplying every summand by 2. Consequently, both $T^{be}_n$ and $T^{be}_{2n}$ are constructed from the same caterpillar. The only difference is that, for $T^{be}_{2n}$, each attached fb-tree has twice the size of the corresponding fb-tree used for $T^{be}_n$. This proves the second claim.

    Finally, we prove that if $n$ is odd, attaching a cherry to the unique leaf of the echelon tree that is not part of a cherry yields $T^{be}_{n+1}$. Suppose, for contradiction, that this statement fails, and let $n$ be the smallest number of leaves for which it does not hold. Note that this implies that $n>3$, because for $n=3$, there is only one tree, and attaching a cherry to its singleton leaf yields $T_2^{fb}$, i.e., the fb-tree on 4 leaves, which is obviously an echelon tree. Thus, we now assume $n\geq 5$. We can now express $n$ as follows: $n = 2^h + p$, where $h = \lfloor\log_2(n)\rfloor$ and $p \in \{1,3, \ldots, 2^{h}-1\}$ as $n$ is odd. Then, by definition, $T^{be}_n = \left(T^{fb}_h,T^{be}_p\right)$ and $T^{be}_{n+1} = \left(T^{fb}_h,T^{be}_{p+1}\right)$. Thus, it remains to show that attaching a cherry to $T^{be}_n$ yields $\left(T^{fb}_h,T^{be}_{p+1}\right)$. This follows because the unique leaf in $T^{be}_n$ that is not part of a cherry lies in the pending subtree $T^{be}_p$. By the minimality of $n$, $T^{be}_{p+1}$ can be obtained from $T^{be}_p$ by attaching a cherry to this unique leaf. This contradiction completes the proof of the claim.
\end{proof}

We are now in a position to prove Proposition \ref{Prop:H_echelon_Min}.

\begin{proof}[Proof of Proposition \ref{Prop:H_echelon_Min}]
    The statement for the echelon tree follows directly from Proposition \ref{Prop:H_Min_T_not_gfb} and Lemma \ref{Lem:echelon_properties}.
\end{proof}

Next, we show that all pending subtrees of a minimizing tree can be replaced by another minimizing tree of the same size of the pending subtree in order to obtain another minimizing tree.

\begin{Prop}
\label{Prop:min_H_replacing_subtree}
    Let $T^{min}_n$ be a tree containing a pending subtree $T_v$ of size $n_v$ that minimizes the height metaconcept for all increasing functions $f$ on $\BT$. Let $T^{min}_v$ be another tree with $n_v$ leaves that minimizes the height metaconcept for all increasing functions $f$ on $\mathcal{BT}^{\ast}_{n_v}$. Let $T'$ be the tree obtained from $T^{min}_n$ by replacing the subtree $T_v$ with $T^{min}_v$. Then $T'$ also minimizes the height metaconcept for all increasing functions $f$ on $\BT$.
\end{Prop}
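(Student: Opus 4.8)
The plan is to show that replacing the pending subtree $T_v$ by $T^{min}_v$ leaves the \emph{entire} height sequence unchanged, so that $T'$ has the same height sequence as $T^{min}_n$ and hence as the gfb-tree; minimality of $T'$ then follows immediately from Theorem \ref{Theo:Min_H_gfb_characterization}.

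First I would collect the consequences of Theorem \ref{Theo:Min_H_gfb_characterization}. Since $T^{min}_n$ minimizes $\Phi^{\mathcal{H}}_f$ on $\BT$ for all increasing $f$, Part 2.i) gives $\mathcal{H}(T^{min}_n) = \mathcal{H}(T^{gfb}_n)$, and Part 3 gives that the pending subtree $T_v \in \mathcal{BT}^{\ast}_{n_v}$ itself minimizes $\Phi^{\mathcal{H}}_f$ on $\mathcal{BT}^{\ast}_{n_v}$ for all increasing $f$, so $\mathcal{H}(T_v) = \mathcal{H}(T^{gfb}_{n_v})$. Applying the same theorem to $T^{min}_v$ yields $\mathcal{H}(T^{min}_v) = \mathcal{H}(T^{gfb}_{n_v}) = \mathcal{H}(T_v)$; in particular $h(T^{min}_v) = h(T_v)$, and also $n_v = |V_L(T^{min}_v)|$ so $T' \in \BT$.

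Next I would prove $\mathcal{H}(T') = \mathcal{H}(T^{min}_n)$ by splitting $\mathring{V}(T^{min}_n)$ into three blocks: the inner vertices of $T_v$, the (strict) ancestors of $v$, and the remaining inner vertices. Using that a vertex has the same height value in a tree as in the pending subtree rooted at it (and Observation \ref{Obs:H_recursiveness} applied iteratively), the inner vertices of $T_v$ contribute exactly the entries of $\mathcal{H}(T_v)$ in $T^{min}_n$, while after the replacement the inner vertices of $T^{min}_v$ contribute exactly the entries of $\mathcal{H}(T^{min}_v)$; these multisets coincide, so this block is unchanged. An inner vertex $w$ that is neither ancestor nor descendant of $v$ has a pending subtree disjoint from $T_v$, hence its height value is untouched. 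Finally, for a strict ancestor $w$ of $v$, the height $h(T_w)$ equals the maximum over the children $u$ of $w$ of $h(T_u)+1$; the replacement affects only the one child subtree containing $v$, and the height of that subtree is determined by $h(T_v)=h(T^{min}_v)$ together with the unchanged structure on the path from $w$ down to $v$, so $h_{T'}(w)=h_{T^{min}_n}(w)$. Collecting the three blocks gives $\mathcal{H}(T') = \mathcal{H}(T^{min}_n)$ as a multiset.

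Combining the two steps, $\mathcal{H}(T') = \mathcal{H}(T^{min}_n) = \mathcal{H}(T^{gfb}_n)$, so for every increasing $f$ we have $\Phi^{\mathcal{H}}_f(T') = \Phi^{\mathcal{H}}_f(T^{gfb}_n) \le \Phi^{\mathcal{H}}_f(T)$ for all $T \in \BT$ by Theorem \ref{Theo:Min_H_gfb_characterization}(1); that is, $T'$ minimizes $\Phi^{\mathcal{H}}_f$ on $\BT$ for all increasing $f$. The only delicate point is the invariance of the ancestor height values, which hinges entirely on the equality $h(T_v) = h(T^{min}_v)$ — precisely the place where Theorem \ref{Theo:Min_H_gfb_characterization} (all minimizers of a given leaf number share the gfb-tree's height sequence, hence its height) is indispensable, since otherwise the replacement could change $h(T_v)$ and propagate alterations up to the root.
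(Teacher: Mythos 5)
Your proposal is correct and follows essentially the same route as the paper: both arguments show that the replacement leaves the height sequence unchanged, using Theorem \ref{Theo:Min_H_gfb_characterization} (Part 3 to see that $T_v$ and $T^{min}_v$ share the gfb height sequence, hence the same height, so no ancestor's height value can change), and then conclude minimality from Part 1/2.i). The only difference is presentational: you treat an arbitrary pending subtree directly via the three-block vertex decomposition (descendants of $v$, strict ancestors of $v$, the rest), whereas the paper proves the maximal-pending-subtree case from the standard decomposition and disposes of the general case by recursion.
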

\begin{proof}
    We prove the statement for $T_v$ being a maximal pending subtree of $T^{min}_n$; the case of an arbitrary pending subtree then follows recursively.

    Let $T^{min}_n = (T_1,T_2)$ and $T^{min}_1$ be trees with $n$ and $n_1$ leaves, respectively, that minimize the height metaconcept for all increasing functions $f$ on $\BT$ and $\mathcal{BT}^{\ast}_{n_1}$, respectively. Let $T_1$ have $n_1$ leaves, and let $T' = (T^{min}_1,T_2)$. We show that $T'$ also minimizes the HM by proving that $T'$ and $T^{min}_n$ have the same height sequence.

    First, note that $T^{min}_n$ and $T'$ share $T_2$ as a maximal pending subtree. Hence, the entries in their height sequences corresponding to the inner vertices of $T_2$ are identical. Similarly, the inner vertices of $T_1$ and $T^{min}_1$ also have identical entries in their height sequences, because both $T_1$ and $T^{min}_1$ minimize the HM on $\mathcal{BT}^{\ast}_{n_1}$ (by assumption and, in case of $T_1$ as a subtree of $T_n^{min}$, by Part 3 of Theorem \ref{Theo:Min_H_gfb_characterization}). 
    
    Finally, since the heights of the two trees are determined by the maximal heights of their subtrees, $T^{min}_n$ and $T'$ also have the same overall height. Therefore, their height sequences coincide, which completes the proof.
\end{proof}

So far, we have thoroughly analyzed the trees that maximize and minimize the HM on $\BT$ and $\T$. Next, we use these results to determine the maximum and minimum values of the HM.

\subsubsection{Maximum and minimum values}

First, we calculate the maximum value of the HM on $\BT$ and $\T$.

\begin{Cor}
\label{Cor:Meta_Max_val}
Let $f$ be a function. Then $\Phi^{\mathcal{H}}_f\left(T^{cat}_n\right) = \sum\limits_{i=1}^{n-1} f(i)$ is the maximum value of the height metaconcept $\Phi^{\mathcal{H}}_f$
\begin{itemize}
    \item on $\T$, if $f$ is increasing and $1$-positive, and
    \item on $\BT$ if $f$ is increasing.
\end{itemize}
\end{Cor}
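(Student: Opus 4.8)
The plan is to combine the characterization of the maximizing tree (Theorem \ref{Theo:cat_Max_H}) with an explicit computation of the height sequence of the caterpillar. First I would recall that, by Theorem \ref{Theo:cat_Max_H}, the caterpillar $T^{cat}_n$ maximizes $\Phi^{\mathcal{H}}_f$ on $\T$ whenever $f$ is increasing and $1$-positive, and on $\BT$ whenever $f$ is increasing. Hence the maximum value of the metaconcept equals $\Phi^{\mathcal{H}}_f\left(T^{cat}_n\right)$ in both cases, and it only remains to evaluate this quantity.

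To compute $\Phi^{\mathcal{H}}_f\left(T^{cat}_n\right)$, I would identify the height sequence of the caterpillar. The caterpillar on $n$ leaves has $n-1$ inner vertices lying on a single path from the root; the unique cherry parent has height value $1$, its parent has height value $2$, and so on up to the root, which has height value $n-1$. Thus $\mathcal{H}\left(T^{cat}_n\right) = (1, 2, \ldots, n-1)$. This can be made rigorous either by direct inspection or by an easy induction using Observation \ref{Obs:H_recursiveness}: writing $T^{cat}_n = (T^{cat}_{n-1}, T^{fb}_0)$ in standard decomposition, the height sequence of $T^{cat}_n$ is obtained from that of $T^{cat}_{n-1}$ by appending $h\left(T^{cat}_{n-1}\right) + 1 = (n-2) + 1 = n-1$, and the base case $n=1$ (empty sequence) is immediate. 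Substituting into the definition of the first-order HM then gives
\[\Phi^{\mathcal{H}}_f\left(T^{cat}_n\right) = \sum_{h \in \mathcal{H}\left(T^{cat}_n\right)} f(h) = \sum_{i=1}^{n-1} f(i),\]
as claimed.

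This argument requires essentially no hard work, since both ingredients are already available: the extremality of the caterpillar is exactly Theorem \ref{Theo:cat_Max_H}, and the height sequence of the caterpillar is a routine structural fact. The only point to be slightly careful about is that the two bullet points require different hypotheses on $f$ (the $1$-positivity is needed on $\T$ to rule out non-binary competitors, as the remark following Theorem \ref{Theo:cat_Max_H} illustrates, but is not needed on $\BT$), so the proof should invoke the corresponding parts of Theorem \ref{Theo:cat_Max_H} separately for each case. I expect no genuine obstacle here; the corollary is a direct bookkeeping consequence of the preceding theorem.
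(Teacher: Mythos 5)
Your proposal is correct and follows essentially the same route as the paper: invoke Theorem \ref{Theo:cat_Max_H} to reduce the corollary to evaluating $\Phi^{\mathcal{H}}_f$ on the caterpillar, then observe that the caterpillar's inner vertices have height values $1,2,\ldots,n-1$, giving the stated sum. The optional induction via Observation \ref{Obs:H_recursiveness} is just a slightly more formal way of stating what the paper verifies by direct inspection.
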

\begin{proof}
By Theorem \ref{Theo:cat_Max_H}, Parts 1 and 3, it suffices to show that $\Phi^{\mathcal{H}}_f\left(T^{cat}_n\right) = \sum\limits_{i=1}^{n-1} f(i)$. The parent of the unique cherry in the caterpillar has height value $1$, while its root has height value $n-1$. In particular, the caterpillar contains exactly one pending subtree of each height from 
$1$ to $n-1$. Thus, $\Phi^{\mathcal{H}}_f\left(T^{cat}_n\right) = \sum\limits_{i=1}^{n-1} f(i)$, which completes the proof.
\end{proof}

To determine the minimum value of the HM, we first establish the following proposition concerning the number of pending subtrees of the gfb-tree with a given height.

\begin{Prop}
\label{Prop:gfb_number_subtrees_certain_height}
Let $n \geq 2$ with $n = 2^{h_n} + p_n$ with $h_n = \lfloor \log_2(n) \rfloor$ and $0 \leq p_n < 2^{h_n}$. Let $T^{gfb}_n$ be the gfb-tree with $n$ leaves, and let $A_n(i)$ denote the number of pending subtrees of height $i$ of the gfb-tree.

\begin{enumerate}
    \item If $i > \lceil \log_2(n) \rceil$, then $A_n(i)=0$.
    \item If $i \leq \lceil \log_2(n) \rceil$, then:
        \begin{enumerate}[(i)]
            \item If $p_n = 0$, i.e., $n = 2^{h_n}$, then $A_n(i) = 2^{h_n-i}$ for $i=1, \ldots, h_n$.
            \item If $0 < p_n < 2^{h_n}$, then for $i=1, \ldots, h_n+1$, we have \[A_n(i) = 2^{h_n+1-i} - \left\lceil \frac{2^{h_n}-p_n-2^{i-1}+1}{2^i} \right\rceil.\]
        \end{enumerate}
\end{enumerate} 
\end{Prop}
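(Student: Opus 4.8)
I would prove this by induction, exploiting the recursive structure of the gfb-tree together with the leaf-count decomposition given in Proposition~\ref{Prop:gfb_Ta_or_Tb_fb}. Part~1 is immediate: by Remark~\ref{Rem:gfb_properties} the gfb-tree has height $\lceil\log_2(n)\rceil$, so no inner vertex can have height value exceeding $\lceil\log_2(n)\rceil$, giving $A_n(i)=0$ for $i>\lceil\log_2(n)\rceil$. Part~2(i) is also routine: when $n=2^{h_n}$ the gfb-tree is the fb-tree $T^{fb}_{h_n}$, for which (as recalled in the preliminaries) there are exactly $2^{h_n-i}$ pending subtrees of each height $i\in\{1,\ldots,h_n\}$. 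So the real content is Part~2(ii).

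For Part~2(ii) I would use the recursion $T^{gfb}_n=(T_1,T_2)$ from Proposition~\ref{Prop:gfb_Ta_or_Tb_fb}, distinguishing the two cases $0<p_n\le 2^{h_n-1}$ and $2^{h_n-1}\le p_n<2^{h_n}$. In the first case $T_2=T^{fb}_{h_n-1}$ and $T_1=T^{gfb}_{2^{h_n-1}+p_n}$; in the second $T_1=T^{fb}_{h_n}$ and $T_2=T^{gfb}_{p_n}$. In either case, by Observation~\ref{Obs:H_recursiveness}, the multiset of height values of $T^{gfb}_n$ is the union of those of $T_1$ and $T_2$, plus one extra value equal to $h(T^{gfb}_n)=h_n+1$ for the root. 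Hence $A_n(i)$ satisfies a recursion in terms of $A_{n_1}(\cdot)$ and the known fb-tree counts, and I would plug in the inductive formula for the gfb-subtree, the explicit fb-tree counts $2^{h-i}$ for the fb-subtree, and the correction $+1$ at $i=h_n+1$, then simplify the ceiling expression. An alternative, possibly cleaner, route: use the second construction of the gfb-tree from Remark~\ref{Rem:gfb_properties}, namely that $T^{gfb}_n$ is obtained from $T^{fb}_{h_n+1}$ (which has $\lceil\log_2 n\rceil=h_n+1$ when $p_n>0$) by deleting $2^{h_n+1}-n=2^{h_n}-p_n$ cherries from right to left. Deleting a cherry removes one height-$1$ vertex and possibly lowers heights along one root-path; counting exactly which vertices of $T^{fb}_{h_n+1}$ survive as pending-subtree roots of each height, after removing the $2^{h_n}-p_n$ rightmost leaves' worth of cherries, should directly produce the term $2^{h_n+1-i}$ (the fb count) minus a ceiling term counting the subtrees of height $i$ that get ``destroyed'' or demoted. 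The quantity $2^{h_n}-p_n-2^{i-1}+1$ inside the ceiling, divided by $2^i$, is exactly the number of height-$i$ pending subtrees among the rightmost $2^{h_n}-p_n$ deleted leaves, which suggests the deletion viewpoint is the natural one.

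Concretely I would (i) dispose of Part~1 and Part~2(i) in a sentence each; (ii) set up the induction on $n$ for Part~2(ii), with base cases the smallest non-powers of two; (iii) split into the two subcases of Proposition~\ref{Prop:gfb_Ta_or_Tb_fb}; (iv) in each subcase write $A_n(i)=A_{n_1}(i)+(\text{fb-count for the fb-part})+\mathbf{1}_{\{i=h_n+1\}}$ via Observation~\ref{Obs:H_recursiveness}, substitute the inductive formula (noting that the fb-part contributes its count only for $i$ up to its own height), and verify the claimed closed form by a ceiling-function manipulation; and (v) check the boundary index $i=h_n+1$ separately, where the formula should evaluate to $1$ (only the root has that height). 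Care is needed that when $p_n=2^{h_n-1}$ both cases of Proposition~\ref{Prop:gfb_Ta_or_Tb_fb} apply and must give the same answer, and that the ceiling formula is interpreted correctly when its argument is non-positive (then the term is $0$ or the ceiling of a negative fraction — this happens for small $i$, where $A_n(i)$ should just equal the fb value $2^{h_n+1-i}$).

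**Main obstacle.** The conceptual steps are short; the real work is the ceiling-function bookkeeping. The hard part will be checking that the recursion $A_n(i)=A_{\lceil n/2\rceil\text{-ish}}(i-?)+\cdots$ — really $A_{n_1}(i)+2^{\,(\cdot)-i}+\mathbf{1}$ with $n_1$ as in Proposition~\ref{Prop:gfb_Ta_or_Tb_fb} — reproduces exactly $2^{h_n+1-i}-\big\lceil(2^{h_n}-p_n-2^{i-1}+1)/2^i\big\rceil$ after substituting the inductive expression for $A_{n_1}$, since this requires a careful identity of the form $\lceil a/2^{i-1}\rceil$ (or a sum of two such) $=\lceil b/2^i\rceil$ with $b$ expressed through the original $p_n,h_n$, and these identities are exactly the kind that fail at parity boundaries. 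I would therefore treat the deletion-from-$T^{fb}_{h_n+1}$ picture as the primary argument if the direct induction's algebra gets unwieldy, since there the count ``number of height-$i$ pending subtrees contained in the last $2^{h_n}-p_n$ deleted cherries'' is manifestly $\lceil(2^{h_n}-p_n-2^{i-1}+1)/2^i\rceil$ by a direct floor/ceiling count of how a block of consecutive leaves of $T^{fb}_{h_n+1}$ meets the $2^{h_n+1-i}$ height-$i$ subtree-supports, which are consecutive blocks of $2^i$ leaves each.
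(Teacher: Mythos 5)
Your primary plan is essentially the paper's own proof: Parts 1 and 2(i) dispatched by the height bound and the fb-tree structure, and Part 2(ii) by induction on $n$ using the decomposition of Proposition \ref{Prop:gfb_Ta_or_Tb_fb} into an fb-part and a gfb-part, treating $i=h_n+1$ (the root) separately, and verifying the closed form by exactly the ceiling manipulations you anticipate (the paper indeed handles $p_n=2^{h_n-1}$ as its own subcase). Your alternative ``delete $2^{h_n}-p_n$ cherries from $T^{fb}_{h_n+1}$'' viewpoint is not the paper's route, but the inductive argument you outline first matches it step for step and is sound.
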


\begin{proof}\leavevmode
\begin{enumerate}
    \item First, assume that $i > \lceil \log_2(n) \rceil$. Recall that the height of the gfb-tree with $n$ leaves is $\lceil \log_2(n) \rceil$. Consequently, the gfb-tree does not contain any pending subtrees of height strictly greater than $\lceil \log_2(n) \rceil$, which immediately implies $A_n(i)=0$ for all $i > \left \lceil \log_2(n) \right \rceil$.

    \item Now assume that $1 \leq i \leq \lceil \log_2(n) \rceil$.

    \begin{enumerate}[(i)]
        \item  If $p_n = 0$, then $T^{gfb}_n$ is the fully balanced tree of height $h_n$ and, by definition, contains $2^{h_n-i}$ pending subtrees of height $i$. This completes the proof for $p_n = 0$.

        \item Now assume that $0 < p_n < 2^{h_n}$. We prove this statement by induction on $n$. For $n = 3$, the gfb-tree $T^{gfb}_3$ contains one pending subtree of height $i=1$ ($T^{gfb}_3$'s cherry) and one pending subtree of height $i=2$ ($T^{gfb}_3$ itself). Moreover, $h_3 = \lfloor \log_2(3) \rfloor = 1$ and $p_3 = 1$, and thus
        \begin{align*}
            A_3(1) &= 2^{1+1-1} - \left\lceil \frac{2^1-1-2^{1-1}+1}{2^1} \right\rceil
            = 2 - \left\lceil \frac{1}{2} \right\rceil = 1,\\
            A_3(2) &= 2^{1+1-2} - \left\lceil \frac{2^1-1-2^{2-1}+1}{2^2} \right\rceil
            = 1 - \left \lceil \frac{0}{4} \right\rceil = 1,
        \end{align*}
        which establishes the base case.

        Now assume that the statement is true for all gfb-trees with $n' \leq n-1$ leaves and $0 < {p_{n'}} < 2^{h_{n'}}$, where $n' = 2^{h_{n'}}+p_{n'}$ and $h_{n'} = \lfloor \log_2(n') \rfloor$, and consider the gfb-tree $T^{gfb}_n = (T_1,T_2)$ with $n$ leaves. Express $n$ again as $n = 2^{h_n} + p_n$ with $h_n = \lfloor \log_2(n) \rfloor$ and $0 < p_n < 2^{h_n}$.
            \begin{itemize}
                \item First, consider $i = h_n+1$. Notice that $T^{gfb}_n$ contains precisely one pending subtree of height $h_n+1$ ($T^{gfb}_n$ itself) and indeed
                \begin{align} \label{eq:an-i-hn+1}
                A_n(h_n+1) &= 2^{h_n+1-(h_n+1)} - \left\lceil \frac{2^{h_n} - p_n - 2^{(h_n+1)-1}+1}{2^{h_n+1}} \right\rceil = 1 - \left\lceil \frac{1 - p_n}{2^{h_n+1}} \right\rceil = 1-0 = 1.
                \end{align}

                Note that the $0$ in the previous line stems from the fact that $1\leq p_n<2^{h_n}$, which implies that the numerator is at least $1-(2^{h_n}-1)=-2^{h_n}+2>-2^{h_n+1}$ and at most $1-1=0$.

            \item Now, let $1 \leq i \leq h_n$. To complete the proof, we use Proposition~\ref{Prop:gfb_Ta_or_Tb_fb} and distinguish two cases based on the value of $p_n$:
            \begin{enumerate}[(a)]
                \item If $0 < p_n < 2^{h_n-1}$, then $T_1$ is the gfb-tree on $n_1 = 2^{h_n-1}+p_n$ leaves and $T_2$ is the fb-tree of height $h_n-1$. Thus, $h_{n_1} = h_n-1$ and $p_{n_1} = p_n < 2^{h_n-1} = 2^{h_{n_1}}$. By the inductive hypothesis and Part 2(i),
            \begin{itemize}
                \item $T_1$ contains $A_{n_1}(i) = 2^{h_n-i} - \left\lceil \frac{2^{h_n-1}-p_n - 2^{i-1}+1}{2^i} \right\rceil$ pending subtrees of height $i$, $1 \leq i \leq h_n$, and
                \item $T_2$ contains $A_{n_2}(i) = 2^{h_n-1-i}$ pending subtrees of height $i$, $1 \leq i \leq h_n-1$.
            \end{itemize}
            Now notice that as $p_n \neq 0$, the number of pending subtrees of a fixed height $i \leq h_n$ in $T^{gfb}_n$ is the sum of the number of pending subtrees of height $i$ in $T_1$ and $T_2$, respectively. Thus, we need to show that for $1 \leq i \leq h_n$, we have $A_{n_1}(i) + A_{n_2}(i) = A_n(i)$.
                \begin{itemize}
                \item If $i = h_n$, then $A_{n_2}(h_n) = 0$ (since $h_n > \lceil \log_2 n_2 \rceil = h_n-1$, which implies $A_{n_2}(h_n) = 0$). 
                Moreover, since $p_n > 0$, we have $h(T_{1}) = h_n$, and thus \[A_{n_1}(h_n) = 2^{h_n-h_n} - \left\lceil \frac{2^{h_n-1}-p_{n} - 2^{h_n-1}+1}{2^{h_n}} \right\rceil = 1 -  \left\lceil \frac{1-p_n}{2^{h_n}} \right\rceil = 1 - 0 = 1,
                \]
                where the second to last equality follows analogously to Eq.~\eqref{eq:an-i-hn+1}.
                
                On the other hand, \[A_n(h_n) = 2^{(h_n+1)-h_n} - \left\lceil \frac{2^{h_n} - p_n - 2^{h_n-1}+1}{2^{h_n}} \right\rceil = 2 - \left\lceil \frac{2^{h_n-1}-p_n+1}{2^{h_n}} \right\rceil = 2 -1 = 1,\] where the second to last equality follows from the fact that $0 < p_n \leq 2^{h_n-1}$. In particular, $A_{n_1}(h_n) + A_{n_2}(h_n) = 1 + 0 = 1 = A_n(h_n)$.
                \item If $i \leq h_n-1$, we have
                \begin{align*}
                A_{n_1}(i) + A_{n_2}(i) &= \left(2^{h_n-i} - \left\lceil \frac{2^{h_n-1}-p_n - 2^{i-1}+1}{2^i} \right\rceil\right) + 2^{h_n-i-1}\\
                &= 3 \cdot 2^{h_n-i-1} - \left\lceil \frac{2^{h_n-1}-p_n - 2^{i-1}+1}{2^i} \right\rceil\\
                &= 3 \cdot 2^{h_n-i-1} - \left\lceil \underbrace{2^{h_n-1-i}}_{\in \N_{\geq 1}} + \frac{1-p_n-2^{i-1}}{2^i} \right\rceil\\
                &= 3 \cdot 2^{h_n-i-1} - 2^{h_n-i-1} - \left\lceil \frac{1-p_n-2^{i-1}}{2^i} \right\rceil\\
                &= 2^{h_n-i} - \left\lceil \frac{1-p_n-2^{i-1}}{2^i} \right\rceil\\
                &= 2^{h_n+1-i} - 2^{h_n-i} - \left\lceil \frac{1-p_n-2^{i-1}}{2^i} \right\rceil\\
                &= 2^{h_n+1-i} - \left\lceil \underbrace{2^{h_n-i}}_{\in \N_{\geq 1}} + \frac{1-p_n-2^{i-1}}{2^i} \right\rceil\\
                &= 2^{h_n+1-i} - \left\lceil \frac{2^{h_n} - p_n - 2^{i-1}+1}{2^i} \right\rceil\\
                &= A_n(i).
                \end{align*}
                 Therefore, $A_{n_1}(i) + A_{n_2}(i) = A_n(i)$.
            \end{itemize}

    \item If $p_n = 2^{h_n-1}$, then $T_1$ is the fb-tree of height $h_n$ (i.e., $n_1 = 2^{h_n})$ and $T_2$ is the fb-tree of height $h_n-1$ (i.e., $n_2 = 2^{h_n-1})$. In this case, $p_{n_1} = p_{n_2}=0$. By Part 2(i),
    \begin{itemize}
        \item $T_1$ contains $A_{n_1}(i)=2^{h_n-i}$ pending subtrees of height $i$, $1 \leq i \leq h_n$, and 
        \item $T_2$ contains $A_{n_2}(i)=2^{h_n-i-1}$ pending subtrees of height $i$, $1 \leq i \leq h_n-1$.
    \end{itemize}
    Again, we need to show that $A_n(i) = A_{n_1}(i) + A_{n_2}(i)$.
    Similar to the last case, for $i=h_n$, we have $A_{n_1}(h_n) + A_{n_2}(h_n) = 1 + 0 = 1 = A_n(h_n)$.
    For $i \leq h_n-1$, on the other hand,
    \begin{align*}
        A_{n_1}(i) + A_{n_2}(i) 
        &= 2^{h_n-i} + 2^{h_n-i-1} 
        = 3 \cdot 2^{h_n-i-1},
    \end{align*}
    and
    \begin{align*}
        A_n(i)
        &= 2^{h_n+1-i} - \left\lceil \frac{2^{h_n}-2^{h_n-1}-2^{i-1}+1}{2^i} \right\rceil \\
        &= 2^{h_n+1-i} - \left\lceil \frac{2^{h_n-1}-2^{i-1}+1}{2^i} \right\rceil \\
        &= 2^{h_n+1-i} - \left\lceil 2^{h_n-1-i}-\frac{1}{2} + \frac{1}{2^i} \right\rceil \\
        &= 2^{h_n+1-i} - 2^{h_n-1-i} + \underbrace{\left\lceil -\frac{1}{2} + \frac{1}{2^i}\right\rceil}_{=0} \\
        &=  3 \cdot 2^{h_n-i-1}.
    \end{align*}
    Thus, $A_n(i) = A_{n_1}(i) + A_{n_2}(i)$ as required.

    \item If $2^{h_n-1} < p_n < 2^{h_n}$, then $T_1$ is the fb-tree of height $h_n$ and $T_2$ is the gfb-tree on $n_2 = p_n = 2^{h_n-1} + p_{n_2}$ leaves. Thus, $h_{n_2} = h_n-1$ and $p_{n_2} = p_n - 2^{h_n-1} <2^{h_n-1}=2^{h_{n_2}}$. By Part 2(i) and the inductive hypothesis,
    \begin{itemize}
        \item $T_1$ contains $A_{n_1}(i) = 2^{h_n-i}$ pending subtrees of height $i$, $1 \leq i \leq h_n$, and
        \item $T_2$ contains $A_{n_2}(i) = 2^{h_n-i} - \left\lceil \frac{2^{h_n-1} - (p_n-2^{h_n-1}) - 2^{i-1}+1}{2^i} \right\rceil = 2^{h_n-i} - \left\lceil \frac{2^{h_n} - p_n - 2^{i-1}+1}{2^i} \right\rceil$ pending subtrees of height $i$, $1 \leq i \leq h_n$.
    \end{itemize}
    The proof now proceeds as in the previous cases by showing that for $1 \leq i \leq h_n$, we have $A_{n_1}(i) + A_{n_2}(i) = A_n(i)$. Therefore, consider
    \begin{align*}
        A_{n_1}(i) + A_{n_2}(i) &= 2^{h_n-i} + \left(2^{h_n-i} - \left\lceil \frac{2^{h_n} - p_n - 2^{i-1}+1}{2^i} \right\rceil\right)\\
        &= 2^{h_n+1-i} - \left\lceil \frac{2^{h_n} - p_n - 2^{i-1}+1}{2^i} \right\rceil = A_n(i),
    \end{align*}
    which completes the proof.
    \end{enumerate} 
    \end{itemize}
    \end{enumerate}
\end{enumerate}
\end{proof}

This leads us to the minimum value of the HM on $\BT$.

\begin{Cor}
\label{Cor:Meta_Min_val}
Let $f$ be an increasing function and let $n \geq 2$ with $n = 2^{h_n} + p_n$ with $h_n = \lfloor \log_2(n) \rfloor$ and $0 \leq p_n < 2^{h_n}$. Then, the minimum value of the height metaconcept $\Phi^{\mathcal{H}}_f$ on $\BT$ is
\begin{enumerate}[(i)]
    \item $\Phi^{\mathcal{H}}_f\left(T^{gfb}_n\right) = \sum\limits_{i = 1}^{h_n} 2^{h_n-i} \cdot f(i)$, if $p_n = 0$, i.e., $n = 2^{h_n}$, and
    \item $\Phi^{\mathcal{H}}_f\left(T^{gfb}_n\right) = \sum\limits_{i = 1}^{h_n+1} \left(2^{h_n+1-i} - \left\lceil \frac{2^{h_n}-p_n-2^{i-1}+1}{2^i} \right\rceil\right) \cdot f(i)$, if $0 < p_n < 2^{h_n}$, i.e., $2^{h_n} < n < 2^{h_n+1}$.
\end{enumerate}
\end{Cor}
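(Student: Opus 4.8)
The plan is to combine the two immediately preceding results. By Theorem~\ref{Theo:Min_H_gfb_characterization}, the gfb-tree $T^{gfb}_n$ minimizes the height metaconcept $\Phi^{\mathcal{H}}_f$ on $\BT$ for every (not necessarily strictly) increasing function $f$. Hence the minimum value in question is exactly $\Phi^{\mathcal{H}}_f\left(T^{gfb}_n\right)$, and it only remains to evaluate this quantity explicitly using the count of pending subtrees of a given height established in Proposition~\ref{Prop:gfb_number_subtrees_certain_height}.

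To do so, I would rewrite the defining sum $\Phi^{\mathcal{H}}_f\left(T^{gfb}_n\right) = \sum_{h \in \mathcal{H}(T^{gfb}_n)} f(h)$ by grouping the entries of the height sequence according to their value. Since $\mathcal{H}(T^{gfb}_n)$ lists the height value of every inner vertex of $T^{gfb}_n$ exactly once, and since $A_n(i)$ from Proposition~\ref{Prop:gfb_number_subtrees_certain_height} is precisely the number of inner vertices whose pending subtree has height $i$, we obtain $\Phi^{\mathcal{H}}_f\left(T^{gfb}_n\right) = \sum_{i \geq 1} A_n(i)\cdot f(i)$. By Remark~\ref{Rem:gfb_properties} (or Part~1 of Proposition~\ref{Prop:gfb_number_subtrees_certain_height}), the gfb-tree has height $\lceil\log_2(n)\rceil$, so $A_n(i) = 0$ for $i > \lceil\log_2(n)\rceil$ and the sum ranges over $i = 1, \ldots, \lceil\log_2(n)\rceil$.

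It then suffices to substitute the explicit values of $A_n(i)$ from Part~2 of Proposition~\ref{Prop:gfb_number_subtrees_certain_height}, distinguishing the two cases on $p_n$. If $p_n = 0$, then $\lceil\log_2(n)\rceil = h_n$ and $A_n(i) = 2^{h_n-i}$ for $i = 1, \ldots, h_n$, which yields formula~(i). If $0 < p_n < 2^{h_n}$, then $\lceil\log_2(n)\rceil = h_n + 1$ and $A_n(i) = 2^{h_n+1-i} - \left\lceil\frac{2^{h_n}-p_n-2^{i-1}+1}{2^i}\right\rceil$ for $i = 1, \ldots, h_n+1$, which yields formula~(ii).

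I do not anticipate a genuine obstacle here, as all the substantive work has been carried out in Proposition~\ref{Prop:gfb_number_subtrees_certain_height} and Theorem~\ref{Theo:Min_H_gfb_characterization}. The only point requiring a moment's care is the bookkeeping identity $\Phi^{\mathcal{H}}_f\left(T^{gfb}_n\right) = \sum_i A_n(i)\cdot f(i)$, namely matching ``one summand per inner vertex'' in the metaconcept with ``one vertex per pending subtree of that height'' counted by $A_n(i)$, and confirming that the range of summation is exactly $\{1, \ldots, \lceil\log_2(n)\rceil\}$.
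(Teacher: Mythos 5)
Your proposal is correct and follows essentially the same route as the paper's proof: invoke Theorem~\ref{Theo:Min_H_gfb_characterization} to identify $T^{gfb}_n$ as a minimizer, then sum $A_n(i)\cdot f(i)$ over $1 \leq i \leq h\left(T^{gfb}_n\right)$ using the counts from Proposition~\ref{Prop:gfb_number_subtrees_certain_height}. The bookkeeping point you flag (inner vertices versus pending subtrees of height $i \geq 1$, and the summation range $\lceil\log_2(n)\rceil$) is exactly the step the paper handles implicitly, so nothing is missing.
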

\begin{proof}
By Theorem \ref{Theo:Min_H_gfb_characterization}, Part 1, the gfb-tree minimizes the HM on $\BT$. Furthermore, by Proposition \ref{Prop:gfb_number_subtrees_certain_height}, the gfb-tree with $n$ leaves has $2^{h_n-i}$ pending subtrees of height $i$ if $p_n = 0$, and $2^{h_n+1-i} - \left\lceil \frac{2^{h_n}-p_n-2^{i-1}+1}{2^i} \right\rceil$ pending subtrees of height $i$ if $0 < p_n < 2^{h_n}$. Summing over all possible heights $1 \leq i \leq h\left(T^{gfb}_n\right)$ yields the stated expressions. This completes the proof.
\end{proof}

It remains to determine the minimum value of the HM on $\T$, which is established in the following corollary.

\begin{Cor}
\label{Cor:Meta_Min_val_T}
Let $n \geq 2$ and let $f$ be a $1$-positive function. Then the minimum value of the height metaconcept $\Phi^{\mathcal{H}}_f$ on $\T$ is $\Phi^{\mathcal{H}}_f\left(T^{star}_n\right) = f(1)$.
\end{Cor}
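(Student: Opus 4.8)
The plan is to obtain this corollary by combining the uniqueness statement already established in Proposition~\ref{Prop:star_Min_on_T} with an elementary evaluation of the first-order height metaconcept on the star tree. First I would invoke Proposition~\ref{Prop:star_Min_on_T}: since $f$ is $1$-positive, that proposition asserts that $T^{star}_n$ is the (unique) tree minimizing $\Phi^{\mathcal{H}}_f$ on $\T$. Hence the minimum value of $\Phi^{\mathcal{H}}_f$ on $\T$ is precisely $\Phi^{\mathcal{H}}_f\left(T^{star}_n\right)$, and all that remains is to compute this number.

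To evaluate $\Phi^{\mathcal{H}}_f\left(T^{star}_n\right)$, I would use the definition of the star tree: for $n \geq 2$ it has a single inner vertex, namely the root $\rho$, which is adjacent to all $n$ leaves. Thus every leaf has depth $1$, so $h\left(T^{star}_n\right) = 1$ and the height value of the root is $h_{T^{star}_n}(\rho) = 1$. Consequently the height sequence is $\mathcal{H}\left(T^{star}_n\right) = (1)$, a sequence of length one, and by the definition of the (first-order) height metaconcept, $\Phi^{\mathcal{H}}_f\left(T^{star}_n\right) = \sum_{h \in \mathcal{H}(T^{star}_n)} f(h) = f(1)$. Combining this with the previous paragraph gives the claim.

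I do not anticipate any genuine obstacle here: the statement is essentially a direct corollary of Proposition~\ref{Prop:star_Min_on_T}, with the only additional content being the trivial observation that the one inner vertex of the star tree has height value $1$. The hypothesis $n \geq 2$ is exactly what ensures that this inner vertex exists, so that the height sequence is nonempty and the minimum value equals $f(1)$ rather than an empty sum; this is also the reason $n \geq 2$ is required in the statement.
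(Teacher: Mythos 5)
Your proposal is correct and follows essentially the same route as the paper: invoke Proposition~\ref{Prop:star_Min_on_T} for the minimization and then evaluate $\Phi^{\mathcal{H}}_f\left(T^{star}_n\right) = f(1)$ from the fact that the star tree's only inner vertex has height value $1$. Your added remarks on the role of $n \geq 2$ are consistent with the paper and require no changes.
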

\begin{proof}
By Proposition \ref{Prop:star_Min_on_T}, the star tree minimizes the HM on $\T$. Moreover, $\Phi^{\mathcal{H}}_f\left(T^{star}_n\right) = f(1)$. This completes the proof.
\end{proof}

In the following section, we analyze the locality and the recursive structure of the HM.

\subsubsection{Locality and recursiveness}

We show that the HM is not necessarily local but is recursive for all functions $f$. This generalizes an earlier result for the $B_1$ index (\citet[Propositions 10.2 and 10.3]{Fischer2023}) to the HM. We begin by analyzing the locality property.

\begin{Rem}
\label{Rem:H_not_local}
We construct a family of counterexamples demonstrating that the HM is not local.

Let $T = (T_v,\widehat{T})$ and $T' = (T'_v,\widehat{T})$ be two trees with the same number of leaves which differ only in one of their maximal pending subtrees (namely $T_v$ versus $T_v'$). Then
\begin{align*}
    \Phi^{\mathcal{H}}_f(T) - \Phi^{\mathcal{H}}_f(T') &= \sum\limits_{v \in \mathring{V}(T_v)} f\left(h_{T_v}(v)\right) + \sum\limits_{v \in \mathring{V}(\widehat{T})} f\left(h_{\widehat{T}}(v)\right) + f\left(h_T(\rho_T)\right)\\
    &- \left(\sum\limits_{v \in \mathring{V}(T'_v)} f\left(h_{T'_v}(v)\right) + \sum\limits_{v \in \mathring{V}(\widehat{T})} f\left(h_{\widehat{T}}(v)\right) + f\left(h_{T'}(\rho_{T'})\right)\right)\\
    &= \Phi^{\mathcal{H}}_f(T_v) - \Phi^{\mathcal{H}}_f(T'_v) + f\left(h(T)\right) - f\left(h(T')\right).
\end{align*}
Thus, the locality condition for $T$ and $T'$ holds if and only if their heights are equal. For a non-local  example, see Figure \ref{Fig:H_meta_not_local}.
\end{Rem}

\begin{figure}
    \centering
    \includegraphics[scale = 2]{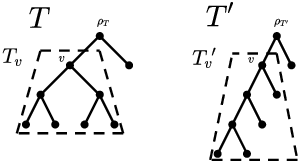}
    \caption{By Remark \ref{Rem:H_not_local}, this figure illustrates two trees $T$ and $T'$ that do not satisfy the locality condition for the height metaconcept. This is due to the fact that, while $T$ and $T'$ only differ in their maximal pending subtree $T_v$ versus $T_v'$, which both have four leaves, we get $h(T) \neq h(T')$. This shows that while the height sequences of $T_v$ and $T_v'$, namely $\mathcal{H}(T_v) = (1,1,2) $ and $\mathcal{H}(T_v') = (1,2,3)$ differ only in two positions, the height sequences of $T$ and $T'$ differ in three positions.}
    \label{Fig:H_meta_not_local}
\end{figure}

Second, we prove the recursiveness of the HM.

\begin{Prop}
\label{Prop:H_recursive}
The height metaconcept $\Phi^{\mathcal{H}}_f$ induces a recursive tree shape statistic for any function $f$.

Let $T \in \T$ be an arbitrary tree with standard decomposition $T = (T_1, \ldots, T_k)$, and let $n_i$ and $h_i$ denote the number of leaves and height, respectively, of the maximal pending subtree $T_i$. Let $f$ be an arbitrary function. 
\begin{itemize}
    \item If $n=1$, then $\Phi^{\mathcal{H}}_f(T) = 0$.
    \item If $n \geq 2$, then
    \[\Phi^{\mathcal{H}}_f(T) = \sum\limits_{i=1}^{k} \Phi^{\mathcal{H}}_{f}(T_i) + f\left(\max\left\{h_1, \ldots, h_k\right\} +1\right).\]
\end{itemize}
\end{Prop}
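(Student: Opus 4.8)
The plan is to produce an explicit recursive tree shape statistic in the sense of Definition \ref{Def:recursivenessBook} one of whose coordinates equals $\Phi^{\mathcal{H}}_f$, and then to check by induction that this coordinate really does compute the metaconcept. The first thing to notice is that a statistic of length $x=1$ cannot work: the recursion claimed in the statement involves the value $\max\{h_1,\ldots,h_k\}+1 = h(T)$, and this is not recoverable from $\Phi^{\mathcal{H}}_f(T_1),\ldots,\Phi^{\mathcal{H}}_f(T_k)$ alone. So I would work with length $x=2$, carrying the height of the tree as an auxiliary second coordinate; that is, I would aim to compute the pair $(\Phi^{\mathcal{H}}_f(T),h(T))$ recursively.

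Concretely, I would take the start vector $\lambda=(0,0)$, which matches $\Phi^{\mathcal{H}}_f(T)=0$ and $h(T)=0$ on the one-leaf tree (the height sequence being empty there), and define $r=(r_1,r_2)$ on a multiset $\{(a_1,b_1),\ldots,(a_k,b_k)\}$ of $2$-vectors by
\[
r_2\bigl(\{(a_i,b_i)\}_{i=1}^{k}\bigr)=\max\{b_1,\ldots,b_k\}+1,\qquad
r_1\bigl(\{(a_i,b_i)\}_{i=1}^{k}\bigr)=\sum_{i=1}^{k}a_i+f\bigl(\max\{b_1,\ldots,b_k\}+1\bigr).
\]
Both $r_1$ and $r_2$ are symmetric functions of the multiset, since $\sum_i a_i$ and $\max_i b_i$ are invariant under permutation of the arguments, so $(\lambda,r)$ is a legitimate recursive tree shape statistic. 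It then remains to verify that its second coordinate evaluates to $h(T)$ and its first to $\Phi^{\mathcal{H}}_f(T)$. I would do this by induction on the leaf number $n$: the base case $n=1$ is immediate from the choice of $\lambda$, and in the inductive step, for $T=(T_1,\ldots,T_k)$ with $T_i$ of height $h_i$, the identity $r_2(T)=\max\{h_1,\ldots,h_k\}+1=h(T)$ is just the recursive definition of the height, while $r_1(T)=\sum_{i=1}^{k}\Phi^{\mathcal{H}}_f(T_i)+f(\max\{h_1,\ldots,h_k\}+1)$ equals $\Phi^{\mathcal{H}}_f(T)$ by Observation \ref{Obs:H_recursiveness}, which states precisely that $\mathcal{H}(T)$ consists of the entries of $\mathcal{H}(T_1),\ldots,\mathcal{H}(T_k)$ together with the single extra value $\max\{h_1,\ldots,h_k\}+1=h(T)$. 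This simultaneously proves the recursiveness of $\Phi^{\mathcal{H}}_f$ and the two displayed formulas in the statement.

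I expect no serious obstacle here: the only point that needs a moment's thought is realizing that a single recursion is insufficient and that the correct piece of extra bookkeeping is exactly the tree height. Once this is in place, the first coordinate is handled by Observation \ref{Obs:H_recursiveness} and the second by the recursive definition of height, and the symmetry requirement of Definition \ref{Def:recursivenessBook} is trivially met because sums and maxima are permutation-invariant.
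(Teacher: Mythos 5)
Your proposal is correct and follows essentially the same route as the paper: it constructs a length-$2$ recursive tree shape statistic whose second coordinate tracks the tree height, sets $\lambda=(0,0)$, derives the recursion for the first coordinate from Observation \ref{Obs:H_recursiveness}, and notes the symmetry of the recursions. The only cosmetic difference is that you make the induction on $n$ explicit, whereas the paper states the same verification directly; the content is identical.
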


Notice that the proof of Proposition~\ref{Prop:H_recursive} proceeds analogously to \cite[Proof of Proposition 3.36]{Fischer2025}.

\begin{proof}
If $n = 1$, then the length of the height sequence is $0$. Consequently, the sum of the HM is taken over an empty set, and we have $\Phi^{\mathcal{H}}_f(T) = 0$.

For $n \geq 2$, the HM can be computed as follows:
\begin{align*}
    \Phi^{\mathcal{H}}_f(T) &= \sum\limits_{h \in \mathcal{H}(T)} f(h) \stackrel{\text{Obs. } \ref{Obs:H_recursiveness}}{=} \sum\limits_{h \in \mathcal{H}(T_1)} f(h) + \ldots + \sum\limits_{h \in \mathcal{H}(T_k)} f(h) + f(h(T))\\
    &= \sum\limits_{i=1}^{k} \Phi^{\mathcal{H}}_{f}(T_i) + f\left(\max\left\{h_1, \ldots, h_k\right\} +1\right).
\end{align*}
Thus, $\Phi^{\mathcal{H}}_f$ is a recursive tree shape statistic of length $x = 2$, with the recursions $r_1$ and $r_2$ given by:
\begin{itemize}
    \item HM: $\lambda_1 = \Phi^{\mathcal{H}}_{f}\left(T^{cat}_1\right) = 0$, and\\ $r_1\left(\left(r_1\left(T_1\right), r_2\left(T_1\right)\right), \ldots, \left(r_1\left(T_k\right), r_2\left(T_k\right)\right)\right) = \Phi^{\mathcal{H}}_{f}(T_1) + \ldots + \Phi^{\mathcal{H}}_{f}(T_k) + f\left(\max\left\{h_1, \ldots, h_k\right\} +1\right)$;
    \item tree height: $\lambda_2 = 0$, and $r_2\left(\left(r_1\left(T_1\right), r_2\left(T_1\right)\right), \ldots, \left(r_1\left(T_k\right), r_2\left(T_k\right)\right)\right) = \max\left\{h_1, \ldots, h_k\right\} +1$.
\end{itemize}
Hence, $\lambda \in \R^2$ and $r_i: \underbrace{\R^2 \times \ldots \times \R^2}_{k \text{ times}} \rightarrow \R$. Moreover, all recursions $r_i$ are symmetric. This completes the proof.
\end{proof}

We now turn our attention to the $B_1$ index.

\subsection{\texorpdfstring{$B_1$}{B1} index}
\label{Subsec:B1}

In this section, we finally resolve six open problems posed by \citet{Fischer2023}. Specifically, we characterize the trees that maximize the $B_1$ index on $\BT$ and $\T$, determine the number of such trees for various leaf numbers, and compute the maximum value of the $B_1$ index on both $\BT$ and $\T$. We begin by examining the number of trees that achieve this maximum. Note that the results from the last section cannot be applied here directly. This is because the $B_1$ index does not include the height value of the root and thus is induced by the third-order metaconcept (see Remark \ref{Rem:B1_HM}). The results for the HM nonetheless turn out to be useful for the analysis of the $B_1$ index. In fact, as we will subsequently point out, the following theorem is an important means to generalize our previous results to the $B_1$ index.

\begin{Theo}
\label{Theo:B1_extrema}
Let $B^{min}_{\mathcal{H}}$ denote the set of trees that minimize the height metaconcept $\Phi^{\mathcal{H}}_f$ on $\BT$ for all (strictly) increasing functions $f$, and let $B^{max}_1$ denote the set of trees that maximize the $B_1$ index on $\T$. Then $B^{min}_{\mathcal{H}} = B^{max}_1$. In particular, the trees that maximize the $B_1$ index on $\T$ are binary. Moreover, the gfb-tree is always among the maximizing trees, and it is the unique tree maximizing the $B_1$ index if and only if $n = 2^h - 2^i$ for some integers $h > i \geq 0$. Furthermore, if $n$ is odd, a unique maximizing tree exists if and only if $n = 2^h -1$ for some $h \geq 1$.
\end{Theo}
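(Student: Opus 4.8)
The plan is to establish the central identity $B^{min}_{\mathcal{H}}=B^{max}_1$ and then to read off all remaining assertions from Theorem~\ref{Theo:Min_H_gfb_characterization} and Proposition~\ref{Prop:gfb_unique_Min}. First, by Part~2.ii) of Theorem~\ref{Theo:Min_H_gfb_characterization}, a binary tree minimizing $\Phi^{\mathcal{H}}_f$ on $\BT$ for \emph{some} strictly increasing $f$ already minimizes it for \emph{all} increasing $f$; hence $B^{min}_{\mathcal{H}}$ coincides with the set $B^{min}_n$ of Theorem~\ref{Theo:Min_H_gfb_characterization}, i.e.\ the binary trees whose height sequence equals $\mathcal{H}(T^{gfb}_n)$. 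The proof of $B^{min}_{\mathcal{H}}=B^{max}_1$ then splits into two reductions: (a) every tree maximizing $B_1$ on $\T$ is binary, and (b) among binary trees the $B_1$-maximizers are precisely the trees in $B^{min}_n$.

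For step~(b), fix a binary tree $T$ with $n\geq 3$ leaves. Every proper pending subtree of $T$ has height strictly smaller than $h(T)$, so the root is the only inner vertex with height value $h(T)$; thus $\mathcal{H}(T)_{n-1}=h(T)$ occurs exactly once in the height sequence and $B_1(T)=\sum_{i=1}^{n-2}\frac{1}{\mathcal{H}(T)_i}$ (for $n\leq 2$ this is an empty sum and there is a single tree). By Part~1 of Theorem~\ref{Theo:Min_H_gfb_characterization} we have $\mathcal{H}(T^{gfb}_n)_i\leq\mathcal{H}(T)_i$ for all $i$, and since $x\mapsto 1/x$ is strictly decreasing this gives $B_1(T^{gfb}_n)\geq B_1(T)$. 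If equality holds, then $\mathcal{H}(T^{gfb}_n)_i=\mathcal{H}(T)_i$ for $i=1,\dots,n-2$; since the second largest entry of the height sequence of any binary tree on $n\geq 3$ leaves equals its height minus one (the height of its taller maximal pending subtree, attained by an inner vertex $\neq\rho$), the equality $\mathcal{H}(T)_{n-2}=\mathcal{H}(T^{gfb}_n)_{n-2}$ forces $h(T)=h(T^{gfb}_n)$ and hence $\mathcal{H}(T)=\mathcal{H}(T^{gfb}_n)$. So the binary $B_1$-maximizers are exactly the binary trees sharing the gfb height sequence, i.e.\ $B^{min}_n=B^{min}_{\mathcal{H}}$, with binary maximum value $B_1(T^{gfb}_n)$.

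For step~(a) I would use the binarization already shown in Figure~\ref{Fig:H_caterpillar_T_to_binary}: given a non-binary $T$, choose an inner vertex $u_1$ with children $v_1,\dots,v_k$, $k\geq 3$, and form $T'$ by inserting a new vertex $u_0$ as a child of $u_1$ with children $v_2,\dots,v_k$. Writing $b=\max_{i\geq 2}h(T_{v_i})$, one checks directly (cf.\ the proof of Lemma~\ref{Lem:H_seq_caterpillar} and Observation~\ref{Obs:H_recursiveness}) that $h_{T'}(u_0)=b+1$, that $h_{T'}(u_1)$ exceeds $h_T(u_1)$ by at most one, that the height values of the finitely many proper ancestors of $u_1$ increase by at most one each, and that all other height values are unchanged. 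Comparing $B_1$-contributions, the pair $\{u_0,u_1\}$ yields a net gain of at least $\frac{1}{b+1}-\frac{1}{(b+1)(b+2)}=\frac{1}{b+2}$, while the proper ancestors of $u_1$ counted in $B_1$ (all of $T$-height $\geq b+2$, $\geq b+3,\dots$) contribute a total loss bounded by the telescoping sum $\sum_{j\geq 1}\frac{1}{(b+1+j)(b+2+j)}=\frac{1}{b+2}$, and in fact strictly smaller since only finitely many terms occur; hence $B_1(T')>B_1(T)$. Iterating binarizations produces a binary tree on $n$ leaves with strictly larger $B_1$-value, so no non-binary tree maximizes $B_1$, i.e.\ $B^{max}_1\subseteq\BT$; combined with step~(b) this gives $B^{max}_1=B^{min}_{\mathcal{H}}$.

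The remaining claims then follow at once: since $T^{gfb}_n\in B^{min}_n=B^{min}_{\mathcal{H}}=B^{max}_1$ by Theorem~\ref{Theo:Min_H_gfb_characterization}, the gfb-tree always maximizes $B_1$; a unique maximizer exists iff $|B^{min}_{\mathcal{H}}|=1$, which by Proposition~\ref{Prop:gfb_unique_Min} holds precisely when $n=2^h-2^i$ for integers $h>i\geq 0$; and the sharpening for odd $n$, namely $n=2^h-1$, is the remark immediately following that proposition. I expect the binarization estimate in step~(a) to be the main obstacle: binarization simultaneously raises several height values (which lowers the corresponding $\frac{1}{h}$ terms of $B_1$) and creates one new inner vertex (which adds a positive term), so the comparison cannot be settled by a monotonicity argument and instead relies on the exact telescoping cancellation $\sum_{j\geq m}\frac{1}{j(j+1)}=\frac{1}{m}$.
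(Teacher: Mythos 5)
Your proposal is correct, and its overall skeleton matches the paper's: reduce to binary trees via repeated binarization, identify the binary $B_1$-maximizers with the trees sharing the gfb height sequence using the fact that the root is the unique inner vertex of height $h(T)$ (so $B_1$ is the sum of reciprocals of the first $n-2$ height-sequence entries), and then read off existence and uniqueness from Theorem~\ref{Theo:Min_H_gfb_characterization} and Proposition~\ref{Prop:gfb_unique_Min}. Your equality analysis (forcing $\mathcal{H}(T)_{n-1}$ to agree because $\mathcal{H}(T)_{n-2}=h(T)-1$ for binary $T$ with $n\geq 3$) is the same observation the paper phrases as ``the height is the maximum of the first $n-2$ entries plus one.'' Where you genuinely diverge is the binarization step. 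The paper keeps a child of maximal height attached to $u_1$ and then shows, by a case analysis on how far up the path the heights increase, that the height values along the path from $u_1$ to $\rho$ merely \emph{shift} by one position, so that $B_1(T')-B_1(T)$ equals exactly one new reciprocal term $\tfrac{1}{h_{T'}(u_l)}>0$ --- no estimates are needed. You instead allow an arbitrary child to be retained and bound the damage: the pair $\{u_0,u_1\}$ gains at least $\tfrac{1}{b+2}$, while the losses from the (finitely many) proper ancestors, whose $T$-heights are at least $b+2,b+3,\dots$ and which each increase by at most one, telescope to strictly less than $\tfrac{1}{b+2}$. Your estimate is correct (the total change is at least $\tfrac{1}{b+2+m}>0$ with $m$ the number of affected ancestors) and is arguably more robust, since it needs no choice of which child to keep and no case distinction along the path; the paper's argument buys an exact formula for the increment at the cost of that bookkeeping. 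Your closing remark that the comparison ``cannot be settled by a monotonicity argument'' and must rely on telescoping is therefore slightly off as a diagnosis of the difficulty --- the paper's shifting argument settles it exactly without any cancellation estimate --- but this does not affect the validity of your proof.
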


\begin{Rem}
By Theorem \ref{Theo:B1_extrema}, all results concerning the trees that minimize the HM on $\BT$ can be directly transferred to the trees that maximize the $B_1$ index on $\T$. In particular, all maximizing trees of the $B_1$ index are binary, even on $\T$, and the gfb-tree and the echelon tree (Proposition \ref{Prop:H_echelon_Min}) are among them for all $n \in \N$. 

Moreover, by Theorem \ref{Theo:Min_H_gfb_characterization}, Part 2(i), all maximizing trees share the same height sequence, and in particular, they all have minimal height $\left\lceil\log_2(n)\right\rceil$. Furthermore, if $T^{max}$ maximizes the $B_1$ index, then $\mathcal{H}(T^{max})_i \leq \mathcal{H}\left(T\right)_i$ for all $i = 1, \ldots, n-1$ and all binary trees $T \in \BT$.

By Lemma \ref{Lem:T_Min_H_properties}, all leaves (except for one if $n$ is odd) belong to a cherry in a tree that maximizes the $B_1$ index. Moreover, Propositions \ref{Prop:H_Min_n-1_to_Min_n} and \ref{Prop:H_Min_n_to_Min_n-1} show that all maximizing trees with an odd number of leaves $n-1$ can be constructed from maximizing trees with $n$ leaves, and vice versa. By Corollary \ref{Cor:Min_n_2n}, the same principle holds for leaf numbers $n$ and $2n$. 
\end{Rem}

\begin{proof}
Let $B^{min}_{\mathcal{H}}$ denote the set of trees that minimize the HM on $\BT$ for all (strictly) increasing functions $f$, and let $B^{max}_1$ denote the set of trees that maximize the $B_1$ index on $\T$.

We begin by proving that any tree $T^{max}$ that maximizes the $B_1$ index must be binary. Let $T \in \T \setminus \BT$, i.e., assume $T$ has a vertex $u_1$ with at least three children. Our goal is to construct a tree $T'$ from $T$ by adding one inner vertex in a specific way such that $B_1(T) < B_1(T')$. By iteratively applying this procedure, we eventually obtain a binary tree, which completes this part of the proof.

We use the same construction as in Lemma \ref{Lem:H_seq_caterpillar} and Figure \ref{Fig:H_caterpillar_T_to_binary}. Moreover, we denote the vertices on the path from $u_1$ to $\rho$ as $u_1, \ldots, u_m = \rho$ with $m \geq 1$, and assume, without loss of generality, that $h(T_{v_1}) \geq h(T_{v_i})$ for all $i = 2, \ldots, k$. Next, we distinguish three cases.

First, assume that $h(T_{v_1}) > h(T_{v_{2}})$. By construction, the vertices $u_1, \ldots, u_m = \rho$ retain the same height values in $T$ as in $T'$. Moreover, the vertex $u_0$ is newly added, so that, \[B_1(T') = B_1(T) + \frac{1}{h_{T'}(u_0)} > B_1(T).\]

Second, assume that $m = 1$, i.e., $u_1 = \rho$ (recall that the root does not contribute to the $B_1$ index). By construction, we again have
\[B_1(T') = B_1(T) + \frac{1}{h_{T'}(u_0)} > B_1(T).\]

Third, assume that $h(T_{v_1}) = h(T_{v_{2}})$ and $m \geq 2$. Then there exists an index $l \in \left\{1, \ldots, m\right\}$ such that the vertices $u_1, \ldots, u_l$ increase their height values by one, while the vertices $u_{l+1}, \ldots, u_m = \rho$ (if any) retain the same height values in $T'$ as in $T$. Consequently, for $i = 1, \ldots, l$, we have
\[h_T(u_i) = h_{T'}(u_{i}) -1 = h_{T'}(u_{i-1}).\] 
If $l = m \geq 2$, i.e., all vertices $u_1, \ldots, u_m = u_l = \rho$ increase their height values, then 
\[B_1(T') = B_1(T) + \frac{1}{h_{T'}(u_{l-1})}.\] 
If $l < m$, we have \[B_1(T') = B_1(T) + \frac{1}{h_{T'}(u_{l})}.\] 
In both cases, it follows that 
\[B_1(T') > B_1(T).\]

Thus, all trees that maximize the $B_1$ index are binary. Therefore, in the following, we restrict our attention to binary trees. Next, we show that the two sets, $B^{min}_{\mathcal{H}}$ and $B^{max}_1$ coincide.

Let $T^{min} \in B^{min}_{\mathcal{H}}$, i.e., let $T^{min}$ be a tree that minimizes the HM for all increasing functions $f$. We need to show that $T^{min}$ also maximizes the $B_1$ index, that is, $T^{min} \in B^{max}_1$. By Theorem \ref{Theo:Min_H_gfb_characterization}, Part 2(i), we have $\mathcal{H}(T^{min})_i \leq \mathcal{H}\left(T\right)_i$ for all $i = 1, \ldots, n-1$ and all binary trees $T \in \BT$. Therefore,
\[B_1(T^{min}) = \sum\limits_{i=1}^{n-2} \frac{1}{\mathcal{H}(T^{min})_i} \geq \sum\limits_{i=1}^{n-2} \frac{1}{\mathcal{H}(T)_i} = B_1\left(T\right).\]
It follows that $T^{min} \in B^{max}_1$, i.e., $T^{min}$ maximizes the $B_1$ index.

Let $T^{max} \in B^{max}_1$, i.e., let $T^{max}$ be a tree that maximizes the $B_1$ index. We need to show that $T^{max}$ also minimizes the HM for all increasing functions $f$, that is, $T^{max} \in B^{min}_{\mathcal{H}}$. By the previous part of the proof and Theorem \ref{Theo:Min_H_gfb_characterization}, Part 2(i), we know that $T^{gfb}_n \in B^{max}_1$, and thus
\[B_1\left(T^{max}\right) = \sum\limits_{i=1}^{n-2} \frac{1}{\mathcal{H}(T^{max})_i} = \sum\limits_{i=1}^{n-2} \frac{1}{\mathcal{H}\left(T^{gfb}_n\right)_i} = B_1\left(T^{gfb}_n\right).\]
By Theorem \ref{Theo:Min_H_gfb_characterization}, Part 1, we have
\[\mathcal{H}\left(T^{gfb}_n\right)_i \leq \mathcal{H}(T^{max})_i \text{ for all } i = 1, \ldots, n-1\]
and thus
\[\mathcal{H}\left(T^{gfb}_n\right)_i = \mathcal{H}(T^{max})_i \text{ for all } i = 1, \ldots, n-2.\]
It remains to show that $T^{max}$ has the same height as the gfb-tree, i.e., $\mathcal{H}\left(T^{gfb}_n\right)_{n-1} = \mathcal{H}(T^{max})_{n-1}$. 
Recall that the height of a tree equals the height of its tallest pending subtree plus one, i.e., it is the maximum of the first $n-2$ entries of the height sequence plus one.
Since $T^{gfb}_n$ and $T^{max}$ share the first $n-2$ values, they must also share the $(n-1)$-th value, which corresponds to the tree's height. Hence, $T^{max} \in B^{min}_{\mathcal{H}}$, i.e., $T^{max}$ minimizes the HM.

Finally, the statement for the unique maximization of the gfb-tree is a direct consequence of Proposition \ref{Prop:gfb_unique_Min}, Theorem \ref{Theo:Min_H_gfb_characterization} Part 2.ii), and the first part of the proof ($B^{min}_{\mathcal{H}} = B^{max}_1$). This completes the proof. 
\end{proof}

We now calculate the maximum value of the $B_1$ index on $\BT$ and $\T$. For $n$ a power of two, this value coincides with the tight bound given by \citet[Proposition 10.1]{Fischer2023} for $\BT$.

\begin{Cor}
\label{Cor:B1_Max_val}
Let $n \geq 2$ with $n = 2^{h_n} + p_n$, where $h_n = \lfloor \log_2(n) \rfloor$ and $0 \leq p_n < 2^{h_n}$. Then the maximum value of the $B_1$ index on $\BT$ and $\T$ is given by
\begin{enumerate}[(i)]
    \item $\sum\limits_{i = 1}^{h_n-1} 2^{h_n-i} \cdot \frac{1}{i}$, if $p_n = 0$ (i.e., $n = 2^{h_n}$), and
    \item $\sum\limits_{i = 1}^{h_n} \left(2^{h_n+1-i} - \left\lceil \frac{2^{h_n}-p_n-2^{i-1}+1}{2^i} \right\rceil\right) \cdot \frac{1}{i}$, if $0 < p_n < 2^{h_n}$ (i.e., $2^{h_n} < n < 2^{h_n+1}$).
\end{enumerate}
\end{Cor}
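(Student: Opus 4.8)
The plan is to reduce the computation to evaluating $B_1$ on a single explicitly described tree — the gfb-tree — and then to read off the number of its pending subtrees of each height from Proposition \ref{Prop:gfb_number_subtrees_certain_height}.

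First I would invoke Theorem \ref{Theo:B1_extrema}. It states that the trees maximizing the $B_1$ index on $\T$ are precisely the binary trees minimizing the HM on $\BT$, and that $T^{gfb}_n$ is always among them. In particular, the maximizing trees are binary, so the maximum of $B_1$ is the same on $\BT$ and on $\T$, and this common maximum equals $B_1\left(T^{gfb}_n\right)$.

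Next I would rewrite $B_1$ by grouping inner vertices according to their height value. For any binary tree $T$, $B_1(T) = \sum_{v \in \mathring{V}(T)} \frac{1}{h_T(v)} - \frac{1}{h(T)} = \left(\sum_{i \geq 1} A(i)\cdot\frac{1}{i}\right) - \frac{1}{h(T)}$, where $A(i)$ denotes the number of inner vertices of $T$ with height value $i$ (equivalently, the number of pending subtrees of height $i$), and the subtracted term accounts for the omitted root, whose height value is $h(T)$. Then I would substitute the counts $A_n(i)$ from Proposition \ref{Prop:gfb_number_subtrees_certain_height}. If $p_n = 0$, then $h\left(T^{gfb}_n\right) = h_n$, $A_n(i) = 2^{h_n-i}$ for $i = 1, \ldots, h_n$, and $A_n(i) = 0$ for $i > h_n$; since $A_n(h_n) = 1$, the top term $2^0\cdot\frac{1}{h_n} = \frac{1}{h_n}$ cancels $-\frac{1}{h(T^{gfb}_n)} = -\frac{1}{h_n}$, leaving $\sum_{i=1}^{h_n-1} 2^{h_n-i}\cdot\frac{1}{i}$, which is (i). If $0 < p_n < 2^{h_n}$, then $h\left(T^{gfb}_n\right) = h_n+1$, $A_n(i) = 2^{h_n+1-i} - \left\lceil\frac{2^{h_n}-p_n-2^{i-1}+1}{2^i}\right\rceil$ for $i = 1, \ldots, h_n+1$, and $A_n(h_n+1) = 1$ (as computed in the proof of Proposition \ref{Prop:gfb_number_subtrees_certain_height}); hence the top term $1\cdot\frac{1}{h_n+1}$ cancels $-\frac{1}{h_n+1}$, leaving $\sum_{i=1}^{h_n}\left(2^{h_n+1-i} - \left\lceil\frac{2^{h_n}-p_n-2^{i-1}+1}{2^i}\right\rceil\right)\cdot\frac{1}{i}$, which is (ii).

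This is essentially a bookkeeping argument, so I do not expect a genuine difficulty; if anything, the main obstacle is purely clerical: correctly handling the removal of the root's contribution, i.e. verifying in both cases that the maximal height value $h\left(T^{gfb}_n\right)$ is attained by exactly one pending subtree (the whole tree), so that dropping the root simply deletes the top summand of the height-grouped sum and shifts the upper summation limit down by one. Everything else is immediate from Theorem \ref{Theo:B1_extrema} and Proposition \ref{Prop:gfb_number_subtrees_certain_height}.
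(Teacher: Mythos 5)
Your proposal is correct and follows essentially the same route as the paper: reduce to $B_1\left(T^{gfb}_n\right)$ via Theorem \ref{Theo:B1_extrema}, insert the subtree-height counts from Proposition \ref{Prop:gfb_number_subtrees_certain_height}, and cancel the root's contribution using the fact that the root is the unique inner vertex attaining the maximal height value ($h_n$ if $p_n=0$, $h_n+1$ otherwise). Your write-up merely spells out the cancellation of the top summand more explicitly than the paper does.
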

\begin{proof}
Observe that the root is the unique inner vertex with
\begin{enumerate}[(1)]
    \item height value $h_n$ if $p_n = 0$, and
    \item height value  $h_n +1$ if $p_n > 0$.
\end{enumerate}
The result then follows directly from Proposition \ref{Prop:gfb_number_subtrees_certain_height} and Theorem \ref{Theo:B1_extrema}.
\end{proof}

As noted in Remark \ref{Rem:B1_HM}, the $B_1$ index is induced by the third-order HM. In the next section, we introduce three new (im)balance indices, which are induced by the first-order and second-order HM, respectively.

\subsection{New imbalance indices}
\label{Subsec:new_imb_ind}

Based on the results of the HM, we observe that it induces an entire family of imbalance indices. In this section, we focus on three representatives of this family by introducing three new (im)balance indices. The first one can be thought of as a modification of the $B_1$ index which includes the root, whereas the other ones are based on functions that already occur in the literature in connection with other (im)balance indices and metaconcepts (cf. \citet{Fischer2025}).

\begin{Def} First, we introduce the \textit{balance index} obtained from the first-order HM with $\widehat{f}(h) = \frac{1}{h}$, which we call the $\widehat{B}_1$ index. Note that $\widehat{f} = \id^{-1}$ is also employed by the $B_1$ index, even if in a slightly different summation. Let $\widehat{B}_1: \T \rightarrow \R$ be the function, defined for $T \in \T$, by
\[\widehat{B}_1(T) = \sum\limits_{v \in \mathring{V}(T)} \frac{1}{h_T(v)} = \sum\limits_{h \in \mathcal{H}(T)} \frac{1}{h} = \Phi^{\mathcal{H}}_{\widehat{f}}(T).\]
Note that, unlike the $B_1$ index, the $\widehat{B}_1$ index includes the height value of the root, i.e., the height of the tree. In other words, \[\widehat{B}_1(T) = B_1(T) + \frac{1}{h(T)}.\]

Second, we define the \textit{imbalance index} obtained from the first-order HM with $\widetilde{f}(h) = h$, which we call the $\widetilde{B}_1$ index. Note that $\widetilde{f} = \id$ is also employed by the so-called Sackin \cite{Sackin1972,Fischer2021a} and Colless \cite{Colless1982,Coronado2020a} indices (\cite{Fischer2025}), albeit with different underlying sequences, namely the clade size and the balance value sequence, respectively. Let $\widetilde{B}_1: \T \rightarrow \R$ be the function defined, for $T \in \T$, by
\[\widetilde{B}_1(T) = \sum\limits_{v \in \mathring{V}(T)} h_T(v) = \sum\limits_{h \in \mathcal{H}(T)} h = \Phi^{\mathcal{H}}_{\widetilde{f}}(T).\]

Third, we introduce a normalized version of the $\widetilde{B}_1$ index, called the $\overline{B}_1$ index. This index is induced by the second-order HM with $\overline{f}(h,n) = \frac{1}{n} \cdot h$. Note that $\overline{f} = \frac{1}{n} \cdot \id$ is the function employed by the so-called average leaf depth \cite{Kirkpatrick1993,Shao1990} (using the clade size instead of the height value), which is also an imbalance index. Let $\overline{B}_1: \T \rightarrow \R$ be the function defined, for $T \in \T$, by
\[\overline{B}_1(T) = \frac{1}{n} \cdot \sum\limits_{v \in \mathring{V}(T)} h_T(v) = \frac{1}{n} \cdot \sum\limits_{h \in \mathcal{H}(T)} h = \Phi^{\mathcal{H}}_{\overline{f}}(T).\]
\end{Def}

We note that two tree shape statistics closely related to the $\widetilde{B}_1$ index have already appeared in the literature, albeit in a slightly different setting. In particular, in the context of $k$-ary trees (that is, rooted trees in which each inner vertex has at most $k$ children), \citet{Cha2012complete, Cha2012integer} introduced the following two quantities:
\[sumh(T) = \sum\limits_{v \in V(T)} h_v+1 \quad \text{ and } \quad sumh'(T) = \sum\limits_{v \in V(T)} h_v.\]
The statistic $\mathrm{sumh}'(T)$ coincides with $\widetilde{B}_1(T)$ when the latter is generalized to $k$-ary trees in the sense of~\cite{Cha2012complete,Cha2012integer}. The author derived several integer sequences corresponding to the values of $sumh$ and $sumh'$ for specific classes of $k$-ary trees  which the author considers as \enquote{balanced} (e.g., concerning height or clade size aspects). However, in \citet{Cha2012complete, Cha2012integer}, the author did not consider tree balance as a quantifiable measure. As discussed in Remark~\ref{Rem:oeis} below, one of the integer sequences identified in~\cite{Cha2012complete,Cha2012integer} also arises in our setting; we will explain this connection there.

\begin{Rem}
\label{Rem:new_indices}
    First, note that $\widehat{f}(h) = \frac{1}{h}$ is strictly decreasing. Consequently, a tree maximizes (resp. minimizes) the $\widehat{B}_1$ index if and only if it minimizes (resp. maximizes) the first-order HM with function $-\widehat{f}(h) = -\frac{1}{h}$. Hence, by Theorem \ref{Theo:H_imbalance_index}, $\widehat{B}_1$ is a balance index on $\BT$, but not on $\T$. The latter is true as -- although $-\widehat{f}(h)$ is strictly increasing -- it is not $1$-positive. Due to this relationship, all results concerning maximizing and minimizing trees of the HM with a strictly increasing function $f$ can be transferred to the $\widehat{B}_1$ index by switching the roles of minimization and maximization. Moreover, by Theorem \ref{Theo:Min_H_gfb_characterization}, $\widehat{B}_1$ is maximized by the gfb-tree, and all other maximizing trees share the same height sequence.

    Second, by definition, the $\widetilde{B}_1$ index is induced by the first-order HM with the strictly increasing and $1$-positive function $\widetilde{f}(h) = h$, while the $\overline{B}_1$ index is induced by the second-order HM. Moreover, the two indices $\widetilde{B}_1$ and $\overline{B}_1$ are equivalent (note, however, that this is only true if any pair of trees under investigation has the same number of leaves), since \[\overline{B}_1(T) = \frac{1}{n} \cdot \widetilde{B}_1(T)\] (for more details, see Remark \ref{Rem:H_foMeta_equiv_higherorderMeta}). Hence, by Theorem \ref{Theo:H_imbalance_index}, both $\widetilde{B}_1$ and $\overline{B}_1$ are imbalance indices on $\BT$ and on $\T$. Furthermore, due to their relationship with the HM, all results concerning maximizing and minimizing trees for strictly increasing and $1$-positive $f$ carry over directly to these indices. The most notable property is that the gfb-tree minimizes these indices on $\BT$, and all other minimizing trees on $\BT$ share the same height sequence (Theorem \ref{Theo:Min_H_gfb_characterization}).
\end{Rem}

\begin{Rem}
    As seen above, the indices $B_1$, $\widehat{B}_1$ and $\widetilde{B}_1$ are all induced by the HM and share the same extremal trees on $\BT$. However, they are not equivalent, as they can rank trees within $\BT$ differently.

    For example, consider the trees $T_1$ and $T_2$ in Figure \ref{Fig:B1_indices_rankings_T1_T2} for a comparison of the $B_1$ and $\widehat{B}_1$ indices (both of which are balance indices). Their height sequences are $\mathcal{H}(T_1) = (1,1,1,1,2,3,4,5,6,7)$ and $\mathcal{H}(T_2) = (1,1,1,2,2,2,3,3,4,5)$. Hence, we have
    \[B_1(T_1) = 4\cdot1 + \frac{1}{2} + \frac{1}{3} + \frac{1}{4} + \frac{1}{5} + \frac{1}{6} = 5.45 > 5.41\overline{6} = 3\cdot1 + 3\cdot\frac{1}{2} + 2\cdot\frac{1}{3} + \frac{1}{4} = B_1(T_2),\]
    but
    \[\widehat{B}_1(T_1) = B_1(T_1) + \frac{1}{7} \approx 5.59 < 5.62 \approx B_1(T_2) + \frac{1}{5} = \widehat{B}_1(T_2).\]
    Thus, $T_1$ is more balanced than $T_2$ according to the $B_1$ index, but less balanced according to the $\widehat{B}_1$ index.

    On the other hand, $\widetilde{B}_1$ ranks the trees $T_3$ and $T_4$ in Figure \ref{Fig:B1_indices_rankings_T3_T4} differently than $B_1$ and $\widehat{B}_1$. Their height sequences are $\mathcal{H}(T_3) = (1,1,1,2,3,4,5)$ and $\mathcal{H}(T_4) = (1,1,2,2,3,3,4)$. Hence, we have
    \[B_1(T_3) = 3\cdot1 + \frac{1}{2} + \frac{1}{3} + \frac{1}{4} = \frac{49}{12} > \frac{44}{12} = 2\cdot1 + 2\cdot\frac{1}{2} + 2\cdot\frac{1}{3} = B_1(T_4),\]
    as well as
    \[\widehat{B}_1(T_3) = B_1(T_3) + \frac{1}{5} = \frac{257}{60} > \frac{235}{60} = B_1(T_4) + \frac{1}{4} = \widehat{B}_1(T_4),\]
    but
    \[\widetilde{B}_1(T_3) = 3\cdot1+2+3+4+5 = 17 > 16 = 2\cdot1 + 2\cdot2 + 2 \cdot 3 + 4 = \widetilde{B}_1(T_4).\]
    Note that $B_1$ and $\widehat{B}_1$ are balance indices, whereas $\widetilde{B}_1$ is an imbalance index. Consequently, $T_3$ is more balanced than $T_4$ according to $B_1$ and $\widehat{B}_1$, but less balanced according to $\widetilde{B}_1$.
\end{Rem}

\begin{figure}
    \centering
    \includegraphics[scale=1.3]{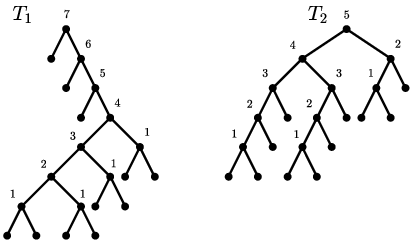}
    \caption{Two trees with $n = 11$ leaves that are ranked differently by the $B_1$ and $\widehat{B}_1$ indices. The inner vertices are labeled with their height value. We have $\mathcal{H}(T_1) = (1,1,1,1,2,3,4,5,6,7)$ and $\mathcal{H}(T_2) = (1,1,1,2,2,2,3,3,4,5)$.}
    \label{Fig:B1_indices_rankings_T1_T2}
\end{figure}
\begin{figure}
    \centering
    \includegraphics[scale=1.5]{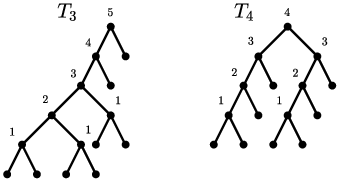}
    \caption{Two trees with $n = 8$ leaves that are ranked differently by the $B_1$ and $\widehat{B}_1$ indices compared to the $\widetilde{B}_1$ index. The inner vertices are labeled with their height value. We have $\mathcal{H}(T_3) = (1,1,1,2,3,4,5)$ and $\mathcal{H}(T_4) = (1,1,2,2,3,3,4)$.}
    \label{Fig:B1_indices_rankings_T3_T4}
\end{figure}

\paragraph{Recursiveness}
$\widehat{B}_1$, $\widetilde{B}_1$, and  $\overline{B}_1$ are recursive. Let $T \in \T$ be an arbitrary tree with standard decomposition $T = (T_1, \ldots, T_k)$, and let $n_i$ and $h_i$ denote the leaf number and height, respectively, of the maximal pending subtree $T_i$. Then, for $n=1$, we have \[\widehat{B}_1(T) = \widetilde{B}_1(T) = \overline{B}_1(T) = 0,\] and for $n \geq 2$, the indices can be calculated recursively as follows. Note that the recursion for $\widehat{B}_1$ holds for arbitrary trees, i.e., $k > 2$, but recall that $\widehat{B}_1$ is a balance index only in the binary case, i.e., if $k = 2$.
\begin{align*}
    &\widehat{B}_1(T) \stackrel{\text{Prop. \ref{Prop:H_recursive}}}{=} \sum\limits_{i=1}^{k} \widehat{B}_1(T_i) + \frac{1}{\max\left\{h_1, \ldots, h_k\right\} +1};\\
    &\widetilde{B}_1(T) \stackrel{\text{Prop. \ref{Prop:H_recursive}}}{=} \sum\limits_{i=1}^{k} \widetilde{B}_1(T_i) + \max\left\{h_1, \ldots, h_k\right\} +1;\\
    &\overline{B}_1(T) = \frac{1}{n} \cdot \left(\sum\limits_{i=1}^{k} n_i \cdot \overline{B}_1(T_i) + \max\left\{h_1, \ldots, h_k\right\} +1\right).
\end{align*}

\paragraph{Extremal values}
We can calculate the extremal values of all new (im)balance indices. Therefore, recall Remark \ref{Rem:new_indices}, i.e., their relationship to the HM. The values are listed in Tables \ref{Tab:new_indices_extr_val_1} and \ref{Tab:new_indices_extr_val_2}.

\begin{table}[htbp]
    \caption{Extremal values of the $\widehat{B}_1$, $\widetilde{B}_1$, and $\overline{B}_1$ indices where $n \geq 2$. Note that the values on the left follow from Corollary \ref{Cor:Meta_Max_val}, and the values on the right follow from Corollary \ref{Cor:Meta_Min_val_T}.}
    \label{Tab:new_indices_extr_val_1}
    \centering
    \renewcommand*{\arraystretch}{2.5}
    \begin{tabular}{c|c||c|c}
        $\min\limits_{T^b \in \BT} \widehat{B}_1\left(T^b\right)$ & $\sum\limits_{i = 1}^{n-1} \frac{1}{i}$ & $\min\limits_{T \in \T} \widehat{B}_1(T)$ & $1$\\
        \hline
        $\max\limits_{T \in \T} \widetilde{B}_1(T) = \max\limits_{T^b \in \BT} \widetilde{B}_1\left(T^b\right)$ & $\sum\limits_{i = 1}^{n-1} i = \frac{n(n-1)}{2}$ & $\min\limits_{T \in \T} \widetilde{B}_1(T)$ & $1$\\
        \hline
        $\max\limits_{T \in \T} \overline{B}_1(T) = \max\limits_{T^b \in \BT} \overline{B}_1\left(T^b\right)$ & $\frac{1}{n} \cdot \frac{n(n-1)}{2} = \frac{n-1}{2}$ & $\min\limits_{T \in \T} \overline{B}_1(T)$ & $\frac{1}{n}$
    \end{tabular}
\end{table}

\begin{table}[htbp]
    \caption{Extremal values of the $\widehat{B}_1$, $\widetilde{B}_1$, and $\overline{B}_1$ indices for  $n \geq 2$ with $n = 2^{h_n}+p_n$, where $h_n = \lceil \log_2(n) \rceil$ and $0 \leq p_n < 2^{h_n}$. These values follow directly from Corollary \ref{Cor:Meta_Min_val}. Moreover, in a manner analogous to the proof of Theorem \ref{Theo:B1_extrema} for the $B_1$ index, it can be shown that the maximum value of the $\widehat{B}_1$ index is the same when computed over the set of rooted binary trees $\BT$ and the set of all rooted trees $\T$.}
    \label{Tab:new_indices_extr_val_2}
    \centering
    \renewcommand*{\arraystretch}{2.5}
    \begin{tabular}{c||c|c}
        & $p_n = 0$ (i.e., $n = 2^{h_n}$) & $0 < p_n < 2^{h_n}$ (i.e., $2^{h_n} < n < 2^{h_n+1}$)\\
        \hline \hline
        $\max\limits_{T \in \BT} \widehat{B}_1(T) = \max\limits_{T \in \T} \widehat{B}_1(T)$ & $\sum\limits_{i = 1}^{h_n} 2^{h_n-i} \cdot \frac{1}{i}$
        & $\sum\limits_{i = 1}^{h_n+1} \left(2^{h_n+1-i} - \left\lceil \frac{2^{h_n}-p_n-2^{i-1}+1}{2^i} \right\rceil\right) \cdot \frac{1}{i}$\\
        \hline
        $\min\limits_{T \in \BT} \widetilde{B}_1(T)$ & $\sum\limits_{i = 1}^{h_n} 2^{h_n-i} \cdot i$ & $ \sum\limits_{i = 1}^{h_n+1} \left(2^{h_n+1-i} - \left\lceil \frac{2^{h_n}-p_n-2^{i-1}+1}{2^i} \right\rceil\right) \cdot i$\\
        \hline
        $\min\limits_{T \in \BT} \overline{B}_1(T)$ & $\frac{1}{n} \cdot \sum\limits_{i = 1}^{h_n} 2^{h_n-i} \cdot i$ & $\frac{1}{n} \cdot \sum\limits_{i = 1}^{h_n+1} \left(2^{h_n+1-i} - \left\lceil \frac{2^{h_n}-p_n-2^{i-1}+1}{2^i} \right\rceil\right) \cdot i$\\
    \end{tabular}
\end{table}

Having established explicit formulas for the maximum and minimum values of the three indices, we can now present a recursive formula for the minimum values of the $\widetilde{B}_1$ and $\overline{B}_1$ indices on $\BT$. In particular, we compute the $\widetilde{B}_1$ value of the gfb-tree recursively. This result allows us to connect the minimum values of the $\widetilde{B}_1$ index to \citet[Sequence A005187]{OEIS}. Note that the extremal values of the other indices cannot be linked to integer sequences, since they may take non-integer values.

\begin{Prop}
\label{Prop:gfb_widetildeB1_recursion}
    Let $n \geq 1$. Then $\widetilde{B}_1\left(T^{gfb}_1\right) = 0$, and for $n \geq 2$, we have \\ $\widetilde{B}_1\left(T^{gfb}_n\right) = \widetilde{B}_1\left(T^{gfb}_{\left\lfloor\frac{n+1}{2}\right\rfloor}\right) + n-1.$
    Moreover, we have $\overline{B}_1\left(T^{gfb}_n\right)= \frac{1}{n}\widetilde{B}_1\left(T^{gfb}_n\right)$.
\end{Prop}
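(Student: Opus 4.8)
The plan is to establish the recursion for $\widetilde{B}_1\bigl(T^{gfb}_n\bigr)$ directly, distinguishing the cases $n$ even and $n$ odd, and then to read off the statement for $\overline{B}_1$. The base case is immediate: for $n=1$ the tree $T^{gfb}_1$ has no inner vertices, so $\mathcal{H}\bigl(T^{gfb}_1\bigr)$ is empty and $\widetilde{B}_1\bigl(T^{gfb}_1\bigr)=0$. The identity $\overline{B}_1\bigl(T^{gfb}_n\bigr)=\frac1n\widetilde{B}_1\bigl(T^{gfb}_n\bigr)$ then needs no further argument, since by definition $\overline{B}_1(T)=\frac1n\sum_{h\in\mathcal{H}(T)}h=\frac1n\widetilde{B}_1(T)$ for every $T\in\T$ with $n$ leaves. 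So all of the content lies in the recursion for $\widetilde{B}_1\bigl(T^{gfb}_n\bigr)$.

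The first ingredient I would isolate is an auxiliary identity: if a binary tree $T'$ with $m$ leaves is turned into a tree $T''$ by attaching a cherry to each of its leaves, then $\widetilde{B}_1(T'')=\widetilde{B}_1(T')+2m-1$. This follows from the description of $\mathcal{H}(T'')$ already recorded inside the proof of Lemma~\ref{Lem:Min_n_2n}, namely that $\mathcal{H}(T'')$ consists of $m$ entries equal to $1$ together with every entry of $\mathcal{H}(T')$ increased by $1$; hence $\widetilde{B}_1(T'')=m\cdot1+\bigl(\widetilde{B}_1(T')+|\mathring{V}(T')|\bigr)=\widetilde{B}_1(T')+m+(m-1)$. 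Now let $n\ge2$ be even, so that $\bigl\lfloor\frac{n+1}{2}\bigr\rfloor=\frac n2$. By Remark~\ref{Rem:gfb_properties}, $T^{gfb}_n$ is obtained from $T^{gfb}_{n/2}$ by attaching a cherry to each of its $\frac n2$ leaves, so the auxiliary identity with $m=\frac n2$ gives $\widetilde{B}_1\bigl(T^{gfb}_n\bigr)=\widetilde{B}_1\bigl(T^{gfb}_{n/2}\bigr)+n-1$, which is the claimed recursion in the even case.

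For odd $n\ge3$ we have $\bigl\lfloor\frac{n+1}{2}\bigr\rfloor=\frac{n+1}{2}$, and I would route through the even leaf number $n+1$. The even case applied to $n+1$ gives $\widetilde{B}_1\bigl(T^{gfb}_{n+1}\bigr)=\widetilde{B}_1\bigl(T^{gfb}_{(n+1)/2}\bigr)+n$, so it suffices to show $\widetilde{B}_1\bigl(T^{gfb}_n\bigr)=\widetilde{B}_1\bigl(T^{gfb}_{n+1}\bigr)-1$. By Remark~\ref{Rem:gfb_properties}, $T^{gfb}_{n+1}$ is obtained from $T^{gfb}_n$ by attaching a cherry to the unique leaf $z$ of $T^{gfb}_n$ that is not part of a cherry, which turns $z$ into an inner vertex of height value $1$. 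I claim that $\mathcal{H}\bigl(T^{gfb}_{n+1}\bigr)$ arises from $\mathcal{H}\bigl(T^{gfb}_n\bigr)$ simply by inserting this single extra entry equal to $1$, which immediately yields $\widetilde{B}_1\bigl(T^{gfb}_{n+1}\bigr)=\widetilde{B}_1\bigl(T^{gfb}_n\bigr)+1$. To see the claim, note that every inner vertex of $T^{gfb}_n$ that is not an ancestor of $z$ keeps the same pending subtree, hence the same height value. For an inner vertex $a$ of $T^{gfb}_n$ that is an ancestor of $z$, let $q$ denote the other child of the parent of $z$; since $z$ is not part of a cherry, $q$ is an inner vertex, so (as $\delta_q=\delta_z$) the subtree $T_q$ contains a leaf of depth at least $\delta_z+1$ in $T^{gfb}_n$, and the same is then true of $T_a$ because $a$ is also an ancestor of $q$. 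As the two leaves created by attaching a cherry to $z$ have depth exactly $\delta_z+1$, the height of $T_a$, and thus the height value of $a$, does not change. This proves the recursion for odd $n$.

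The step I expect to be the main obstacle is this last one: checking that attaching the cherry to the non-cherry leaf of the odd gfb-tree does not inflate the height value of any ancestor. The key observation that makes it work is that the unique non-cherry leaf of an odd gfb-tree is forced to have an inner vertex as its sibling, which already provides a leaf of depth at least $\delta_z+1$ below every ancestor of $z$, so the newly attached cherry cannot raise any subtree height.
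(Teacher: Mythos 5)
Your proof is correct and takes essentially the same route as the paper's: the base case, the even case via attaching a cherry to every leaf of $T^{gfb}_{n/2}$ (Remark \ref{Rem:gfb_properties}), the odd case by passing through $n+1$, and the $\overline{B}_1$ identity read off from the definition. Your extra argument that attaching a cherry to the unique non-cherry leaf $z$ adds exactly one height value $1$ without raising any ancestor's height value (because the sibling of $z$ is an inner vertex) just spells out a step the paper leaves implicit.
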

\begin{proof}
    We start by proving the recursion for the $\widetilde{B}_1$ index.

    If $n = 1$, we have $\widetilde{B}_1\left(T^{gfb}_1\right) = 0$ as $T^{gfb}_1$ only contains of a single leaf, which turns $\widetilde{B}_1\left(T^{gfb}_1\right)$ into an empty sum. This establishes the base case of the recursion.

    Now we distinguish between the cases where $n$ is even or odd.

    \begin{enumerate}
        \item We first consider the case that $n$ is even. We need to show that
        \[\widetilde{B}_1\left(T^{gfb}_n\right) = \widetilde{B}_1\left(T^{gfb}_{\left\lfloor\frac{n+1}{2}\right\rfloor}\right) + n-1 = \widetilde{B}_1\left(T^{gfb}_{\frac{n}{2}}\right) + n-1.\]
        By Remark \ref{Rem:gfb_properties}, the gfb-tree with $n$ leaves can be obtained from the gfb-tree with $\frac{n}{2}$ leaves by attaching a cherry to each of its $\frac{n}{2}$ leaves. This increases each of the original $\frac{n}{2}-1$ height values by one. Moreover, each parent of the attached cherries has height value $1$. Hence we obtain
        \[\widetilde{B}_1\left(T^{gfb}_n\right) = \left(\widetilde{B}_1\left(T^{gfb}_{\frac{n}{2}}\right) + \frac{n}{2}-1\right) + \frac{n}{2} = \widetilde{B}_1\left(T^{gfb}_{\frac{n}{2}}\right) + n-1.\]
        This completes the proof for even $n$.

        \item Next, suppose that $n$ is odd. We  need to show that
        \[\widetilde{B}_1\left(T^{gfb}_n\right) = \widetilde{B}_1\left(T^{gfb}_{\left\lfloor\frac{n+1}{2}\right\rfloor}\right) + n-1 = \widetilde{B}_1\left(T^{gfb}_{\frac{n+1}{2}}\right) + n-1.\]
        By Remark \ref{Rem:gfb_properties}, the gfb-tree with $n+1$ leaves can be obtained from the gfb-tree with $n$ leaves by attaching a cherry to the unique leaf that is not part of a cherry. Hence, \[\widetilde{B}_1\left(T^{gfb}_n\right) = \widetilde{B}_1\left(T^{gfb}_{n+1}\right) -1.\]
        Applying the same argument as in the even case to $n+1$, we have \[\widetilde{B}_1\left(T^{gfb}_{n+1}\right) = \left(\widetilde{B}_1\left(T^{gfb}_{\frac{n+1}{2}}\right) + \frac{n+1}{2} - 1\right) + \frac{n+1}{2} = \widetilde{B}_1\left(T^{gfb}_{\frac{n+1}{2}}\right) + n.\]
        Thus,
        \[\widetilde{B}_1\left(T^{gfb}_n\right) = \widetilde{B}_1\left(T^{gfb}_{n+1}\right) -1 = \left(\widetilde{B}_1\left(T^{gfb}_{\frac{n+1}{2}}\right) + n\right) - 1 = \widetilde{B}_1\left(T^{gfb}_{\frac{n+1}{2}}\right) + n - 1.\]
        This completes the proof for odd $n$ and thus for the $\widetilde{B}_1$ index.
    \end{enumerate}

    The proof for the $\overline{B}_1$ index follows directly from Remark \ref{Rem:new_indices}. This completes the proof.
\end{proof}

This proposition immediately yields a recursive formula to calculate the minimum value of the $\widetilde{B}_1$ index on $\BT$.

\begin{Cor}
\label{Cor:widetildeB1_recursion}
    The minimum value of the $\widetilde{B}_1$ index on $\BT$ can be computed recursively. Let $\widetilde{b}_1(n) \coloneqq \min\limits_{T \in \BT}\widetilde{B}_1(T)$. Then $\widetilde{b}_1(1) = 0$, and for $n \geq 2$,
    \[\widetilde{b}_1(n) = \widetilde{b}_1\left(\left\lfloor\frac{n+1}{2}\right\rfloor\right) + n-1.\]
    Moreover, we have $\min\limits_{T \in \BT}\overline{B}_1(T) = \frac{1}{n}\widetilde{b}_1(n)$. 
\end{Cor}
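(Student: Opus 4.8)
The plan is to deduce the corollary directly from Proposition~\ref{Prop:gfb_widetildeB1_recursion} together with the already-established fact that the gfb-tree is a minimizer of the $\widetilde{B}_1$ index on $\BT$. First I would recall that, by definition, $\widetilde{B}_1 = \Phi^{\mathcal{H}}_{\widetilde{f}}$ with $\widetilde{f}(h) = h$, a strictly increasing (and $1$-positive) function. Hence Theorem~\ref{Theo:Min_H_gfb_characterization} (Part~1), via Remark~\ref{Rem:new_indices}, applies and shows that $T^{gfb}_n$ minimizes $\widetilde{B}_1$ on $\BT$; in particular $\widetilde{b}_1(n) = \widetilde{B}_1\bigl(T^{gfb}_n\bigr)$ for every $n \geq 1$. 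For $n = 1$ this value is an empty sum, so $\widetilde{b}_1(1) = 0$, matching the base case.

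Next I would substitute this identity into the recursion established in Proposition~\ref{Prop:gfb_widetildeB1_recursion}: for $n \geq 2$,
\[
\widetilde{b}_1(n) = \widetilde{B}_1\bigl(T^{gfb}_n\bigr) = \widetilde{B}_1\bigl(T^{gfb}_{\left\lfloor\frac{n+1}{2}\right\rfloor}\bigr) + n - 1 = \widetilde{b}_1\!\left(\left\lfloor\tfrac{n+1}{2}\right\rfloor\right) + n - 1,
\]
where the final equality again uses $\widetilde{b}_1(m) = \widetilde{B}_1\bigl(T^{gfb}_m\bigr)$, now with $m = \left\lfloor\frac{n+1}{2}\right\rfloor \geq 1$. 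This is the claimed recursion for $\widetilde{b}_1$.

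For the $\overline{B}_1$ index I would invoke the definitional identity $\overline{B}_1(T) = \tfrac{1}{n}\,\widetilde{B}_1(T)$ valid for every $T \in \BT$ (both statistics are summed over the same $n$-leaf tree, so the normalization is a common constant). Consequently $\min_{T \in \BT}\overline{B}_1(T) = \tfrac{1}{n}\min_{T \in \BT}\widetilde{B}_1(T) = \tfrac{1}{n}\,\widetilde{b}_1(n)$, which completes the argument.

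I do not anticipate any genuine obstacle: the statement is essentially a restatement of Proposition~\ref{Prop:gfb_widetildeB1_recursion} once one knows the gfb-tree is the minimizer. The only point requiring a moment's care is to confirm that one may replace $\widetilde{B}_1\bigl(T^{gfb}_m\bigr)$ by $\widetilde{b}_1(m)$ at the recursion's smaller argument $m = \left\lfloor\frac{n+1}{2}\right\rfloor$, which is immediate since the gfb-tree minimizes $\widetilde{B}_1$ for \emph{every} leaf number.
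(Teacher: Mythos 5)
Your proposal is correct and follows essentially the same route as the paper: the paper likewise invokes Remark \ref{Rem:new_indices} (resting on Theorem \ref{Theo:Min_H_gfb_characterization}) to identify $\widetilde{b}_1(n) = \widetilde{B}_1\bigl(T^{gfb}_n\bigr)$ and then reads the recursion off Proposition \ref{Prop:gfb_widetildeB1_recursion}, with the $\overline{B}_1$ claim following from the normalization $\overline{B}_1 = \tfrac{1}{n}\widetilde{B}_1$. Your version simply spells out the substitution at the smaller argument $\left\lfloor\frac{n+1}{2}\right\rfloor$ a bit more explicitly, which is fine.
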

\begin{proof}
    By Remark \ref{Rem:new_indices}, we have $\widetilde{b}_1(n) = \widetilde{B}_1\left(T^{gfb}_n\right)$, i.e., the gfb-tree always minimizes the $\widetilde{B}_1$ index on $\BT$. Now, the proof directly follows from Proposition \ref{Prop:gfb_widetildeB1_recursion}.
\end{proof}

We can use this result to link the minimum values of the $\widetilde{B}_1$ index to a sequence found in \citet{OEIS}.

\begin{Rem}
\label{Rem:oeis}
    The sequence of minimum values of the $\widetilde{B}_1$ index on $\BT$, namely, $0, 1, 3, 4, 7, 8, 10, 11, 15, 16, \ldots$, can be linked to \citet[Sequence A005187]{OEIS}. By Corollary \ref{Cor:widetildeB1_recursion}, we have $\widetilde{b}_1(n) = a(n-1)$ for all $n \geq 1$, where $\widetilde{b}_1(n) = \min\limits_{T \in \BT}\widetilde{B}_1(T)$ and $a(n)$ denotes the $n$-th entry of \citet[Sequence A005187]{OEIS}. We further remark that the sequence $a(n)$ has previously been related to sums of vertex height values in trees; see, for instance, \cite{Cha2012complete,Cha2012integer}. More precisely, for so-called \emph{complete $2$-ary trees} $T$ with $n$ vertices, the sum of vertex height values satisfies $sumh(T) = a(n)$. Note that, in our setting of rooted binary trees, complete trees correspond precisely to gfb-trees. The subtle difference between this earlier result and ours is that we restrict attention to trees in which every inner vertex has exactly two children, whereas \cite{Cha2012complete, Cha2012integer} allow each inner vertex to have at most two children. The reason for the shift of the sequences is thus twofold. First, the inner vertices of our trees with $n$ leaves correspond to the $2$-ary trees with $n-1$ vertices considered in \cite{Cha2012complete,Cha2012integer}. Second, the quantity $sumh$ sums height values shifted by one, that is, each vertex contributes its height value plus one. In particular, leaves of the tree contribute a value of $1$ rather than $0$. However, for such a complete $2$-ary tree $T$, the sequence $(sumh'(T))_{n}$ corresponds to \citet[Sequence A011371]{OEIS}, which repeats every term of Sequence A005187 twice, i.e., $(sumh'(T))_{n} = (0, 0, 1, 1, 3, 3, 4, 4, 7, 7, \ldots)$ for $n \geq 0$. In contrast to $sumh$, the quantity $sumh'$ sums the unshifted height values. Hence, leaves contribute a value of $0$, and the repetition in the sequence originates from the fact that adding a second leaf child to an inner vertex that already has one leaf child (thereby forming a cherry) does not change the height value of that vertex.
\end{Rem}

\section{Discussion}
\label{Sec:Discussion}

In this manuscript, we extended the set of metaconcepts for quantifying the balance of rooted trees, covering both binary and arbitrary rooted trees. Following \citet{Fischer2025}, we defined a tree shape sequence, the height sequence, and introduced a new class of metaconcepts, which we called the height metaconcept (HM). We showed that an imbalance index can be derived from each strictly increasing and $1$-positive function $f$ (Theorem \ref{Theo:H_imbalance_index}). Notably, these conditions are not very restrictive, as many existing imbalance indices employ such functions, albeit in the context of other input sequences. To name just a few examples: the well-known Sackin index \cite{Sackin1972,Fischer2021a,Fischer2025,Fischer2023} and the so-called $Q$-shape statistic \cite{Cleary2025,Fill1996} use $f = id$ and $f = \log$, respectively, but apply these functions to the clade size sequence rather than the height sequence. In Section \ref{Subsec:new_imb_ind} we have already introduced the imbalance index $\widetilde{B_1}$, which can be regarded as a Sackin-type height-based index, because it also uses the identity function. In a similar fashion, a $Q$-shape-type height-based index based on the logarithm can easily be derived. This shows our results on metaconcepts open the door to various new (im)balance indices with different properties. Investigating and comparing these further is an interesting direction for future research.

Furthermore, note that our studies show that it is sufficient for $f$ to be strictly increasing to obtain a \textit{binary} imbalance index from the HM (Theorem \ref{Theo:H_imbalance_index}). A similar result was established for the balance value metaconcept in \citet{Fischer2025}. In contrast, the two other metaconcept classes introduced by these authors yield valid imbalance indices only for a narrower set of functions $f$. We also showed that the set of minimizing trees of the HM differs from the respective sets of the earlier metaconcepts (compare Proposition \ref{Prop:H_echelon_Min} with the corresponding statements in \citet{Fischer2025}), hence enabling new imbalance indices with new and interesting properties.

We completely characterized the trees minimizing the HM for all increasing functions, as they all have the height sequence of the gfb-tree (Theorem \ref{Theo:Min_H_gfb_characterization}). Moreover, Proposition \ref{Prop:gfb_unique_Min} provides an exhaustive list of all leaf numbers for which the gfb-tree is the unique minimizer.

Furthermore, the $B_1$ index is induced by the third-order HM, as it does not take the root into account. Thus, deriving $B_1$ directly from the HM requires, in addition to the height values, knowledge of the number of inner vertices or leaves and the overall height of the tree. For this reason, most results from the analysis of the HM cannot be transferred to the $B_1$ index directly (Theorem \ref{Theo:B1_extrema}). Nevertheless, leveraging the close correspondence with the HM, we were able to resolve six open problems regarding the $B_1$ index posed by \citet{Fischer2023}. The first two problems concerned identifying the trees that maximize the $B_1$ index among arbitrary trees and among binary trees. We fully characterized these maximizing trees, showing that the gfb-tree always attains the maximum $B_1$ value (both on $\BT$ and $\T$), and that all other maximizing trees share the same height sequence as the gfb-tree. Using this characterization, we computed the maximum value of the $B_1$ index on $\BT$ and $\T$, and determined all leaf numbers for which a unique maximizing tree exists. In addition to the $B_1$ index, we introduced three new (im)balance indices induced by the HM and analyzed them exploiting their relationship to the HM.

As illustrated by the comparison of tree shape sequences, the underlying metaconcepts yield different outcomes depending on their interpretation of imbalance. Future research could further investigate the four metaconcept classes with respect to additional properties, such as resolution, i.e., the ability to assign a wide range of values to a set of trees in order to distinguish them. Notably, binary trees with identical height or leaf depth sequences can already occur with as few as five or six leaves, respectively, whereas at least nine leaves are required for two trees to share the same clade size sequence or balance value sequence (\citet{Fischer2025}). Even more strikingly, among the 23 binary trees with $n = 8$ leaves, only $4$ have a unique height sequence, while the remaining $19$ trees are distributed among just $4$ height sequences.

\section*{Acknowledgments}
The authors wish to thank Sophie Kersting for various discussions and helpful insights. They also thank the handling editor and reviewer for their helpful comments. Parts of this material are based upon work supported by the National Science Foundation under Grant No. DMS-1929284 while MF and KW were in residence at the Institute for Computational and Experimental Research in Mathematics in Providence, RI, during the Theory, Methods, and Applications of Quantitative Phylogenomics semester program.

\section*{Declarations}

\subsection*{Declaration of interests}
The authors declare that they have no known competing financial interests or personal relationships that could have appeared to influence the work reported in this paper.

\bibliographystyle{plainnat}
\bibliography{References}

\appendix
\section{Additional figures}
All inner vertices $v$ of the trees shown in Figures \ref{Fig:5_2_3}-\ref{Fig:11_194_199} are labeled either by their height value $h_v$, or by a tuple $(h_v,n_v,b_v)$ specifying their height value $h_v$, clade size $n_v$, and balance value $b_v$. When leaf labels are present, they indicate the corresponding leaf depths. As a remark, similar figures appear in \cite{Fischer2025}, but here we explicitly include the height sequences of the trees.

\begin{figure}[htbp]
    \centering
    \includegraphics[scale=2]{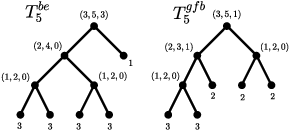}
    \caption{Unique minimal example of two binary trees with the same $\mathcal{H}$ but different $\mathcal{N}$, $\Delta$, and $\mathcal{B}$, respectively. Specifically, $n = 5$, $\mathcal{H}\left(T^{be}_5\right) = \mathcal{H}\left(T^{gfb}_5\right) = (1,1,2,3)$, $\mathcal{N}\left(T^{be}_5\right) = (2,2,4,5) \neq (2,2,3,5) = \mathcal{N}\left(T^{gfb}_5\right)$, $\Delta\left(T^{be}_5\right) = (1,3,3,3,3) \neq (2,2,2,3,3) = \Delta\left(T^{gfb}_5\right)$, and $\mathcal{B}\left(T^{be}_5\right) = (0,0,0,3) \neq (0,0,1,1) = \mathcal{B}\left(T^{gfb}_5\right)$.}
    \label{Fig:5_2_3}
\end{figure}

\begin{figure}[htbp]
    \centering
    \includegraphics[scale=2]{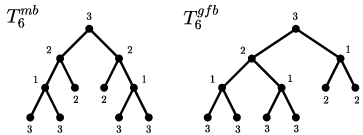}
    \caption{Unique minimal example of two binary trees with different $\mathcal{H}$ but the same $\Delta$. Here, $n = 6$, $\mathcal{H}\left(T^{mb}_6\right) = (1,1,2,2,3) \neq (1,1,1,2,3) = \mathcal{H}\left(T^{gfb}_6\right)$, and $\Delta\left(T^{mb}_6\right) = \Delta\left(T^{gfb}_6\right) = (2,2,3,3,3,3)$.}
    \label{Fig:6_5_6}
\end{figure}

\begin{figure}[htbp]
    \centering
    \includegraphics[scale=2]{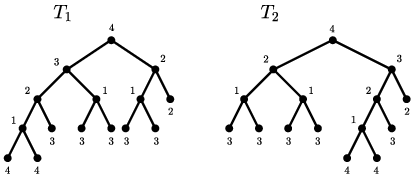}
    \caption{One of two minimal examples of two binary trees that share the same $\mathcal{H}$ and the same $\Delta$ simultaneously. Specifically, $n = 8$, $\mathcal{H}(T_1) = \mathcal{H}(T_2) = (1,1,1,2,2,3,4)$, and $\Delta(T_1) = (2,3,3,3,3,3,4,4) = \Delta(T_2)$.}
    \label{Fig:8_20_22}
\end{figure}

\begin{figure}[htbp]
    \centering
    \includegraphics[width=\textwidth]{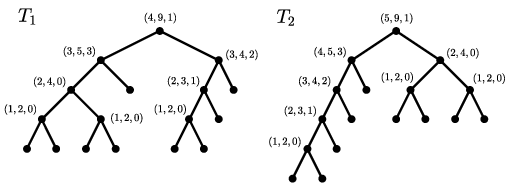}
    \caption{Unique minimal example of two binary trees having (1) different $\mathcal{H}$ but identical $\mathcal{N}$ and (2) different $\mathcal{H}$ but identical $\mathcal{B}$. Here, $n = 9$, $\mathcal{H}(T_1) = (1,1,1,2,2,3,3,4) \neq (1,1,1,2,2,3,4,5) = \mathcal{H}(T_2)$, $\mathcal{N}(T_1) = \mathcal{N}(T_2) = (2,2,2,3,4,4,5,9)$, and $\mathcal{B}(T_1) = \mathcal{B}(T_2) = (0,0,0,0,1,1,2,3)$.}
    \label{Fig:9_42_44}
\end{figure}

\begin{figure}[htbp]
    \centering
    \includegraphics[width=\textwidth]{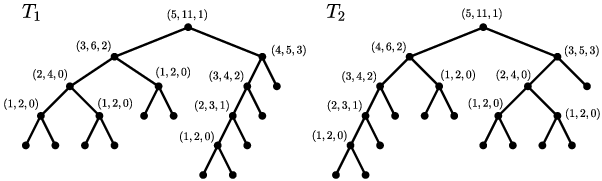}
    \caption{One of two minimal examples for two binary trees with the same $\mathcal{H}$, the same $\mathcal{N}$, and the same $\mathcal{B}$. Specifically, $n = 11$, $\mathcal{H}(T_1) = \mathcal{H}(T_2) = (1,1,1,1,2,2,3,3,4,5)$, $\mathcal{N}(T_1) = \mathcal{N}(T_2) = (2,2,2,2,3,4,4,5,6,11)$, and $\mathcal{B}(T_1) = \mathcal{B}(T_2) = (0,0,0,0,0,1,1,2,2,3)$.}
    \label{Fig:11_194_199}
\end{figure}

\end{document}